\def\ARXIV{}
  \def\EXTENDED{}
\newcommand\invisiblesection[1]{%
  \refstepcounter{section}%
  \sectionmark{#1}}
\newcommand{\extendedsection}[1]{
  \ifdefined\EXTENDED
    \section{#1}
  \else
    \invisiblesection{#1}
  \fi
}
  {%
    \ifdefined\EXTENDED
    \else
      \setbox0\vbox\bgroup
    \fi
  }{%
    \ifdefined\EXTENDED
    \else
      \egroup
    \fi
  }%
\newcommand*{\extendedref}[1]{%
  {\ifdefined\EXTENDED\ref{#1}\else\protect\NoHyper\ref{#1}\protect\endNoHyper\fi}%
}
\newcommand{\naive}{naïve\xspace}
\newcommand{\naively}{naïvely\xspace}
\newcommand{\Naive}{Naïve\xspace}
\newcommand{\defas}[0]{\mathrel{:=}} 
\newcommand{\N}{\ensuremath{\mathbb{N}}}
\newcommand{\B}{\ensuremath{\mathbb{B}}}
\newcommand{\Iff}{\Leftrightarrow}
\newcommand{\Implies}{\Rightarrow}
\newcommand{\BCalc}{\ensuremath{\lambda_{\textrm{\normalfont b}}\xspace}}
\newcommand{\BCalcS}{\ensuremath{\lambda_{\textrm{\normalfont s}}\xspace}}
\newcommand{\BCalcE}{\ensuremath{\lambda_{\textrm{\normalfont e}}\xspace}}
\newcommand{\BCalcSE}{\ensuremath{\lambda_{\textrm{\normalfont se}}\xspace}}
\newcommand{\HCalc}{\ensuremath{\lambda_{\textrm{\normalfont h}}\xspace}}
\newcommand{\EC}{\ensuremath{\mathcal{E}}}
\newcommand{\HC}{\ensuremath{\mathcal{H}}}
\newcommand{\slab}[1]{\textrm{#1}}
\newcommand{\semlab}[1]{\text{\scshape{S-#1}}}
\newcommand{\tylab}[1]{\text{\scshape{T-#1}}}
\newcommand{\mlab}[1]{\text{\scshape{M-#1}}}
\newcommand{\revto}{\ensuremath{\leftarrow}}
\newcommand{\dec}[1]{\mathsf{#1}}
\newcommand{\keyw}[1]{\mathbf{#1}}
\newcommand{\Handle}{\keyw{handle}}
\newcommand{\With}{\keyw{with}}
\newcommand{\Let}{\keyw{let}}
\newcommand{\Rec}{\keyw{rec}}
\newcommand{\In}{\keyw{in}}
\newcommand{\Do}{\keyw{do}}
\newcommand{\Return}{\keyw{return}}
\newcommand{\Val}{\keyw{val}}
\newcommand{\Case}{\keyw{case}}
\newcommand{\If}{\keyw{if}}
\newcommand{\Then}{\keyw{then}}
\newcommand{\Else}{\keyw{else}}
\newcommand{\Record}[1]{\ensuremath{\langle #1 \rangle}}
\newcommand{\Unit}{\Record{}}
\newcommand{\Inl}{\keyw{inl}}
\newcommand{\Inr}{\keyw{inr}}
\newcommand{\Superpoint}{\lambda\_.\Do\;\Branch~\Unit}
\newcommand{\One}{\mathsf{Unit}}
\newcommand{\Bool}{\mathsf{Bool}}
\newcommand{\List}{\mathsf{List}}
\newcommand{\Nat}{\mathsf{Nat}}
\newcommand{\Count}{\dec{count}}
\newcommand{\ECount}{\dec{effcount}}
\newcommand{\Predicate}{\dec{Predicate}}
\newcommand{\Point}{\dec{Point}}
\newcommand{\Branch}{\dec{Branch}}
\newcommand{\Countprog}{K}
\newcommand{\True}{\mathsf{true}}
\newcommand{\False}{\mathsf{false}}
\newcommand{\cekl}{\langle}
\newcommand{\cekr}{\rangle}
\newcommand{\cek}[1]{\ensuremath{\langle #1 \rangle}}
\newcommand{\Catname}[1]{\ensuremath{\mathrm{#1}}}
\newcommand{\TypeCat}{\Catname{Type}}
\newcommand{\CtxCat}{\Catname{Ctx}}
\newcommand{\CompCat}{\Catname{Comp}}
\newcommand{\ValCat}{\Catname{Val}}
\newcommand{\PureCont}{\mathsf{PureCont}}
\newcommand{\Cont}{\mathsf{Cont}}
\newcommand{\Addr}{\mathsf{Addr}}
\newcommand{\Lab}{\mathsf{Lab}}
\newcommand{\Env}{\mathsf{Env}}
\newcommand{\Ref}{\dec{Ref}}
\newcommand{\hret}{H^{\mathrm{val}}}
\newcommand{\hell}{H^{\ell}}
\newcommand{\conf}{\mathcal{C}}
\newcommand{\typ}[2]{#1 \vdash #2}
\newcommand{\typv}[2]{#1 \vdash #2}
\newcommand{\nil}{\ensuremath{[]}}
\newcommand{\cons}{\ensuremath{::}}
\newcommand{\concat}{\mathbin{+\!\!+}}
\newcommand{\snoc}[2]{\ensuremath{#1 \concat [#2]}}
\newcommand{\env}{\gamma}
\newcommand{\reducesto}[0]{\ensuremath{\leadsto}}
\newcommand{\stepsto}[0]{\ensuremath{\longrightarrow}}
\newcommand{\Stepsto}{\Longrightarrow}
\newcommand{\ba}{\begin{array}}
\newcommand{\ea}{\end{array}}
\newcommand{\bl}{\ba[t]{@{}l@{}}}
\newcommand{\el}{\ea}
\newenvironment{syntax}{\begin{displaymath}\ba{@{}l@{\quad}r@{~}c@{~}l@{}}}{\ea\end{displaymath}\ignorespacesafterend}
\newenvironment{reductions}{\begin{displaymath}\ba{@{}l@{\quad}@{}r@{~}c@{~}l@{}}}{\ea\end{displaymath}\ignorespacesafterend}
\newenvironment{equations}{\begin{displaymath}\ba{@{}r@{~}c@{~}l@{}}}{\ea\end{displaymath}\ignorespacesafterend}
\newenvironment{eqs}{\ba[t]{@{}r@{~}c@{~}l@{}}}{\ea}
\newenvironment{twoeqs}{\ba[t]{@{}r@{~}c@{~}l@{~}c@{~}r@{~}c@{~}l@{}}}{\ea}
\newenvironment{derivation}{\begin{displaymath}\ba{@{}r@{~}l@{}}}{\ea\end{displaymath}\ignorespacesafterend}
\newcommand{\reason}[1]{\quad (\text{#1})}
\newcommand{\pto}{\rightharpoonup}
\newcommand{\const}[1]{\ulcorner #1 \urcorner}
\newcommand{\val}[2]{\llbracket #1 \rrbracket #2}
\newcommand{\inv}[1]{\llparenthesis #1 \rrparenthesis}
\newcommand{\res}{\backslash}
\newcommand{\BigO}{\ensuremath{\mathcal{O}}}
\newcommand{\query}{\mathord{?}}
\newcommand{\ans}{\mathord{!}}
\newcommand{\labs}{\mathsf{labs}}
\newcommand{\steps}{\mathsf{steps}}
\newcommand{\tree}{\tau}
\newcommand{\tl}{\labs(\tree)}
\newcommand{\ts}{\steps(\tree)}
\newcommand{\tr}{\mathcal{T}}
\newcommand{\tru}{\mathcal{U}}
\tikzset{
  port/.style = {treenode, font=\Huge, draw=white, minimum width=0.5em, minimum height=0.5em},
  blackbox/.style = {rectangle, fill=black, draw=black, minimum width=2cm, minimum height=2cm},
  treenode/.style = {align=center, inner sep=3pt, text centered},
  opnode/.style = {treenode, rectangle, draw=black},
  leaf/.style = {treenode, draw=black, ellipse, thin},
  comptree/.style = {treenode, draw=black, regular polygon, regular polygon sides=3},
  highlight/.style = {draw=red,very thick},
  pencildraw/.style={
    black!75,
    decorate,
    decoration={random steps,segment length=0.8pt,amplitude=0.1pt}
  },
  hbox/.style = {rectangle,draw=none, minimum width=6cm, minimum height=1cm},
  gbox/.style = {rectangle,draw=none,minimum width=2cm,minimum height=1cm},
  itria/.style={
    draw,dashed,shape border uses incircle,
    isosceles triangle,shape border rotate=90}
}
\newcommand{\tossTree}{
  \begin{tikzpicture}[->,>=stealth',level/.style={sibling distance = 2.5cm/##1,
      level distance = 1.0cm}]
\node (root) [opnode] {$\dec{Branch}$}
  child { node [leaf] {$\dec{Heads}$}
    edge from parent node[above left] {$\True$}
  }
  child { node [leaf] {$\dec{Tails}$}
    edge from parent node[above right] {$\False$}
  }
;
\end{tikzpicture}}
\begin{document}

\title{Effects for Efficiency}
\subtitle{Asymptotic Speedup with First-Class Control}

\author{Daniel Hillerström}
\affiliation{
  \institution{The University of Edinburgh}            
  \country{UK}
}
\email{daniel.hillerstrom@ed.ac.uk}          

\author{Sam Lindley}
\affiliation{
  \institution{The University of Edinburgh and Imperial College London and Heriot-Watt University}   \country{UK}                   
}
\email{sam.lindley@ed.ac.uk}         

\author{John Longley}
\affiliation{
  \institution{The University of Edinburgh}           
  \country{UK}                   
}
\email{jrl@staffmail.ed.ac.uk}         

\begin{abstract}
  We study the fundamental efficiency of delimited
  control. Specifically, we show that effect handlers enable an
  asymptotic improvement in runtime complexity for a certain class of
  functions. We consider the \emph{generic count} problem using a pure
  PCF-like base language $\BCalc$ and its extension with effect
  handlers $\HCalc$.
  We show that $\HCalc$ admits an asymptotically more efficient
  implementation of generic count than any $\BCalc$ implementation.
  We also show that this efficiency gap remains when $\BCalc$ is
  extended with mutable state.

  To our knowledge this result is the first of its kind for control
  operators.
\end{abstract}

\ifx\ARXIV\undefined
\begin{CCSXML}
  <ccs2012>
  <concept>
  <concept_id>10003752.10003753.10003754.10003733</concept_id>
  <concept_desc>Theory of computation~Lambda calculus</concept_desc>
  <concept_significance>500</concept_significance>
  </concept>
  <concept>
  <concept_id>10003752.10003753.10010622</concept_id>
  <concept_desc>Theory of computation~Abstract machines</concept_desc>
  <concept_significance>500</concept_significance>
  </concept>
  <concept>
  <concept_id>10003752.10010124.10010125.10010126</concept_id>
  <concept_desc>Theory of computation~Control primitives</concept_desc>
  <concept_significance>500</concept_significance>
  </concept>
  </ccs2012>
\end{CCSXML}

\ccsdesc[500]{Theory of computation~Lambda calculus}
\ccsdesc[500]{Theory of computation~Abstract machines}
\ccsdesc[500]{Theory of computation~Control primitives}

\keywords{effect handlers, asymptotic complexity analysis, generic search}
\fi

\maketitle

\section{Introduction}\label{sec:introduction}
\def\LLL{{\mathcal L}}
\def\N{{\mathbb N}}
In today's programming languages we find a wealth of powerful
constructs and features --- exceptions, higher-order store, dynamic
method dispatch, coroutines, explicit continuations, concurrency
features, Lisp-style `quote' and so on --- which may be present or
absent in various combinations in any given language.  There are of
course many important pragmatic and stylistic differences between
languages, but here we are concerned with whether languages may differ
more essentially in their expressive power, according to the selection
of features they contain.

One can interpret this question in various ways.  For instance,
\citet{Felleisen91} considers the question of whether a language
$\LLL$ admits a translation into a sublanguage $\LLL'$ in a way which
respects not only the behaviour of programs but also aspects of their
(global or local) syntactic structure. If the translation of some
$\LLL$-program into $\LLL'$ requires a complete global restructuring,
we may say that $\LLL'$ is in some way less expressive than $\LLL$.
In the present paper, however, we have in mind even more fundamental
expressivity differences that would not be bridged even if
whole-program translations were admitted. These fall under two
headings.
\begin{enumerate}
\item \emph{Computability}: Are there operations of a given type
  that are programmable in $\LLL$ but not expressible at all in $\LLL'$?
\item \emph{Complexity}: Are there operations programmable in $\LLL$
  with some asymptotic runtime bound (e.g.\ `$\BigO(n^2)$') that cannot be
  achieved in $\LLL'$?
\end{enumerate}
We may also ask: are there examples of \emph{natural, practically
  useful} operations that manifest such differences?  If so, this
might be considered as a significant advantage of $\LLL$ over $\LLL'$.

If the `operations' we are asking about are ordinary first-order
functions --- that is, both their inputs and outputs are of ground
type (strings, arbitrary-size integers etc.)\ --- then the situation
is easily summarised.  At such types, all reasonable languages give
rise to the same class of programmable functions, namely the
Church-Turing computable ones.  As for complexity, the runtime of a
program is typically analysed with respect to some cost model for
basic instructions (e.g.\ one unit of time per array access).
Although the realism of such cost models in the asymptotic limit can
be questioned (see, e.g., \citep[Section~2.6]{Knuth97}), it is broadly
taken as read that such models are equally applicable whatever
programming language we are working with, and moreover that all
respectable languages can represent all algorithms of interest; thus,
one does not expect the best achievable asymptotic run-time for a
typical algorithm (say in number theory or graph theory) to be
sensitive to the choice of programming language, except perhaps in
marginal cases.

The situation changes radically, however, if we consider
\emph{higher-order} operations: programmable operations whose inputs
may themselves be programmable operations.  Here it turns out that
both what is computable and the efficiency with which it can be
computed can be highly sensitive to the selection of language features
present. This is in fact true more widely for \emph{abstract data
  types}, of which higher-order types can be seen as a special case: a
higher-order value will be represented within the machine as ground
data, but a program within the language typically has no access to
this internal representation, and can interact with the value only by
applying it to an argument.

Most work in this area to date has focused on computability
differences. One of the best known examples is the \emph{parallel if}
operation which is computable in a language with parallel evaluation
but not in a typical `sequential' programming language
\citep{Plotkin77}. It is also well known that the presence of control
features or local state enables observational distinctions that cannot
be made in a purely functional setting: for instance, there are
programs involving `call/cc' that detect the order in which a
(call-by-name) `+' operation evaluates its arguments
\citep{CartwrightF92}. Such operations are `non-functional' in the
sense that their output is not determined solely by the extension of
their input (seen as a mathematical function
$\N_\bot \times \N_\bot \rightarrow \N_\bot$);
however, there are also programs with `functional' behaviour that can
be implemented with control or local state but not without them
\citep{Longley99}.  More recent results have exhibited differences
lower down in the language expressivity spectrum: for instance, in a
purely functional setting \textit{\`a la} Haskell, the expressive
power of \emph{recursion} increases strictly with its type level
\citep{Longley18a}, and there are natural operations computable by
low-order recursion but not by high-order iteration
\citep{Longley19}. Much of this territory, including the mathematical
theory of some of the natural notions of higher-order computability
that arise in this way, is mapped out by \citet{LongleyN15}.

Relatively few results of this character have so far been established
on the complexity side. \citet{Pippenger96} gives an example of an
`online' operation on infinite sequences of atomic symbols
(essentially a function from streams to streams) such that the first
$n$ output symbols can be produced within time $\BigO(n)$ if one is
working in an `impure' version of Lisp (in which mutation of `cons'
pairs is admitted), but with a worst-case runtime no better than
$\Omega(n \log n)$ for any implementation in pure Lisp (without such
mutation). This example was reconsidered by \citet{BirdJdM97} who
showed that the same speedup can be achieved in a pure language by
using lazy evaluation.  Another candidate is the familiar $\log n$
overhead involved in implementing maps (supporting lookup and
extension) in a pure functional language \cite{Okasaki99}, although to
our knowledge this situation has not yet been subjected to theoretical
scrutiny.  \citet{Jones01} explores the approach of manifesting
expressivity and efficiency differences between certain languages by
artificially restricting attention to `cons-free' programs; in this
setting, the classes of representable first-order functions for the
various languages are found to coincide with some well-known
complexity classes.

The purpose of the present paper is to give a clear example of such an
inherent complexity difference higher up in the expressivity spectrum.
Specifically, we consider the following \emph{generic count} problem,
parametric in $n$: given a boolean-valued predicate $P$ on the space
${\mathbb B}^n$ of boolean vectors of length $n$, return the number of
such vectors $q$ for which $P\,q = \True$.  We shall consider boolean
vectors of any length to be represented by the type $\Nat \to \Bool$;
thus for each $n$, we are asking for an implementation of a certain
third-order operation
\[ \Count_n : ((\Nat \to \Bool) \to \Bool) \to \Nat  \]
A \naive implementation strategy, supported by any reasonable
language, is simply to apply $P$ to each of the $2^n$ vectors in turn.
A much less obvious, but still purely `functional', approach due to
\citet{Berger90} achieves the effect of `pruned search' where the
predicate allows it (serving as a warning that counter-intuitive
phenomena can arise in this territory).  Nonetheless, under a mild
condition on $P$ (namely that it must inspect all $n$ components of
the given vector before returning), both these approaches will have a
$\Omega(n 2^n)$ runtime.  Moreover, we shall show that in a typical
call-by-value language without advanced control features, one cannot
improve on this: \emph{any} implementation of $\Count_n$ must
necessarily take time $\Omega(n2^n)$ on \emph{any} predicate $P$.  On
the other hand, if we extend our language with a feature such as
\emph{effect handlers} (see Section~\ref{sec:handlers-primer} below),
it becomes possible to bring the runtime down to $\BigO(2^n)$: an
asymptotic gain of a factor of $n$.

The \emph{generic search} problem is just like the generic count
problem, except rather than counting the vectors $q$ such that $P\,q =
\True$, it returns the list of all such vectors.
The $\Omega(n 2^n)$ runtime for purely functional implementations
transfers directly to generic search, as generic count reduces to
generic search composed with computing the length of the resulting
list.
In Section~\ref{sec:count-vs-search} we illustrate that the
$\BigO(2^n)$ runtime for generic count with effect handlers also
transfers to generic search.

The idea behind the speedup is easily explained and will already be
familiar, at least informally, to programmers who have worked with
multi-shot continuations.
Suppose for example $n=3$, and suppose that the predicate $P$ always
inspects the components of its argument in the order $0,1,2$.
A \naive implementation of $\Count_3$ might start by applying the given
$P$ to $q_0 = (\True,\True,\True)$, and then to
$q_1 = (\True,\True,\False)$.  Clearly there is some duplication here:
the computations of $P\,q_0$ and $P\,q_1$ will proceed identically up
to the point where the value of the final component is requested. What
we would like to do, then, is to record the state of the computation
of $P\,q_0$ at just this point, so that we can later resume this
computation with $\False$ supplied as the final component value in
order to obtain the value of $P\,q_1$. (Similarly for all other
internal nodes in the evident binary tree of boolean vectors.) Of
course, this `backup' approach would be standardly applied if one were
implementing a bespoke search operation for some \emph{particular}
choice of $P$ (corresponding, say, to the $n$-queens problem); but to
apply this idea of resuming previous subcomputations in the
\emph{generic} setting (that is, uniformly in $P$) requires some
special language feature such as effect handlers or multi-shot
continuations.
One could also obviate the need for such a feature by choosing to
present the predicate $P$ in some other way, but from our present
perspective this would be to move the goalposts: our intention is
precisely to show that our languages differ in an essential way
\emph{as regards their power to manipulate data of type} $(\Nat \to
\Bool) \to \Bool$.

This idea of using first-class control to achieve `backtracking' has
been exploited before and is fairly widely known (see
e.g. \citep{KiselyovSFA05}), and there is a clear programming
intuition that this yields a speedup unattainable in languages without
such control features.  Our main contribution in this paper is to
provide, for the first time, a precise mathematical theorem that pins
down this fundamental efficiency difference, thus giving formal
substance to this intuition.  Since our goal is to give a realistic
analysis of the efficiency achievable in various settings without
getting bogged down in inessential implementation details, we shall
work concretely and operationally with the languages in question,
using a CEK-style abstract machine semantics as our basic model of
execution time, and with some specific programs in these languages.
In the first instance, we formulate our results as a comparison
between a purely functional base language (a version of call-by-value
PCF) and an extension with first-class control; we then indicate how
these results can be extended to base languages with other features
such as mutable state.

In summary, our purpose is to exhibit an efficiency gap which, in our
view, manifests a fundamental feature of the programming language
landscape, challenging a common assumption that all real-world
programming languages are essentially `equivalent' from an asymptotic
point of view.  We believe that such results are important not only
for a rounded understanding of the relative merits of existing
languages, but also for informing future language design.

For their convenience as structured delimited control operators we
adopt effect handlers as our universal control abstraction of choice,
but our results adapt mutatis mutandis to other first-class control
abstractions such as `call/cc'~\citep{SperberDFvSFM09}, `control'
($\mathcal{F}$) and 'prompt' ($\textbf{\#}$)~\citep{Felleisen88}, or
`shift' and `reset'~\citep{DanvyF90}.

The rest of the paper is structured as follows.
\begin{itemize}
 \item Section~\ref{sec:handlers-primer} provides an introduction to
   effect handlers as a programming abstraction.
 \item Section~\ref{sec:calculi} presents a PCF-like language
   $\BCalc$ and its extension $\HCalc$ with effect handlers.
 \item Section~\ref{sec:abstract-machine-semantics} defines abstract
   machines for $\BCalc$ and $\HCalc$, yielding a runtime cost model.
 \item Section~\ref{sec:generic-search} introduces generic count and
   some associated machinery, and presents an implementation in
   $\HCalc$ with runtime $\BigO(2^n)$.
 \item Section~\ref{sec:pure-counting} establishes that any generic
   count implementation in $\BCalc$ must have runtime $\Omega(n2^n)$.
 \item Section~\ref{sec:robustness} shows that our results scale to
   richer settings including support for a wider class of predicates,
   the adaptation from generic count to generic search, and an
   extension of the base language with state.
 \item Section~\ref{sec:experiments} evaluates implementations of
   generic search based on $\BCalc$ and $\HCalc$ in Standard ML.
 \item Section \ref{sec:conclusions} concludes.
\end{itemize}
The languages $\BCalc$ and $\HCalc$ are rather minimal versions of
previously studied systems --- we only include the machinery needed
for illustrating the generic search efficiency phenomenon.
Auxiliary results are included in the appendices of the extended
version of the paper~\citep{hillerstromLJ20}.

\section{Effect Handlers Primer}
\label{sec:handlers-primer}
Effect handlers were originally studied as a theoretical means to
provide a semantics for exception handling in the setting of algebraic
effects~\cite{PlotkinP01, PlotkinP13}.
Subsequently they have emerged as a practical programming abstraction
for modular effectful programming~\citep{BauerP15, ConventLMM20,
  KammarLO13, KiselyovSS13, DolanWSYM15, Leijen17, HillerstromLA20}.
In this section we give a short introduction to effect handlers.  For
a thorough introduction to programming with effect handlers, we
recommend the tutorial by \citet{Pretnar15}, and as an introduction to
the mathematical foundations of handlers, we refer the reader to the
founding paper by \citet{PlotkinP13} and the excellent tutorial paper
by \citet{Bauer18}.

Viewed through the lens of universal algebra, an algebraic effect is
given by a signature $\Sigma$ of typed \emph{operation symbols} along
with an equational theory that describes the properties of the
operations~\cite{PlotkinP01}.
An example of an algebraic effect is \emph{nondeterminism}, whose
signature consists of a single nondeterministic choice operation:
$\Sigma \defas \{ \Branch : \One \to \Bool \}$.
The operation takes a single parameter of type unit and ultimately
produces a boolean value.
The pragmatic programmatic view of algebraic effects differs from the
original development as no implementation accounts for equations over
operations yet.

As a simple example, let us use the operation $\Branch$ to model a
coin toss.
Suppose we have a data type $\dec{Toss} \defas \dec{Heads} \mid
\dec{Tails}$, then
we may implement a coin toss as follows.
{\small
\[
  \bl
    \dec{toss} : \One \to \dec{Toss}\\
    \dec{toss}~\Unit =
           \If \; \Do\; \Branch\; \Unit \;
           \Then\; \dec{Heads} \;
           \Else\; \dec{Tails}
  \el
\]}%
From the type signature it is clear that the computation returns a
value of type $\dec{Toss}$. It is not clear from the signature of
$\dec{toss}$ whether it performs an effect. However, from the
definition, it evidently performs the operation $\Branch$ with
argument $\Unit$ using the $\Do$-invocation form. The result of the
operation determines whether the computation returns either
$\dec{Heads}$ or $\dec{Tails}$.
Systems such as Frank~\cite{LindleyMM17, ConventLMM20},
Helium~\cite{BiernackiPPS19, BiernackiPPS20}, Koka~\cite{Leijen17},
and Links~\cite{HillerstromL16, HillerstromLA20} include
type-and-effect systems which track the use of effectful operations,
whilst current iterations of systems such as Eff~\cite{BauerP15} and
Multicore OCaml~\cite{DolanWSYM15} elect not to track effects in the
type system.
Our language is closer to the latter two.

We may view an effectful computation as a tree, where the interior
nodes correspond to operation invocations and the leaves correspond to
return values.
The computation tree for $\dec{toss}$ is as follows.
\begin{center}
  {\small
  \tossTree}%
\end{center}
It models interaction with the environment. The operation $\Branch$
can be viewed as a \emph{query} for which the \emph{response} is
either $\True$ or $\False$. The response is provided by an effect
handler. As an example, consider the following handler which enumerates
the possible outcomes of a coin toss.
{\small
\[
  \bl
    \Handle\; \dec{toss}~\Unit\;\With\\
      \quad\ba[t]{@{~}l@{~}c@{~}l}
           \Val~x &\mapsto& [x]\\
           \Branch~\Unit~r &\mapsto& r~\True \concat r~\False
           \ea
  \el
\]}%
The $\Handle$-construct generalises the exceptional syntax
of~\citet{BentonK01}.
This handler has a \emph{success} clause and an \emph{operation}
clauses.
The success clause determines how to interpret the return value of
$\dec{toss}$, or equivalently how to interpret the leaves of its
computation tree.
It lifts the return value into a singleton list.
The operation clause determines how to interpret occurrences of
$\Branch$ in $\dec{toss}$. It provides access to the argument of
$\Branch$ (which is unit) and its resumption, $r$. The resumption is a
first-class delimited continuation which captures the remainder of the
$\dec{toss}$ computation from the invocation of $\Branch$ up to its
nearest enclosing handler.

Applying $r$ to $\True$ resumes evaluation of $\dec{toss}$ via the
$\True$ branch, returning $\dec{Heads}$ and causing the success clause
of the handler to be invoked; thus the result of $r~\True$ is
$[\dec{Heads}]$. Evaluation continues in the operation clause,
meaning that $r$ is applied again, but this time to $\False$, which
causes evaluation to resume in $\dec{toss}$ via the $\False$
branch. By the same reasoning, the value of $r~\False$ is
$[\dec{Tails}]$, which is concatenated with the result of the
$\True$ branch; hence the handler ultimately returns
$[\dec{Heads}, \dec{Tails}]$.

\section{Calculi}
\label{sec:calculi}
In this section, we present our base language $\BCalc$ and its
extension with effect handlers $\HCalc$.

\subsection{Base Calculus}
The base calculus $\BCalc$ is a fine-grain
call-by-value~\cite{LevyPT03} variation of PCF~\cite{Plotkin77}.
Fine-grain call-by-value is similar to A-normal
form~\cite{FlanaganSDF93} in that every intermediate computation is
named, but unlike A-normal form is closed under reduction.

The syntax of $\BCalc$ is as follows.
{\small
\noindent
  \begin{syntax}
    \slab{Types}          &A,B,C,D\in\TypeCat  &::= & \Nat \mid \One \mid A \to B \mid A \times B \mid A + B \\
    \slab{Type Environments} &\Gamma\in\CtxCat &::= & \cdot \mid \Gamma, x:A \\
\slab{Values}        &V,W\in\ValCat  &::= & x \mid k \mid c \mid \lambda x^A .\, M \mid \Rec \; f^{A \to B}\, x.M \\
                     &               &\mid& \Unit \mid \Record{V, W} \mid (\Inl\, V)^B \mid (\Inr\, W)^A\\
\slab{Computations}  &M,N\in\CompCat
                           &::= & V\,W
                            \mid  \Let\; \Record{x,y} = V \; \In \; N \\
                     &     &\mid&\Case \; V \;\{ \Inl \; x \mapsto M; \Inr \; y \mapsto N\}\\
                     &    &\mid& \Return\; V
                           \mid \Let \; x \revto M \; \In \; N \\
\end{syntax}}%
The ground types are $\Nat$ and $\One$ which classify natural number
values and the unit value, respectively. The function type $A \to B$
classifies functions that map values of type $A$ to values of type
$B$. The binary product type $A \times B$ classifies pairs of values
whose first and second components have types $A$ and $B$
respectively. The sum type $A + B$ classifies tagged values of either
type $A$ or $B$.
Type environments $\Gamma$ map term variables to their types.

We let $k$ range over natural numbers and $c$ range over primitive
operations on natural numbers ($+, -, =$).
We let $x, y, z$ range over term variables.
For convenience, we also use $f$, $g$, and $h$ for variables of
function type, $i$ and $j$ for variables of type $\Nat$, and $r$ to
denote resumptions.
The value terms are standard.

%
We will occasionally blur the distinction between object and meta
language by writing $A$ for the meta level type of closed value terms
of type $A$.
All elimination forms are computation terms. Abstraction is eliminated
using application ($V\,W$).
The product eliminator $(\Let \; \Record{x,y} = V \; \In \; N)$ splits
a pair $V$ into its constituents and binds them to $x$ and $y$,
respectively. Sums are eliminated by a case split ($\Case\; V\;
\{\Inl\; x \mapsto M; \Inr\; y \mapsto N\}$).
A trivial computation $(\Return\;V)$ returns value $V$. The sequencing
expression $(\Let \; x \revto M \; \In \; N)$ evaluates $M$ and binds
the result value to $x$ in $N$.

\begin{figure*}
\small
\raggedright\textbf{Values}
\begin{mathpar}
  \inferrule*[Lab=\tylab{Var}]
    {x : A \in \Gamma}
    {\typv{\Gamma}{x : A}}

  \inferrule*[Lab=\tylab{Unit}]
    { }
    {\typv{\Gamma}{\Unit : \One}}

  \inferrule*[Lab=\tylab{Nat}]
    { k \in \mathbb{N} }
    {\typv{\Gamma}{k : \Nat}}

  \inferrule*[Lab=\tylab{Const}]
    {c : A \to B}
    {\typv{\Gamma}{c : A \to B}}
\\
  \inferrule*[Lab=\tylab{Lam}]
    {\typ{\Gamma, x : A}{M : B}}
    {\typv{\Gamma}{\lambda x^A .\, M : A \to B}}

  \inferrule*[Lab=\tylab{Rec}]
    {\typ{\Gamma, f : A \to B, x : A}{M : B}}
    {\typv{\Gamma}{\Rec\; f^{A \to B}\,x .\, M : A \to B}}
\\
  \inferrule*[Lab=\tylab{Prod}]
    { \typv{\Gamma}{V : A} \\
      \typv{\Gamma}{W : B}
    }
    {\typv{\Gamma}{\Record{V,W} : A \times B}}

  \inferrule*[Lab=\tylab{Inl}]
    {\typv{\Gamma}{V : A}}
    {\typv{\Gamma}{(\Inl\,V)^B : A + B}}

  \inferrule*[Lab=\tylab{Inr}]
    {\typv{\Gamma}{W : B}}
    {\typv{\Gamma}{(\Inr\,W)^A : A + B}}
\end{mathpar}

\textbf{Computations}
\begin{mathpar}
  \inferrule*[Lab=\tylab{App}]
    {\typv{\Gamma}{V : A \to B} \\
     \typv{\Gamma}{W : A}
    }
    {\typ{\Gamma}{V\,W : B}}

  \inferrule*[Lab=\tylab{Split}]
    {\typv{\Gamma}{V : A \times B} \\
     \typ{\Gamma, x : A, y : B}{N : C}
    }
    {\typ{\Gamma}{\Let \; \Record{x,y} = V\; \In \; N : C}}

  \inferrule*[Lab=\tylab{Case}]
    { \typv{\Gamma}{V : A + B}  \\
      \typ{\Gamma,x : A}{M : C} \\
      \typ{\Gamma,y : B}{N : C}
    }
    {\typ{\Gamma}{\Case \; V \;\{\Inl\; x \mapsto M; \Inr \; y \mapsto N \} : C}}
\\
  \inferrule*[Lab=\tylab{Return}]
    {\typv{\Gamma}{V : A}}
    {\typ{\Gamma}{\Return \; V : A}}

  \inferrule*[Lab=\tylab{Let}]
    {\typ{\Gamma}{M : A} \\
     \typ{\Gamma, x : A}{N : C}
    }
    {\typ{\Gamma}{\Let \; x \revto M\; \In \; N : C}}
\end{mathpar}
\caption{Typing Rules for $\BCalc$}
\label{fig:typing}
\end{figure*}

The typing rules are given in Figure~\ref{fig:typing}.
We require two typing judgements: one for values and the other for
computations.
The judgement $\typ{\Gamma}{\square : A}$ states that a $\square$-term
has type $A$ under type environment $\Gamma$, where $\square$ is
either a value term ($V$) or a computation term ($M$).
The constants have the following types.
{\small
\begin{mathpar}
\{(+), (-)\} : \Nat \times \Nat \to \Nat

(=) : \Nat \times \Nat \to \One + \One
\end{mathpar}}
\begin{figure*}
\small
\begin{reductions}
\semlab{App}     & (\lambda x^A . \, M) V   &\reducesto& M[V/x] \\
\semlab{App-Rec} & (\Rec\; f^A \,x.\, M) V  &\reducesto& M[(\Rec\;f^A\,x .\,M)/f,V/x]\\
\semlab{Const}   & c~V                      &\reducesto& \Return\;(\const{c}\,(V)) \\
\semlab{Split} & \Let \; \Record{x,y} = \Record{V,W} \; \In \; N &\reducesto& N[V/x,W/y] \\
\semlab{Case-inl} &
  \Case \; (\Inl\, V)^B \; \{\Inl \; x \mapsto M;\Inr \; y \mapsto N\} &\reducesto& M[V/x] \\
\semlab{Case-inr} &
  \Case \; (\Inr\, V)^A \; \{\Inl \; x \mapsto M; \Inr \; y \mapsto N\} &\reducesto& N[V/y]\\
\semlab{Let} &
  \Let \; x \revto \Return \; V \; \In \; N &\reducesto& N[V/x] \\
\semlab{Lift} &
  \EC[M] &\reducesto& \EC[N], \hfill \text{if }M \reducesto N \\
\end{reductions}
\begin{syntax}
\slab{Evaluation contexts} &  \mathcal{E} &::=& [\,] \mid \Let \; x \revto \mathcal{E} \; \In \; N
\end{syntax}
\caption{Contextual Small-Step Operational Semantics}
\label{fig:small-step}
\end{figure*}
We give a small-step operational semantics for \BCalc{} with
\emph{evaluation contexts} in the style of \citet{Felleisen87}. The
reduction rules are given in Figure~\ref{fig:small-step}.
We write $M[V/x]$ for $M$ with $V$ substituted for $x$ and $\const{c}$
for the usual interpretation of constant $c$ as a meta-level function
on closed values. The reduction relation $\reducesto$ is defined on
computation terms. The statement $M \reducesto N$ reads: term $M$
reduces to term $N$ in one step.
We write $R^+$ for the transitive closure of relation $R$ and $R^*$
for the reflexive, transitive closure of relation $R$.

\paragraph{Notation}
We elide type annotations when clear from context.
For convenience we often write code in direct-style assuming the
standard left-to-right call-by-value elaboration into fine-grain
call-by-value~\citep{Moggi91, FlanaganSDF93}.
For example, the expression $f\,(h\,w) + g\,\Unit$ is syntactic sugar
for:
{\small
\[
      \ba[t]{@{~}l}
      \Let\; x \revto h\,w \;\In\;
      \Let\; y \revto f\,x \;\In\;
      \Let\; z \revto g\,\Unit \;\In\;
      y + z
      \ea
\]}%
We define sequencing of computations in the standard way.
{\small
\[
  M;N \defas \Let\;x \revto M \;\In\;N, \quad \text{where $x \notin FV(N)$}
\]}%
We make use of standard syntactic sugar for pattern matching. For
instance, we write
{\small
\[
  \lambda\Unit.M \defas \lambda x^{\One}.M, \quad \text{where $x \notin FV(M)$}
\]}%
for suspended computations, and if the binder has a type other than
$\One$, we write:
{\small
\[
  \lambda\_^A.M \defas \lambda x^A.M, \quad \text{where $x \notin FV(M)$}
\]}%
We use the standard encoding of booleans as a sum:
{\small
\begin{mathpar}
\Bool \defas \One + \One

\True \defas \Inl~\Unit

\False \defas \Inr~\Unit

\If\;V\;\Then\;M\;\Else\;N \defas \Case\;V\;\{\Inl~\Unit \mapsto M; \Inr~\Unit \mapsto N\}
\end{mathpar}}%

%
%
\subsection{Handler Calculus}
\label{sec:handlers-calculus}

We now define $\HCalc$ as an extension of $\BCalc$.
{\small
\begin{syntax}
\slab{Operation symbols} &\ell \in \mathcal{L} & & \\
\slab{Signatures}        &\Sigma&::=& \cdot \mid \{\ell : A \to B\} \cup \Sigma\\
\slab{Handler types}     &F     &::=& C \Rightarrow D\\
\slab{Computations} &M, N &::=& \dots \mid \Do \; \ell \; V
                          \mid  \Handle \; M \; \With \; H \\
\slab{Handlers}     &H&::=& \{ \Val \; x \mapsto M \}
                      \mid  \{ \ell \; p \; r \mapsto N \} \uplus H\\
\end{syntax}}%
We assume a countably infinite set $\mathcal{L}$ of operation symbols
$\ell$.
An effect signature $\Sigma$ is a map from operation symbols to their
types, thus we assume that each operation symbol in a signature is
distinct. An operation type $A \to B$ classifies operations that take
an argument of type $A$ and return a result of type $B$.
We write $dom(\Sigma) \subseteq \mathcal{L}$ for the set of operation
symbols in a signature $\Sigma$.
A handler type $C \Rightarrow D$ classifies effect handlers that
transform computations of type $C$ into computations of type $D$.
Following \citet{Pretnar15}, we assume a global signature for every
program.
Computations are extended with operation invocation ($\Do\;\ell\;V$)
and effect handling ($\Handle\; M \;\With\; H$).
Handlers are constructed from one success clause $(\{\Val\; x \mapsto
M\})$ and one operation clause $(\{ \ell \; p \; r \mapsto N \})$ for
each operation $\ell$ in $\Sigma$.
Following \citet{PlotkinP13}, we adopt the convention that a handler
with missing operation clauses (with respect to $\Sigma$) is syntactic
sugar for one in which all missing clauses perform explicit
forwarding:
\[
   \{\ell \; p \; r \mapsto \Let\; x \revto \Do \; \ell \, p \;\In\; r \, x\}
\]

\begin{figure*}
\small
\raggedright
\textbf{Computations}
\begin{mathpar}
  \inferrule*[Lab=\tylab{Do}]
    {(\ell : A \to B) \in \Sigma \\ \typ{\Gamma}{V : A} }
    {\typ{\Gamma}{\Do \; \ell \; V : B}}

\inferrule*[Lab=\tylab{Handle}]
  {\typ{\Gamma}{M : C} \\
   \Gamma \vdash H : C \Rightarrow D}
  {\typ{\Gamma}{\Handle \; M \; \With \; H : D}}
\end{mathpar}
\textbf{Handlers}
\begin{mathpar}
\inferrule*[Lab=\tylab{Handler}]
    {  \hret = \{\Val \; x \mapsto M\} \\
      [\hell = \{\ell \, p \; r \mapsto N_\ell\}]_{\ell \in dom(\Sigma)} \\\\
      \typ{\Gamma, x : C}{M : D} \\
      [\typ{\Gamma, p : A_\ell, r : B_\ell \to D}{N_\ell : D}]_{(\ell : A_\ell \to B_\ell) \in \Sigma}
    }
    {{\Gamma} \vdash {H : C \Rightarrow D}}
\end{mathpar}

\caption{Additional Typing Rules for $\HCalc$}
\label{fig:typing-handlers}
\end{figure*}

The typing rules for $\HCalc$ are those of $\BCalc$
(Figure~\ref{fig:typing}) plus three additional rules for operations,
handling, and handlers given in Figure~\ref{fig:typing-handlers}.
The \tylab{Do} rule ensures that an operation invocation is only
well-typed if the operation $\ell$ appears in the effect signature
$\Sigma$ and the argument type $A$ matches the type of the provided
argument $V$. The result type $B$ determines the type of the
invocation.
The \tylab{Handle} rule types handler application.
The \tylab{Handler} rule ensures that the bodies of the success clause
and the operation clauses all have the output type $D$. The type of
$x$ in the success clause must match the input type $C$. The type of
the parameter $p$ ($A_\ell$) and resumption $r$ ($B_\ell \to D$) in
operation clause $\hell$ is determined by the type of $\ell$; the
return type of $r$ is $D$, as the body of the resumption will itself
be handled by $H$.
We write $\hell$ and $\hret$ for projecting success and operation
clauses.
{\small
\[
  \ba{@{~}r@{~}c@{~}l@{~}l}
    \hret &\defas& \{\Val\, x \mapsto M \}, &\quad \text{where } \{\Val\, x \mapsto M \} \in H\\
    \hell &\defas& \{\ell\, p\,r \mapsto M \}, &\quad \text{where } \{\ell\, p\;r \mapsto M \} \in H
  \ea
\]}%

We extend the operational semantics to $\HCalc$. Specifically, we add
two new reduction rules: one for handling return values and another
for handling operation invocations.
{\small
\begin{reductions}
\semlab{Ret} & \Handle \; (\Return \; V) \; \With \; H &\reducesto& N[V/x], \qquad
                                      \text{where } \hret = \{ \Val \; x \mapsto N \} \smallskip\\
                                      \semlab{Op}  & \Handle \; \EC[\Do \; \ell \, V] \; \With \; H &\reducesto& N[V/p,\; (\lambda y.\Handle \; \EC[\Return \; y] \; \With \; H)/r],\\
    \multicolumn{4}{@{}r@{}}{
      \hfill\text{where } \hell = \{ \ell\, p \; r \mapsto N \}
    }
\end{reductions}}%
The first rule invokes the success clause.
The second rule handles an operation via the corresponding operation
clause.
If we were \naively to extend evaluation contexts with the handle
construct then our semantics would become nondeterministic, as it may
pick an arbitrary handler in scope.
In order to ensure that the semantics is deterministic, we instead add
a distinct form of evaluation context for effectful computation, which
we call handler contexts.
{\small
\begin{syntax}
\slab{Handler contexts} &  \HC &::= & [\,] \mid \Handle \; \HC \; \With \; H
                                \mid  \Let\;x \revto \HC\; \In\; N\\
\end{syntax}}%
We replace the $\semlab{Lift}$ rule with a corresponding rule for
handler contexts.
{\small
\[
  \HC[M] ~\reducesto~ \HC[N], \qquad\hfill\text{if } M \reducesto N
\]}%
The separation between pure evaluation contexts $\EC$ and handler
contexts $\HC$ ensures that the $\semlab{Op}$ rule always selects the
innermost handler.

We now characterise normal forms and state the standard type soundness
property of $\HCalc$.
\begin{definition}[Computation normal forms]
  A computation term $N$ is normal with respect to $\Sigma$, if $N =
  \Return\;V$ for some $V$ or $N = \EC[\Do\;\ell\,W]$ for some $\ell
  \in dom(\Sigma)$, $\EC$, and $W$.
\end{definition}

\begin{theorem}[Type Soundness]
  If $\typ{}{M : C}$, then either there exists $\typ{}{N : C}$ such
  that $M \reducesto^* N$ and $N$ is normal with respect to $\Sigma$,
  or $M$ diverges.
\end{theorem}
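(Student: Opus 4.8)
The plan is to prove type soundness via the standard syntactic method of progress and preservation (subject reduction), adapted to the fine-grain call-by-value setting with effect handlers.

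First I would establish a \emph{preservation} lemma: if $\typ{}{M : C}$ and $M \reducesto N$, then $\typ{}{N : C}$. The proof proceeds by case analysis on the reduction rule applied. The purely functional rules (\semlab{App}, \semlab{App-Rec}, \semlab{Split}, \semlab{Case-inl}, \semlab{Case-inr}, \semlab{Let}, \semlab{Const}) are routine and rely on a standard \emph{substitution lemma} stating that typing is preserved under substitution of a well-typed value for a variable; I would prove this lemma first by induction on the typing derivation of the term being substituted into. The genuinely new cases are \semlab{Ret} and \semlab{Op}. For \semlab{Ret}, inverting \tylab{Handle} and \tylab{Handler} gives $\typ{x : C}{N : D}$ for the success clause, and substituting the value $V : C$ yields the result. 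For \semlab{Op}, the crux is typing the reconstructed resumption $\lambda y.\Handle\;\EC[\Return\;y]\;\With\;H$; this requires a subsidiary lemma about evaluation contexts, namely that if $\typ{}{\EC[\Do\;\ell\,V] : C}$ then $\EC$ expects a hole of type $B_\ell$ (the return type of $\ell$) and, conversely, $\typ{y : B_\ell}{\EC[\Return\;y] : C}$, so that the resumption has type $B_\ell \to D$ exactly as demanded by \tylab{Handler}. Since the operation clause body $N_\ell$ is typed with $p : A_\ell$ and $r : B_\ell \to D$ under output type $D$, the substitution $N[V/p, (\ldots)/r]$ has type $D$.

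Next I would establish a \emph{progress}-style lemma giving the dichotomy: every closed well-typed computation $M$ either is normal (in the sense of the preceding definition) or reduces. Because reduction happens under handler contexts $\HC$ via the lifted rule, the key structural fact is a canonical-forms-and-decomposition analysis: any closed well-typed $M$ can be uniquely decomposed as $\HC[R]$ where $R$ is a redex, or else $M$ is already of the form $\Return\;V$ or $\EC[\Do\;\ell\,W]$ with no enclosing handler for $\ell$ (which, under the global-signature convention with explicit forwarding, means $M$ is genuinely stuck at a top-level operation). I would prove this by induction on the typing derivation, using canonical forms lemmas (a closed value of type $A\to B$ is a $\lambda$ or $\Rec$; of type $A\times B$ is a pair; of $A+B$ is an injection) to rule out stuck elimination forms. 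The separation of $\EC$ from $\HC$ is what makes the decomposition deterministic and lets the \semlab{Op} rule always target the innermost handler.

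Finally, the theorem follows by the usual coinductive/limiting argument: starting from $\typ{}{M : C}$, repeatedly apply progress to get a step and preservation to retain typability; either this sequence terminates at a normal form (which, by preservation, still has type $C$), or it continues forever, i.e.\ $M$ diverges. The step I expect to be the main obstacle is the \semlab{Op} case of preservation, specifically the evaluation-context typing lemma and the verification that the captured resumption has precisely the type $B_\ell \to D$; getting the interaction between the pure context $\EC$, the handler $H$, and the operation's signature entry $\ell : A_\ell \to B_\ell$ exactly right is the delicate part, whereas everything else reduces to textbook arguments.
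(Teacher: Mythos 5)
Your proposal is the standard progress-and-preservation argument, correctly adapted to fine-grain call-by-value with handlers, and you have rightly identified the one genuinely delicate point (typing the captured resumption in the \semlab{Op} case via an evaluation-context typing lemma). The paper states this theorem without proof, treating it as routine, so your outline is exactly the argument the authors are implicitly relying on; I see no gap.
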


\subsection{The Role of Types}

Readers familiar with backtracking search algorithms may wonder where
types come into the expressiveness picture.
Types will not play a direct role in our proofs but rather in the
characterisation of which programs can be meaningfully compared. In
particular, types are used to rule out global approaches such as
continuation passing style (CPS): without types one could obtain an
efficient pure generic count program by CPS transforming the entire
program.

Readers familiar with effect handlers may wonder why our handler
calculus does not include an effect type system.
As types frame the comparison of programs between languages, we
require that types be fixed across languages; hence $\HCalc$ does not
include effect types.
Future work includes reconciling effect typing with our approach to
expressiveness.

\section{Abstract Machine Semantics}
\label{sec:abstract-machine-semantics}
Thus far we have introduced the base calculus $\BCalc$ and its
extension with effect handlers $\HCalc$.
For each calculus we have given a \emph{small-step operational
  semantics} which uses a substitution model for evaluation. Whilst
this model is semantically pleasing, it falls short of providing a
realistic account of practical computation as substitution is an
expensive operation. We now develop a more practical model of
computation based on an \emph{abstract machine semantics}.

\subsection{Base Machine}
\label{sec:base-abstract-machine}

\newcommand{\Conf}{\dec{Conf}}
\newcommand{\EConf}{\dec{EConf}}
\newcommand{\MVal}{\dec{MVal}}

We choose a \emph{CEK}-style abstract machine
semantics~\citep{FelleisenF86} for \BCalc{} based on that of
\citet{HillerstromLA20}.
The CEK machine operates on configurations which are triples of the
form $\cek{M \mid \gamma \mid \sigma}$. The first component contains
the computation currently being evaluated. The second component
contains the environment $\gamma$ which binds free variables. The
third component contains the continuation which instructs the machine
how to proceed once evaluation of the current computation is complete.
The syntax of abstract machine states is as follows.
{\small
\begin{syntax}
\slab{Configurations}           & \conf \in \Conf  &::=& \cek{M \mid \env \mid \sigma} \\
\slab{Environments}       &\env \in \Env   &::=& \emptyset \mid \env[x \mapsto v] \\
\slab{Machine values}           &v, w \in \MVal  &::= & x \mid n \mid c \mid \Unit \mid \Record{v, w} \\
                                &                &\mid& (\env, \lambda x^A .\, M) \mid (\env, \Rec\, f^{A \to B}\,x . \, M)
                                                  \mid  (\Inl\, v)^B \mid (\Inr\,w)^A \\
\slab{Pure continuations}            &\sigma \in \PureCont &::=& \nil \mid (\env, x, N) \cons \sigma \\
\end{syntax}}%
Values consist of function closures, constants, pairs, and left or
right tagged values.
We refer to continuations of the base machine as \emph{pure}.
A pure continuation is a stack of pure continuation frames. A pure
continuation frame $(\env, x, N)$ closes a let-binding $\Let \;x
\revto [~] \;\In\;N$ over environment $\env$.
We write $\nil$ for an empty pure continuation and $\phi \cons \sigma$
for the result of pushing the frame $\phi$ onto $\sigma$. We use
pattern matching to deconstruct pure continuations.

\begin{figure*}
\small
\raggedright
\textbf{Transition relation}
\begin{reductions}
\mlab{App} & \cek{ V\;W \mid \env \mid \sigma}
           &\stepsto& \cek{ M \mid \env'[x \mapsto \val{W}{\env}] \mid \sigma},\\
           &&& \quad\text{ if }\val{V}{\env} = (\env', \lambda x^A . \, M)\\

\mlab{Rec} & \cek{ V\;W \mid \env \mid \sigma}
           &\stepsto& \cek{ M \mid \env'[\bl
                                         f \mapsto (\env', \Rec\,f^{A \to B}\,x. M), \\
                                         x \mapsto \val{W}{\env}] \mid \sigma},\\
                                         \el \\
           &&& \quad\text{ if }\val{V}{\env} = (\env', \Rec\, f^{A \to B}\, x. M)\\

\mlab{Const} & \cek{ V~W \mid \env \mid \sigma}
             &\stepsto& \cek{ \Return\; (\const{c}\,(\val{W}\env)) \mid \env \mid \sigma},\\
             &&& \quad\text{ if }\val{V}{\env} = c \\
\mlab{Split} & \cek{ \Let \; \Record{x,y} = V \; \In \; N \mid \env \mid \sigma}
             &\stepsto& \cek{ N \mid \env[x \mapsto v, y \mapsto w] \mid \sigma}, \\
             &&& \quad\text{ if }\val{V}{\env} = \Record{v; w} \\

\mlab{CaseL} & \ba{@{}l@{}l@{}}
               \cekl \Case\; V\, \{&\Inl\, x \mapsto M; \\
                                   &\Inr\, y \mapsto N\} \mid \env \mid \sigma \cekr \\
               \ea
             &\stepsto& \cek{ M \mid \env[x \mapsto v] \mid \sigma},\\
             &&& \quad\text{ if }\val{V}{\env} = \Inl\, v \\

\mlab{CaseR} & \ba{@{}l@{}l@{}}
               \cekl \Case\; V\, \{&\Inl\, x \mapsto M; \\
                                   &\Inr\, y \mapsto N\} \mid \env \mid \sigma \cekr \\
               \ea
             &\stepsto& \cek{ N \mid \env[y \mapsto v] \mid \sigma},\\
             &&& \quad\text{ if }\val{V}{\env} = \Inr\, v \\

\mlab{Let} & \cek{ \Let \; x \revto M \; \In \; N \mid \env \mid \sigma}
    &\stepsto& \cek{ M \mid \env \mid (\env,x,N) \cons \sigma} \\

\mlab{RetCont} &\cek{ \Return \; V \mid \env \mid (\env',x,N) \cons \sigma}
          &\stepsto& \cek{ N \mid \env'[x \mapsto \val{V}{\env}] \mid \sigma} \\

\end{reductions}

\textbf{Value interpretation}
\[
\bl
\begin{eqs}
\val{x}{\env}                    &=& \env(x) \\
\val{\Unit{}}{\env}              &=& \Unit{} \\
\end{eqs}
\qquad\qquad\qquad
\begin{eqs}
\val{n}{\env}                    &=& n \\
\val{c}\env                      &=& c \\
\end{eqs}
\qquad\qquad\qquad
\begin{eqs}
\val{\lambda x^A.M}{\env}      &=& (\env, \lambda x^A.M) \\
\val{\Rec\, f^{A \to B}\, x.M}{\env}   &=& (\env, \Rec\,f^{A \to B}\, x.M) \\
\end{eqs}
\medskip \\
\begin{eqs}
\val{\Record{V, W}}{\env} &=& \Record{\val{V}{\env}, \val{W}{\env}} \\
\end{eqs}
\qquad\qquad\qquad
\ba{@{}r@{~}c@{~}l@{}}
\val{(\Inl\, V)^B}{\env}         &=& (\Inl\; \val{V}{\env})^B \\
\val{(\Inr\, V)^A}{\env}         &=& (\Inr\; \val{V}{\env})^A \\
\ea
\el
\]

\caption{Abstract Machine Semantics for $\BCalc$}
\label{fig:abstract-machine-semantics}
\end{figure*}

The abstract machine semantics is given in
Figure~\ref{fig:abstract-machine-semantics}.
The transition relation ($\stepsto$) makes use of the value
interpretation ($\val{-}$) from value terms to machine values.
The machine is initialised by placing a term in a configuration
alongside the empty environment ($\emptyset$) and identity
pure continuation ($\nil$).
The rules (\mlab{App}), (\mlab{Rec}), (\mlab{Const}), (\mlab{Split}),
(\mlab{CaseL}), and (\mlab{CaseR}) eliminate values.
The (\mlab{Let}) rule extends the current pure continuation with let
bindings.
The (\mlab{RetCont}) rule extends the environment in the top frame of
the pure continuation with a returned value.
Given an input of a well-typed closed computation term $\typ{}{M :
  A}$, the machine will either diverge or return a value of type $A$.
A final state is given by a configuration of the form $\cek{\Return\;V
  \mid \env \mid \nil}$ in which case the final return value is given
by the denotation $\val{V}{\env}$ of $V$ under environment $\gamma$.

\paragraph{Correctness}
The base machine faithfully simulates the operational semantics for
$\BCalc$; most transitions correspond directly to $\beta$-reductions,
but $\mlab{Let}$ performs an administrative step to bring the
computation $M$ into evaluation position.
We formally state and prove the correspondence in
Appendix~\extendedref{sec:base-machine-correctness}, relying on an
inverse map $\inv{-}$ from configurations to
terms~\citep{HillerstromLA20}.
\newcommand{\contapp}[2]{#1 #2}
\newcommand{\contappp}[2]{#1(#2)}

\subsection{Handler Machine}
\newcommand{\HClosure}{\dec{HClo}}
We now enrich the $\BCalc$ machine to a $\HCalc$ machine.
We extend the syntax as follows.
{\small
\begin{syntax}
  \slab{Configurations}            &\conf \in \Conf &::=& \cek{M \mid \env \mid \kappa}\\
  \slab{Resumptions}               &\rho \in \dec{Res} &::=& (\sigma, \chi)\\
  \slab{Continuations}             &\kappa \in \Cont &::=& \nil \mid \rho \cons \kappa\\
  \slab{Handler closures}          &\chi \in \HClosure   &::=& (\env, H) \\
  \slab{Machine values}            &v, w \in \MVal  &::=& \cdots \mid \rho \\
\end{syntax}}%
The notion of configurations changes slightly in that the continuation
component is replaced by a generalised continuation
$\kappa \in \Cont$~\cite{HillerstromLA20}; a continuation is now a
list of resumptions. A resumption is a pair of a pure continuation (as
in the base machine) and a handler closure ($\chi$).
A handler closure consists of an environment and a handler definition,
where the former binds the free variables that occur in the latter.
The identity continuation is a singleton list containing the identity
resumption, which is an empty pure continuation paired with the
identity handler closure:
{\small
\[
\kappa_0 \defas [(\nil, (\emptyset, \{\Val\;x \mapsto x\}))]
\]}%
Machine values are augmented to include resumptions as an operation
invocation causes the topmost frame of the machine continuation to be
reified (and bound to the resumption parameter in the operation
clause).

The handler machine adds transition rules for handlers, and modifies
$(\mlab{Let})$ and $(\mlab{RetCont})$ from the base machine to account
for the richer continuation
structure. Figure~\ref{fig:abstract-machine-semantics-handlers}
depicts the new and modified rules.
The $(\mlab{Handle})$ rule pushes a handler closure along with an
empty pure continuation onto the continuation stack.
The $(\mlab{RetHandler})$ rule transfers control to the success clause
of the current handler once the pure continuation is empty.
The $(\mlab{Handle-Op})$ rule transfers control to the matching
operation clause on the topmost handler, and during the process it
reifies the handler closure. Finally, the $(\mlab{Resume})$ rule
applies a reified handler closure, by pushing it onto the continuation
stack.
The handler machine has two possible final states: either it yields a
value or it gets stuck on an unhandled operation.

\begin{figure*}
\small
\raggedright

\textbf{Transition relation}
\begin{reductions}
\mlab{Resume} & \cek{ V\;W \mid \env \mid \kappa}
               &\stepsto& \cek{ \Return \; W \mid \env \mid (\sigma, \chi) \cons \kappa},\\
               &&&\quad\text{ if }\val{V}{\env} = (\sigma, \chi) \\

\mlab{Let} & \cek{ \Let \; x \revto M \; \In \; N \mid \env \mid (\sigma, \chi) \cons \kappa}
    &\stepsto& \cek{ M \mid \env \mid ((\env,x,N) \cons \sigma, \chi) \cons \kappa} \\

\mlab{RetCont} &\cek{ \Return \; V \mid \env \mid ((\env',x,N) \cons \sigma, \chi) \cons \kappa}
        &\stepsto& \cek{ N \mid \env'[x \mapsto \val{V}{\env}] \mid (\sigma, \chi) \cons \kappa} \\

\mlab{Handle} & \cek{ \Handle \; M \; \With \; H \mid \env \mid \kappa}
       &\stepsto& \cek{ M \mid \env \mid (\nil, (\env, H)) \cons \kappa} \\

\mlab{RetHandler} & \cek{ \Return \; V \mid \env \mid (\nil, (\env',H)) \cons \kappa}
                  &\stepsto& \cek{ M \mid \env'[x \mapsto \val{V}{\env}] \mid \kappa},\\
                  &&&\quad\text{ if } \hret = \{\Val\; x \mapsto M\} \\

\mlab{Handle-Op} & \cek{ \Do \; \ell~V \mid \env \mid (\sigma, (\env', H)) \cons \kappa }
                 &\stepsto& \cek{ M \mid \env'[\bl
                                               p \mapsto \val{V}\env, \\
                                               r \mapsto (\sigma, (\env', H))] \mid \kappa }, \\
                                               \el \\
                 &&&\quad\bl
                   \text{ if } \ell : A \to B \in \Sigma\\
                   \text{ and } \hell = \{\ell\; p \; r \mapsto M\}
                          \el\\
\end{reductions}
\caption{Abstract Machine Semantics for $\HCalc$}
\label{fig:abstract-machine-semantics-handlers}
\end{figure*}

\paragraph{Correctness}
The handler machine faithfully simulates the operational semantics of
$\HCalc$.
Extending the result for the base machine, we formally state and prove
the correspondence in
Appendix~\extendedref{sec:handler-machine-correctness}.

\subsection{Realisability and Asymptotic Complexity}
\label{sec:realisability}
As witnessed by the work of \citet{HillerstromL18} the machine
structures are readily realisable using standard persistent functional
data structures.
Pure continuations on the base machine and generalised continuations
on the handler machine can be implemented using linked lists with a
time complexity of $\BigO(1)$ for the extension operation
$(\_\cons\_)$.
The topmost pure continuation on the handler machine may also be
extended in time $\BigO(1)$, as extending it only requires reaching
under the topmost handler closure.
Environments, $\env$, can be realised using a map, with a time
complexity of $\BigO(\log|\env|)$ for extension and
lookup~\citep{Okasaki99}.

The worst-case time complexity of a single machine transition is
exhibited by rules which involve operations on the environment, since
any other operation is constant time, hence the worst-time complexity
of a transition is $\BigO(\log|\env|)$.
The value interpretation function $\val{-}\env$ is defined
structurally on values. Its worst-time complexity is exhibited by a
nesting of pairs of variables $\val{\Record{x_1,\dots,x_n}}\env$ which
has complexity $\BigO(n\log|\env|)$.

\paragraph{Continuation copying} On the handler machine the topmost
continuation frame can be copied in constant time due to the
persistent runtime and the layout of machine continuations. An
alternative design would be to make the runtime non-persistent
in which case copying a continuation frame $((\sigma, \_) \cons
\_)$ would be a $\BigO(|\sigma|)$ time operation.

\paragraph{Primitive operations on naturals}
Our model assumes that arithmetic operations on arbitrary natural
numbers take $\BigO(1)$ time. This is common practice in the study of
algorithms when the main interest lies
elsewhere~\citep[Section~2.2]{CormenLRS09}. If desired, one could
adopt a more refined cost model that accounted for the bit-level
complexity of arithmetic operations; however, doing so would have the
same impact on both of the situations we are wishing to compare, and
thus would add nothing but noise to the overall analysis.

\section{Predicates, Decision Trees and Generic Count}
\label{sec:generic-search}

We now come to the crux of the paper. In this section and the next, we
prove that $\HCalc$ supports implementations of certain operations
with an asymptotic runtime bound that cannot be achieved in $\BCalc$
(Section~\ref{sec:pure-counting}).
While the positive half of this claim essentially consolidates a
known piece of folklore, the negative half appears to be new.
To establish our result, it will suffice to exhibit a single
`efficient' program in $\HCalc$, then show that no equivalent program
in $\BCalc$ can achieve the same asymptotic efficiency.  We take
\emph{generic search} as our example.

Generic search is a modular search procedure that takes as input
a predicate $P$ on some multi-dimensional search space,
and finds all points of the space satisfying $P$.
Generic search is agnostic to the specific instantiation of $P$,
and as a result is applicable across a wide spectrum of domains.
Classic examples such as Sudoku solving~\citep{Bird06}, the
$n$-queens problem~\citep{BellS09} and graph colouring
can be cast as instances of generic search, and similar ideas have
been explored in connection with Nash equilibria and
exact real integration~\citep{Simpson98, Daniels16}.

For simplicity, we will restrict attention to search spaces of the form $\B^n$,
the set of bit vectors of length $n$.
To exhibit our phenomenon in the simplest
possible setting, we shall actually focus on the \emph{generic count} problem:
given a predicate $P$ on some $\B^n$, return the \emph{number of} points
of $\B^n$ satisfying $P$. However, we shall explain why our results
are also applicable to generic search proper.

We shall view $\B^n$ as the set of functions $\N_n \to \B$,
where $\N_n \defas \{0,\dots,n-1\}$.
In both $\BCalc$ and $\HCalc$ we may represent such functions by terms of type $\Nat \to \Bool$.
We will often informally write $\Nat_n$ in place of $\Nat$ to indicate that
only the values $0,\dots,n-1$ are relevant, but this convention has no
formal status since our setup does not support dependent types.

To summarise, in both $\BCalc$ and $\HCalc$ we will be working with the types
{\small
\[
\begin{twoeqs}
  \Point  & \defas & \Nat \to \Bool        & \hspace*{2.0em} &
  \Point_n & \defas & \Nat_n \to \Bool \\
  \Predicate & \defas & \Point \to \Bool &  &
  \Predicate_n & \defas & \Point_n \to \Bool
\end{twoeqs}
\]
}
and will be looking for programs
{\small
\[
  \Count_n : \Predicate_n \to \Nat
\]}%
such that for suitable terms $P$ representing semantic predicates $\Pi: \B^n \to \B$,
$\Count_n~P$ finds the number of points of $\B^n$ satisfying $\Pi$.

Before formalising these ideas more closely, let us look at some examples,
which will also illustrate the machinery of \emph{decision trees} that we will be using.

\subsection{Examples of Points, Predicates and Trees}
\label{sec:predicates-points}
Consider first the following terms of type $\Point$:
{\small
\begin{mathpar}
\dec{q}_0 \defas \lambda \_. \True

\dec{q}_1 \defas \lambda i. i=0

\dec{q}_2 \defas \lambda i.\,
      \If\;i = 0\;\Then\;\True\;
      \Else\;\If\;i = 1\;\Then\;\False\;
      \Else\;\bot
\end{mathpar}}%
(Here $\bot$ is the diverging term $(\Rec\; f\,i.f\,i)\,\Unit$.)
Then $\dec{q}_0$ represents $\langle{\True,\dots,\True}\rangle \in \B^n$ for any $n$;
$\dec{q}_1$ represents $\langle{\True,\False,\dots,\False}\rangle \in \B^n$ for any $n \geq 1$;
and $\dec{q}_2$ represents $\langle{\True,\False}\rangle \in \B^2$.

Next some predicates.
First, the following terms all represent the constant true predicate $\B^2 \to \B$:
{\small
\begin{mathpar}
\dec{T}_0 \defas \lambda q. \True

\dec{T}_1 \defas \lambda q.(q\,1; q\,0; \True)

\dec{T}_2 \defas \lambda q.(q\,0; q\,0; \True)
\end{mathpar}}%
These illustrate that in the course of evaluating a predicate term $P$ at a point $\dec{q}$,
for each $i<n$ the value of $\dec{q}$ at $i$ may be inspected zero, one or many times.

Likewise, the following all represent the `identity' predicate $\B^1 \to \B$
(here $\&\&$ is shortcut `and'):
{\small
\begin{mathpar}
\dec{I}_0 \defas \lambda q. q\,0

\dec{I}_1 \defas \lambda q.\, \If\;q\,0\; \Then\; \True \; \Else\; \False

\dec{I}_2 \defas \lambda q. (q\,0) \,\&\&\, (q\,0)
\end{mathpar}}%

Slightly more interestingly, for each $n$ we have the following program which determines
whether a point contains an odd number of $\True$ components:
{\small
\[
  \dec{Odd}_n \defas \lambda q.\, \dec{fold}\otimes\False~(\dec{map}~q~[0,\dots,n-1])
\]}%
Here $\dec{fold}$ and $\dec{map}$ are the standard combinators on lists, and $\otimes$ is exclusive-or.
Applying $\dec{Odd}_2$ to $\dec{q}_0$ yields $\False$;
applying it to $\dec{q}_1$ or $\dec{q}_2$ yields $\True$.
\medskip

\newcommand{\smath}[1]{\ensuremath{{\scriptstyle #1}}}

\newcommand{\InfiniteModel}{%
\begin{tikzpicture}[->,>=stealth',level/.style={sibling distance = 3.0cm/##1,
    level distance = 1.0cm}]
\node (root) [draw=none] { }
  child { node [opnode] {$\smath{\query 0}$}
    child { node [opnode] {$\smath{\query 0}$}
      child { node [draw=none,rotate=165] {$\vdots$}
        edge from parent node { }
      }
      child { node[leaf] {$\smath{\ans\False}$}
        edge from parent node { }
      }
      edge from parent node { }
    }
    child { node [leaf] {$\smath{\ans\False}$}
      edge from parent node { }
    }
    edge from parent node { }
  }
;
\end{tikzpicture}}
\newcommand{\ShortConjModel}{%
\begin{tikzpicture}[->,>=stealth',level/.style={sibling distance = 3.5cm/##1,
    level distance = 1.0cm}]
\node (root) [draw=none] { }
  child { node [opnode] {$\smath{\query 0}$}
    child { node [opnode] {$\smath{\query 0}$}
      child { node [treenode] {$\smath{\ans\True}$}
        edge from parent node { }
      }
      child { node[treenode] {$\smath{\ans\False}$}
        edge from parent node { }
      }
      edge from parent node { }
    }
    child { node [treenode] {$\smath{\ans\False}$}
      edge from parent node { }
    }
    edge from parent node { }
  }
;
\end{tikzpicture}}

\newcommand{\TTTwoModel}{%
\begin{tikzpicture}[->,>=stealth',level/.style={sibling distance = 8cm/##1,
    level distance = 1.5cm}]
\node (root) [draw=none] { }
  child { node [opnode] {$\smath{\query 0}$}
    child { node [opnode] {$\smath{\query 1}$}
      child { node [leaf] {$\smath{\ans\True}$}
        edge from parent node { }
      }
      child { node[leaf] {$\smath{\ans\True}$}
        edge from parent node { }
      }
      edge from parent node { }
    }
    child { node [opnode] {$\smath{\query 1}$}
      child { node [leaf] {$\smath{\ans\True}$}
        edge from parent node { }
      }
      child { node[leaf] {$\smath{\ans\True}$}
        edge from parent node { }
      }
      edge from parent node { }
    }
    edge from parent node { }
  }
;
\end{tikzpicture}}
\newcommand{\XORTwoModel}{%
\begin{tikzpicture}[->,>=stealth',level/.style={sibling distance = 5.5cm/##1,
    level distance = 1cm}]
\node (root) [draw=none] { }
  child { node [opnode] {$\smath{\query 0}$}
    child { node [opnode] {$\smath{\query 1}$}
      child { node [treenode] {$\smath{\ans\False}$}
        edge from parent node { }
      }
      child { node[treenode] {$\smath{\ans\True}$}
        edge from parent node { }
      }
      edge from parent node { }
    }
    child { node [opnode] {$\smath{\query 1}$}
      child { node [treenode] {$\smath{\ans\True}$}
        edge from parent node { }
      }
      child { node[treenode] {$\smath{\ans\False}$}
        edge from parent node { }
      }
      edge from parent node { }
    }
    edge from parent node { }
  }
;
\end{tikzpicture}}
\newcommand{\TTZeroModel}{%
  \begin{tikzpicture}[->,>=stealth',level/.style={sibling distance = 1cm/##1,
    level distance = 1cm}]
  \node (root) [draw=none] { }
  child { node [treenode] {$\smath{\ans\True}$}
    edge from parent node { }
  }
;
\end{tikzpicture}}%
\begin{figure}
  \centering
  \begin{subfigure}{0.1\textwidth}
    \begin{center}
    \vspace*{6.5ex}
    \scalebox{1.0}{\TTZeroModel}
    \vspace*{6.5ex}
    \end{center}
    \caption{$\dec{T}_0$}
    \label{fig:tt0-tree}
  \end{subfigure}
  \begin{subfigure}{0.3\textwidth}
    \begin{center}
    \scalebox{1.0}{\ShortConjModel}
    \end{center}
    \caption{$\dec{I}_2$}
    \label{fig:div1-tree}
  \end{subfigure}
  \begin{subfigure}{0.4\textwidth}
    \begin{center}
    \scalebox{1.0}{\XORTwoModel}
    \end{center}
    \caption{$\dec{Odd}_2$}
    \label{fig:xor2-tree}
  \end{subfigure}
  \caption{Examples of Decision Trees}
  \label{fig:example-models}
\end{figure}

We can think of a predicate term $P$ as participating in a `dialogue'
with a given point $Q : \Point_n$.
The predicate may \emph{query} $Q$ at some coordinate $k$;
$Q$ may \emph{respond} with $\True$ or $\False$ and this returned value
may influence the future course of the dialogue.
After zero or more such query/response pairs, the predicate may return a
final \emph{answer} ($\True$ or $\False$).

The set of possible dialogues with a given term $P$ may be organised
in an obvious way into an unrooted binary \emph{decision tree}, in
which each internal node is labelled with a query $\query k$ (with
$k<n$), and with left and right branches corresponding to the
responses $\True$, $\False$ respectively.  Any point will thus
determine a path through the tree, and each leaf is labelled with an
answer $\ans \True$ or $\ans \False$ according to whether the
corresponding point or points satisfy the predicate.

Decision trees for a sample of the above predicate terms are depicted
in Figure~\ref{fig:example-models}; the relevant formal definitions
are given in the next subsection.  In the case of $\dec{I}_2$, one of
the $\ans \False$ leaves will be `unreachable' if we are working in
$\BCalc$ (but reachable in a language supporting mutable state).

We think of the edges in the tree as corresponding to portions of
computation undertaken by $P$ between queries, or before delivering
the final answer.  The tree is unrooted (i.e.\ starts with an edge
rather than a node) because in the evaluation of $P\,Q$ there is
potentially some `thinking' done by $P$ even before the first query or
answer is reached.  For the purpose of our runtime analysis, we will
also consider \emph{timed} variants of these decision trees, in which
each edge is labelled with the number of computation steps involved.

It is possible that for a given $P$ the construction of a decision
tree may hit trouble, because at some stage $P$ either goes undefined
or gets stuck at an unhandled operation.  It is also possible that the
decision tree is infinite because $P$ can keep asking queries forever.
However, we shall be restricting our attention to terms representing
\emph{total} predicates: those with finite decision trees in which
every path leads to a leaf.

In order to present our complexity results in a simple and clear form,
we will give special prominence to certain well-behaved decision
trees.  For $n \in \N$, we shall say a tree is \emph{$n$-standard} if
it is total (i.e.\ every maximal path leads to a leaf labelled with an
answer) and along any path to a leaf, each coordinate $k<n$ is queried
once and only once. Thus, an $n$-standard decision tree is a complete
binary tree of depth $n+1$, with $2^n - 1$ internal nodes and $2^n$
leaves.  However, there is no constraint on the order of the queries,
which indeed may vary from one path to another.  One pleasing property
of this notion is that for a predicate term with an $n$-standard
decision tree, the number of points in $\B^n$ satisfying the predicate
is precisely the number of $\ans \True$ leaves in the tree.

Of the examples we have given, the tree for $\dec{T}_0$ is 0-standard;
those for $\dec{I}_0$ and $\dec{I}_1$ are 1-standard; that for
$\dec{Odd}_n$ is $n$-standard; and the rest are not $n$-standard for
any $n$.

\subsection{Formal Definitions}
\label{sec:predicate-models}
We now formalise the above notions.  We will present our definitions
in the setting of $\HCalc$, but everything can clearly be relativised
to $\BCalc$ with no change to the meaning in the case of $\BCalc$
terms.  For the purpose of this subsection we fix $n \in \N$, set
$\N_n \defas \{0,\ldots,n-1\}$, and use $k$ to range over $\N_n$. We
write $\B$ for the set of booleans, which we shall identify with the
(encoded) boolean values of $\HCalc$, and use $b$ to range over $\B$.

As suggested by the foregoing discussion, we will need to work with
both syntax and semantics.  For points, the relevant definitions are
as follows.

\begin{definition}[$n$-points]\label{def:semantic-n-point}
  A closed value $Q : \Point$ is said to be a \emph{syntactic $n$-point} if:
{\small
  \[
    \forall k \in \N_n.\,\exists b \in \B.~ Q~k \reducesto^\ast \Return\;b
  \]}%
A \emph{semantic $n$-point} $\pi$ is simply a mathematical function
$\pi: \N_n \to \B$.  (We shall also write $\pi \in \B^n$.)  Any
syntactic $n$-point $Q$ is said to \emph{denote} the semantic
$n$-point $\val{Q}$ given by:
{\small
  \[
  \forall k \in \N_n,\, b \in \B.~ \val{Q}(k) = b \,\Iff\, Q~k \reducesto^\ast \Return\;b
  \]}%
Any two syntactic $n$-points $Q$ and $Q'$ are said to be
\emph{distinct} if $\val{Q} \neq \val{Q'}$.
\end{definition}

By default, the unqualified term \emph{$n$-point} will from now on
refer to syntactic $n$-points.

Likewise, we wish to work with predicates both syntactically and
semantically.  By a \emph{semantic $n$-predicate} we shall mean simply
a mathematical function $\Pi: \B^n \to \B$.  One slick way to define
syntactic $n$-predicates would be as closed terms $P:\Predicate$ such
that for every $n$-point $Q$, $P\,Q$ evaluates to either
$\Return\;\True$ or $\Return\;\False$.  For our purposes, however, we
shall favour an approach to $n$-predicates via \emph{decision trees},
which will yield more information on their behaviour.

We will model decision trees as certain partial functions from
\emph{addresses} to \emph{labels}.  An address will specify the
position of a node in the tree via the path that leads to it, while a
label will represent the information present at a node. Formally:

\begin{definition}[untimed decision tree]\label{def:decision-tree}
  (i) The address set $\Addr$ is simply the set $\B^\ast$ of finite lists of booleans.
      If $bs,bs' \in \Addr$, we write $bs \sqsubseteq bs'$ (resp.\ $bs \sqsubset bs'$)
      to mean that $bs$ is a prefix (resp.\ proper prefix) of $bs'$.

  (ii) The label set $\Lab$ consists of \emph{queries} parameterised by a natural
       number and \emph{answers} parameterised by a boolean:
{\small
  \[
  \Lab \defas \{\query k \mid k \in \N \} \cup \{\ans b \mid b \in \B \}
  \]
}%

 (iii) An (untimed) decision tree is a partial function $\tree : \Addr
\pto \Lab$ such that:
\begin{itemize}
  \item The domain of $\tree$ (written $dom(\tree)$) is prefix closed.
  \item Answer nodes are always leaves:
    if $\tree(bs) = \ans b$ then $\tree(bs')$ is undefined whenever $bs \sqsubset bs'$.
\end{itemize}
\end{definition}

As our goal is to reason about the time complexity of generic count
programs and their predicates, it is also helpful to decorate decision
trees with timing data that records the number of machine steps taken
for each piece of computation performed by a predicate:

\begin{definition}[timed decision tree]\label{def:timed-decision-tree}
A timed decision tree is a partial function $\tree : \Addr \pto
\Lab \times \N$ such that its first projection $bs \mapsto \tree(bs).1$
is a decision tree.
We write $\tl$ for the first projection ($bs \mapsto \tree(bs).1$) and
$\ts$ for the second projection ($bs \mapsto \tree(bs).2$) of a timed
decision tree.
\end{definition}

Here we think of $\steps(\tree)(bs)$ as the computation time
associated with the edge whose \emph{target} is the node addressed by
$bs$.

We now come to the method for associating a specific tree with a given
term $P$. One may think of this as a kind of denotational semantics,
but here we shall extract a tree from a term by purely operational
means using our abstract machine model. The key idea is to try
applying $P$ to a distinguished free variable $q: \Point$, which we
think of as an `abstract point'. Whenever $P$ wants to interrogate its
argument at some index $i$, the computation will get stuck at some
term $q\,i$: this both flags up the presence of a query node in the
decision tree, and allows us to explore the subsequent behaviour under
both possible responses to this query.

The core of our definition is couched in terms of abstract machine configurations.
We write $\Conf_q$ for the set of $\lambda_h$ configurations possibly involving $q$
(but no other free variables).
We write $a \simeq b$ for Kleene equality: either both $a$ and $b$ are
undefined or both are defined and $a = b$.

It is convenient to define the timed tree and then extract the untimed one from it:

\begin{definition}\label{def:model-construction}
 (i) Define $\tr: \Conf_q \to \Addr \pto (\Lab \times \N)$ to be the
  minimal family of partial functions satisfying the following
  equations:
{\small
\begin{mathpar}
\ba{@{}r@{~}c@{~}l@{\qquad}l@{}}
  \tr(\cek{\Return\;W \mid \env \mid \nil})\, \nil  &~=~& (!b, 0),
                                                    &\text{if }\val{W}\env = b \smallskip\\
%
  \tr(\cek{z\,V \mid \env \mid \kappa})\, \nil  &~=~& (?\val{V}{\env}, 0),
                                                    &\text{if } \gamma(z) = q \smallskip\\
  \tr(\cek{z\,V \mid \env \mid \kappa})\, (b \cons bs) &~\simeq~& \tr(\cek{\Return\;b \mid \env \mid \kappa})\,bs,
                                                                & \text{if } \gamma(z) = q \smallskip\\
  \tr(\cek{M \mid \env \mid \kappa})\, bs &~\simeq~& \mathsf{inc}\,(\tr(\cek{M' \mid \env' \mid \kappa'})\, bs),
  &\text{if } \cek{M \mid \env \mid \kappa} \stepsto \cek{M' \mid \env' \mid \kappa'}
\ea
\end{mathpar}}%
Here $\mathsf{inc}(\ell, s) = (\ell, s + 1)$, and in all of the above equations
$\gamma(q) = \gamma'(q) = q$.
Clearly $\tr(\conf)$ is a timed decision tree for any $\conf \in \Conf_q$.

(ii) The timed decision tree of a computation term is obtained by placing it in
the initial configuration:
$\tr(M) \defas \tr(\cek{M, \emptyset[q \mapsto q], \kappa_0})$.

(iii) The timed decision tree of a closed value $P:\Predicate$ is $\tr(P\,q)$.
Since $q$ plays the role of a dummy argument, we will usually omit it and write $\tr(P)$ for $\tr(P\,q)$.

(iv) The untimed decision tree $\tru(P)$ is obtained from $\tr(P)$ via
first projection: $\tru(P) = \labs(\tr(P))$.
\end{definition}

If the execution of a configuration $\conf$ runs forever or gets stuck at an unhandled operation,
then $\tr(\conf)(bs)$ will be undefined for all $bs$.
Although this is admitted by our definition of decision tree, we wish to exclude such behaviours
for the terms we accept as valid predicates. Specifically, we frame the following definition:

\begin{definition}  \label{def:n-predicate}
A decision tree $\tree$ is an \emph{$n$-predicate tree} if it satisfies the following:
\begin{itemize}
  \item For every query $\query k$ appearing in $\tree$, we have $k \in \N_n$.
  \item Every query node has both children present:
  \[ \forall bs \in \Addr,\, k \in \N_n,\, b \in \B.~ \tree(bs) = \query k \Implies \snoc{bs}{b} \in dom(\tree) \]
  \item All paths in $\tree$ are finite (so every maximal path terminates in an answer node).
\end{itemize}
A closed term $P: \Predicate$ is a \emph{(syntactic) $n$-predicate} if $\tru(P)$ is an $n$-predicate tree.
\end{definition}

If $\tree$ is an $n$-predicate tree, clearly any semantic $n$-point $\pi$ gives rise to a path $b_0 b_1 \dots $
through $\tree$, given inductively by:
{\small
\[  \forall j.~ \mbox{if~} \tau(b_0\dots b_{j-1}) = \query k_j \mbox{~then~} b_j = \pi(k_j)  \]
}%
This path will terminate at some answer node $b_0 b_1 \dots b_{r-1}$ of $\tree$,
and we may write $\tree \bullet \pi \in \B$ for the answer at this leaf.

\begin{proposition}  \label{prop:pred-tree}
If $P$ is an $n$-predicate and $Q$ is an $n$-point, then
$P\,Q \reducesto^\ast \Return\;b$ where $b = \tru(P) \bullet \val{Q}$.
\end{proposition}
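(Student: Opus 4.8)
The plan is to run $P$ on the genuine point $Q$ on the abstract machine and show that this execution tracks, transition for transition, the tree-construction run of $P$ on the dummy point $q$, with each abstract query $q\,i$ replaced by an actual evaluation of $Q\,i$. For $\conf \in \Conf_q$ write $\conf^Q$ for the configuration obtained by replacing the machine value $q$, wherever it occurs in the environment or the continuation, by the closure $\val{Q}{\emptyset}$. Then the tree-construction start state $\conf_0 = \cek{P\,q \mid \emptyset[q \mapsto q] \mid \kappa_0}$ maps to $\conf_0^Q = \cek{P\,q \mid \emptyset[q \mapsto \val{Q}{\emptyset}] \mid \kappa_0}$, which has the same execution as $\cek{P\,Q \mid \emptyset \mid \kappa_0}$ (the spurious $q$-binding is dead, since $P$ is closed). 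Since the proposition is phrased with $\reducesto^\ast$, I would finish by appealing to the handler-machine correctness result (Appendix) to translate the machine run back into the reduction semantics.

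First I would establish a \emph{substitution lemma}: if $\conf \stepsto \conf'$ is any transition other than an application of $q$ (that is, $\conf$ is not of the form $\cek{z\,V \mid \env \mid \kappa}$ with $\env(z) = q$), then $\conf^Q \stepsto (\conf')^Q$. This reduces to the commutation $(\val{V}{\env})^Q = \val{V}{\env^Q}$, immediate by structural induction on $V$, together with the observation that no non-query rule inspects the value $q$. The latter uses that $P$ is well typed and $q : \Point$, so by subject reduction $q$ only ever occupies the function position of an application; hence it is never scrutinised by (\mlab{Split}), (\mlab{CaseL}), (\mlab{CaseR}), nor mistaken for a resumption by (\mlab{Resume}). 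I would also record the trivial fact that boolean and numeric machine values contain no occurrence of $q$, so $(-)^Q$ fixes them.

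The heart of the argument is the treatment of a query. When $\conf$ first reaches a query, $\conf \stepsto^\ast \cek{z\,V \mid \env \mid \kappa}$ via non-query steps with $\env(z) = q$; the tree records $\query i$ for $i = \val{V}{\env} \in \N_n$ and, by the query-continuation equation, resumes from $\cek{\Return\;b_0 \mid \env \mid \kappa}$ where $b_0 = \val{Q}(i)$ is the response dictated by $\val{Q}$. On the concrete side, $\cek{z\,V \mid \env \mid \kappa}^Q$ applies $Q$ to $i$; because $Q$ is an $n$-point and $i \in \N_n$ we have $Q\,i \reducesto^\ast \Return\;b_0$ with $b_0 = \val{Q}(i)$, and since $Q\,i$ is closed and reduces to a value it must handle all of its own operations internally. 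Hence by machine correctness and a context-independence lemma — evaluating such a self-contained subcomputation under an arbitrary continuation $\kappa^Q$ returns its value while restoring $\kappa^Q$ — the machine reaches $\cek{\Return\;W_0 \mid \env_0 \mid \kappa^Q}$ with $\val{W_0}{\env_0} = b_0$. A one-line lemma that the machine's processing of a $\Return$ depends only on the returned machine value then identifies the subsequent behaviour of this state with that of $\cek{\Return\;b_0 \mid \env \mid \kappa}^Q$.

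With these pieces in place I would prove, by induction on the number $r$ of queries on the $\val{Q}$-path from $\conf$ to its answer leaf, that $\conf^Q \stepsto^\ast \cek{\Return\;W \mid \env' \mid \nil}$ with $\val{W}{\env'} = b$, where $\ans b$ labels that leaf. The base case $r = 0$ is the substitution lemma applied along the non-query steps the tree equations witness from $\conf$ to a leaf $\cek{\Return\;W \mid \env \mid \nil}$, noting the leaf answer is fixed by $(-)^Q$; the inductive step chains the substitution lemma up to the first query, the query-handling step across it, and the induction hypothesis on the resumed configuration. Applying this to $\conf_0$ and transferring along machine correctness yields $P\,Q \reducesto^\ast \Return\;b$ with $b = \tru(P) \bullet \val{Q}$. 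The hard part will be the context-independence property in the presence of handlers: I must be sure that $Q\,i$, run underneath the ambient continuation $\kappa^Q$ built by $P$, neither observes nor disturbs $\kappa^Q$. This is exactly guaranteed by $Q$ being an $n$-point — $Q\,i$ reduces to $\Return\;b_0$ in the standalone semantics, so it cannot leak an unhandled operation into $P$'s handlers — but making the balancing of pushed and popped handler frames precise, and reconciling it with the appendix's correctness statement, is where the real work lies.
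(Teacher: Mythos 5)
Your proposal is correct and follows essentially the same route as the paper, which simply states that the result holds ``by interleaving the computation for the relevant path through $\tru(P)$ with computations for queries to $Q$, and appealing to the correspondence between the small-step reduction and abstract machine semantics,'' omitting the details as routine. Your substitution lemma, query-interleaving step, and induction on the number of queries are precisely a fleshed-out version of that omitted argument.
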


\begin{proof}
By interleaving the computation for the relevant path through $\tru(P)$
with computations for queries to $Q$, and appealing to the correspondence between
the small-step reduction and abstract machine semantics.
We omit the routine details.
\end{proof}

It is thus natural to define the \emph{denotation} of an $n$-predicate
$P$ to be the semantic $n$-predicate $\val{P}$ given by
$\val{P}(\pi) = \tru(P) \bullet \pi$.

As mentioned earlier, we shall also be interested in a more constrained
class of trees and predicates:

\begin{definition}[$n$-standard trees and predicates]
An $n$-predicate tree $\tree$ is said to be $n$-standard if the following hold:
\begin{itemize}
\item The domain of $\tree$ is precisely $\Addr_n$, the set of bit vectors of length $\leq n$.
\item There are no repeated queries along any path in $\tree$:
  \[ \forall bs, bs' \in dom(\tree),\, k \in \N_n.~ bs \sqsubseteq bs' \wedge \tree(bs)=\tau(bs')=\query k \Implies bs=bs' \]
\end{itemize}
A timed decision tree $\tree$ is $n$-standard if its underlying untimed
decision tree ($bs \mapsto \tree(bs).1$) is so.
An $n$-predicate $P$ is $n$-standard if $\tr(P)$ is $n$-standard.
\end{definition}

Clearly, in an $n$-standard tree, each of the $n$ queries $\query 0,\dots, \query(n-1)$
appears exactly once on the path to any leaf, and there are $2^n$ leaves, all of them answer nodes.

\subsection{Specification of Counting Programs}
\label{sec:counting}

We can now specify what it means for a program
$\Countprog : \Predicate \to \Nat$ to implement counting.

\begin{definition} \label{def:counting-function}
(i) The \emph{count} of a semantic $n$-predicate $\Pi$, written $\sharp \Pi$,
is simply the number of semantic $n$-points $\pi \in \B^n$ for which $\Pi(\pi)=\True$.

(ii) If $P$ is any $n$-predicate, we say that $\Countprog$ \emph{correctly counts} $P$ if
$\Countprog\,P \reducesto^\ast \Return\;m$, where $m = \sharp \val{P}$.
\end{definition}

This definition gives us the flexibility to talk about counting
programs that operate on various classes of predicates, allowing us to
state our results in their strongest natural form. On the positive
side, we shall shortly see that there is a single `efficient' program
in $\HCalc$ that correctly counts all $n$-standard $\lambda_h$
predicates for every $n$; in Section~\ref{sec:beyond} we improve this
to one that correctly counts \emph{all} $n$-predicates of $\lambda_h$.
On the negative side, we shall show that an $n$-indexed family of
counting programs written in $\BCalc$, even if only required to work
correctly on $n$-standard $\lambda_b$ predicates, can never compete
with our $\HCalc$ program for asymptotic efficiency even in the most
favourable cases.

\subsection{Efficient Generic Count with Effects}
\label{sec:effectful-counting}

We now present the simplest version of our effectful implementation of
counting: one that works on $n$-standard predicates.

Our program uses a variation of the handler for
nondeterministic computation that we gave in
Section~\ref{sec:handlers-primer}.
The main idea is to implement points as `nondeterministic computations'
using the $\Branch$ operation such that the handler may respond to every query twice,
by invoking the provided resumption with $\True$ and subsequently $\False$.
The key insight is that the resumption restarts computation at the invocation
site of $\Branch$, which means that prior computation need not be repeated.
In other words, the resumption ensures that common portions of computations
prior to any query are shared between both branches.

We assert that $\Branch : \One \to \Bool \in \Sigma$ is a
distinguished operation that may not be handled in the definition of
any input predicate (it has to be forwarded according to the default
convention).
The algorithm is then as follows.
{\small
\[
  \bl
    \ECount : ((\Nat \to \Bool) \to \Bool) \to \Nat\\
    \ECount\,pred \defas
      \bl
      \Handle\; pred\,(\lambda\_. \Do\; \Branch\; \Unit)\; \With\\
      \quad\ba[t]{@{}l@{\hspace{1.5ex}}c@{\hspace{1.5ex}}l@{}}
           \Val\, x         &\mapsto& \If\; x\; \Then\;\Return\; 1 \;\Else\;\Return\; 0 \\
           \Branch\,\Unit\,\,r &\mapsto&
              \ba[t]{@{}l}
                \Let\;x_\True \revto  r\,\True\; \In\\
                \Let\;x_\False \revto r\,\False\;\In\;
                x_\True + x_\False \\
              \ea
           \ea \\
      \el
  \el
\]}%
The handler applies predicate $pred$ to a single `generic point'
defined using $\Branch$. The boolean return value is interpreted as a
single solution, whilst $\Branch$ is interpreted by alternately
supplying $\True$ and $\False$ to the resumption and summing the
results.  The sharing enabled by the use of the resumption is exactly
the `magic' we need to make it possible to implement generic count
more efficiently in $\HCalc$ than in $\BCalc$.
A curious feature of $\ECount$ is that it works for all $n$-standard
predicates without having to know the value of $n$. This is because
the generic point $(\Superpoint)$ informally serves as a
`superposition' of all possible points.

We may now articulate the crucial correctness and efficiency
properties of $\ECount$.

\begin{theorem}\label{thm:complexity-effectful-counting}
  The following hold for any $n \in \N$ and any $n$-standard predicate $P$ of $\HCalc$:
  \begin{enumerate}
  \item $\ECount$ correctly counts $P$.
  \item The number of machine steps required to evaluate $\ECount~P$ is
{\small
  \[
     \left( \displaystyle\sum_{bs \in \Addr_n} \steps(\tr(P))(bs) \right) ~+~ \BigO(2^n)
  \]}%
\end{enumerate}
\end{theorem}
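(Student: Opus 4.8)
The plan is to prove both claims together by setting up a tight correspondence between the handler-machine execution of $\ECount~P$ and the decision-tree construction $\tr(P)$ of Definition~\ref{def:model-construction}. Since $P$ is $n$-standard, $\tru(P)$ is a complete binary tree of depth $n$ with $2^n-1$ internal nodes and $2^n$ leaves, so all inductions are well-founded. The central observation is that each internal node at address $bs$ corresponds to exactly one firing of the $\Branch$ clause: when $P$ interrogates the generic point $\Superpoint$, the resulting $\Do~\Branch~\Unit$ is caught by $\mlab{Handle-Op}$, reifying a resumption $r_{bs}$ that captures $P$'s computation from that query onwards. Invoking $r_{bs}~\True$ and $r_{bs}~\False$ then re-enters the machine along the two child edges, and because the reified pure continuation is \emph{reused} rather than recomputed (the $\BigO(1)$ continuation-copying of Section~\ref{sec:realisability}), the computation attached to each tree edge is executed exactly once.

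First I would establish a \emph{simulation lemma}: for each $bs \in dom(\tru(P))$, resuming $P$ at node $bs$ drives the machine through precisely the edge-by-edge computation recorded by $\tr(P)$ on the subtree rooted at $bs$. The $q$-free reduction steps of $P$ are matched one-for-one against the stepping equation defining $\tr$, while each query of the tree is matched by an $\mlab{App}$ (applying the generic point) followed by $\mlab{Handle-Op}$, and each leaf by an $\mlab{RetHandler}$ into the val clause. The only discrepancies between the abstract run (identity handler, free $q$) underlying $\tr(P)$ and the concrete $\ECount~P$ run (handler $H$, the point $\Superpoint$) are these point-applications, the $\Branch$-clause bookkeeping at internal nodes, and the $\If$ in the val clause at leaves, all of which are $\BigO(1)$ per node.

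For correctness I would run a structural induction on the subtree at $bs$, showing that the resumption at $bs$ evaluates to $\Return~c_{bs}$, where $c_{bs}$ is the number of $\ans\True$ leaves in that subtree. The leaf case is immediate from the val clause ($\True\mapsto 1$, $\False\mapsto 0$); the internal case follows because the $\Branch$ clause returns $r_{bs}~\True + r_{bs}~\False$, and the induction hypothesis identifies each summand with the $\ans\True$-leaf count of the corresponding child subtree $\snoc{bs}{\True}$, $\snoc{bs}{\False}$. Since $\tru(P)$ is $n$-standard, its $\ans\True$ leaves are in bijection with the points of $\B^n$ satisfying the predicate, so at the root $c_\nil = \sharp\val{P}{}$, and hence $\ECount~P \reducesto^\ast \Return~(\sharp\val{P}{})$ by Proposition~\ref{prop:pred-tree}; this is claim~(1).

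For the step count I would decorate the same induction with a cost $T(bs)$ measuring the machine steps taken from reaching node $bs$ until its subtree is fully processed. The simulation lemma yields $T(bs)=\BigO(1)$ at a leaf and, at an internal node, $T(bs) = T(\snoc{bs}{\True}) + T(\snoc{bs}{\False}) + \steps(\tr(P))(\snoc{bs}{\True}) + \steps(\tr(P))(\snoc{bs}{\False}) + \BigO(1)$, where the two $\steps$ terms charge the once-only computation on the child edges and the $\BigO(1)$ absorbs the per-node administrative transitions above. Adding the $\steps(\tr(P))(\nil)$ steps spent reaching the first query plus a constant for the enclosing $\mlab{Handle}$ and final $\mlab{RetHandler}$, and unfolding the recurrence, every address contributes its $\steps$ value exactly once and every node contributes $\BigO(1)$, which telescopes to $\sum_{bs\in\Addr_n}\steps(\tr(P))(bs) + \BigO(2^n)$ since $|\Addr_n| = \BigO(2^n)$; this is claim~(2). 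The hardest part is the simulation lemma itself, and specifically verifying that invoking a resumption replays the edge computation in exactly the step count recorded by $\tr$ --- neither repeating prior work nor incurring hidden copying cost --- which is precisely where the persistent-continuation cost model of Section~\ref{sec:realisability} and the bookkeeping of the $\tr$ equations must be carefully reconciled.
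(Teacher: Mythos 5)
Your proposal is correct and follows essentially the same route as the paper: the paper's proof is likewise a downward induction over tree addresses $bs$, establishing that the configuration reached at node $bs$ correctly counts its subtree within $\sum_{bs' \sqsupset bs}\steps(\tr(P))(bs') + 9(2^{n-j}-1) + 2\cdot 2^{n-j}$ steps, with your ``simulation lemma'' playing the role of the paper's auxiliary lemma relating the machine's traversal of each edge to the step counts recorded in the (decorated) timed decision tree. The only differences are presentational: the paper makes the per-node constants explicit where you write $\BigO(1)$, and your appeal to Proposition~\ref{prop:pred-tree} for the final correctness step is better replaced by the direct observation that the $\ans\True$-leaf count of an $n$-standard tree equals $\sharp\val{P}$.
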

\begin{proof}[Proof Outline]
  Suppose $bs \in \Addr_n$, with $|bs|=j$.  From the construction of
  $\tr(P)$, one may easily read off a configuration $\conf_{bs}$ whose
  execution is expected to compute the count for the subtree below
  node $bs$, and we can explicitly describe the form $\conf_{bs}$ will
  have.  We write $\dec{Hyp}(bs)$ for the claim that $\conf_{bs}$
  correctly counts this subtree, and does so within the following
  number of steps: {\small
  \[
     \left( \displaystyle\sum_{bs' \in \Addr_n,\; bs' \sqsupset bs} \steps(\tr(P))(bs') \right) ~+~ 9 * (2^{n-j} - 1) + 2*2^{n-j}
  \]
}%
The $9*(2^{n-j}-1)$ expression is the number of machine steps
contributed by the $\Branch$-case inside the handler, whilst the
$2*2^{n-j}$ expression is the number of machine steps contributed by
the $\Val$-case.
We prove $\dec{Hyp}(bs)$ by a laborious but routine downwards
induction on the length of $bs$. The proof combines counting of
explicit machine steps with `oracular' appeals to the assumed
behaviour of $P$ as modelled by $\tr(P)$. Once
$\dec{Hyp}(\nil)$ is established, both halves of the theorem
follow easily.
Full details are given in Appendix~\extendedref{sec:positive-theorem}.
\end{proof}

The above formula can clearly be simplified for certain reasonable
classes of predicates. For instance, suppose we fix some constant
$c \in \N$, and let $\mathcal{P}_{n,c}$ be the class of all
$n$-standard predicates $P$ for which all the edge times
$\steps(\tr(P))(bs)$ are bounded by $c$. (Clearly, many reasonable
predicates will belong to $\mathcal{P}_{n,c}$ for some modest value of
$c$.) Since the number of sequences $bs$ in question is less than
$2^{n+1}$, we may read off from the above formula that for predicates
in $\mathcal{P}_{n,c}$, the runtime of $\ECount$ is $\BigO(c2^n)$.

Alternatively, should we wish to use the finer-grained cost model that
assigns an $O(\log |\gamma|)$ runtime to each abstract machine step
(see Section~\ref{sec:realisability}), we may note that any
environment $\gamma$ arising in the computation contains at most $n$
entries introduced by the let-bindings in $\ECount$, and (if $P \in
\mathcal{P}_{n,c}$) at most $\BigO(cn)$ entries introduced by $P$.
Thus, the time for each step in the computation remains $\BigO(\log c
+ \log n)$, and the total runtime for $\ECount$ is $\BigO(c 2^n (\log
c + \log n))$.

One might also ask about the execution time for an implementation of
$\HCalc$ that performs genuine copying of continuations, as in systems
such as \citet{mlton}.
As MLton copies the entire continuation (stack), whose size is
$\BigO(n)$, at each of the $2^n$ branches, continuation copying alone
takes time $\BigO(n2^n)$ and the effectful implementation offers no
performance benefit (Table~\ref{tbl:results-mlton}).
More refined implementations \citep{FarvardinR20, FlattD20} that are
able to take advantage of delimited control operators or sharing in
copies of the stack can bring the complexity of continuation copying
back down to $\BigO(2^n)$.

Finally, one might consider another dimension of cost, namely the
space used by $\ECount$.
Consider a class $\mathcal{Q}_{n,c,d}$ of $n$-standard predicates $P$
for which the edge times in $\tr(P)$ never exceed $c$ and the sizes of
pure continuations never exceed $d$.
If we consider any $P \in \mathcal{Q}_{n,c,d}$ then the total number
of environment entries is bounded by $cn$, taking up space
$\BigO(cn(\log cn))$.
We must also account for the pure continuations. There are $n$ of
these, each taking at most $d$ space.
Thus the total space is $\BigO(n(d + c(\log c + \log n)))$.

\section{Pure Generic Count: A Lower Bound}
\label{sec:pure-counting}

\newcommand{\naivecount}{\dec{naivecount}}
\newcommand{\lazycount}{\dec{lazycount}}
\newcommand{\BergerCount}{\dec{BergerCount}}
\newcommand{\bestshot}{\dec{bestshot}}
\newcommand{\FF}{\mathcal{F}}
\newcommand{\GG}{\mathcal{G}}

We have shown that there is an implementation of generic count in
$\HCalc$ with a runtime bound of $\BigO(2^n)$ for certain well-behaved
predicates. We now prove that no implementation in $\BCalc$ can match
this: in fact, we establish a lower bound of $\Omega(n2^n)$ for the
runtime of any counting program on \emph{any} $n$-standard predicate.
This mathematically rigorous characterisation of the efficiency gap
between languages with and without first-class control constructs is
the central contribution of the paper.

One might ask at this point whether the claimed lower bound could not
be obviated by means of some known continuation passing style (CPS) or
monadic transform of effect handlers
\cite{HillerstromLAS17,Leijen17}. This can indeed be done, but only by
dint of changing the type of our predicates $P$ --- which, as noted in
the introduction, would defeat the purpose of our enquiry.
Our intention is precisely to investigate the relative power of various
languages for manipulating predicates that are given to us in a
certain way which we do not have the luxury of choosing.

To get a feel for the issues that our proof must address, let us
consider how one might construct a counting program in
$\BCalc$.  The \naive approach, of course, would be simply to apply the
given predicate $P$ to all $2^n$ possible $n$-points in turn, keeping
a count of those on which $P$ yields true.  It is a routine exercise to
implement this approach in $\BCalc$, yielding (parametrically in $n$)
a program
{\small
\[
  \naivecount_n ~: ((\Nat_n \to \Bool) \to \Bool) \to \Nat
\]}%
Since the evaluation of an $n$-standard predicate on an individual
$n$-point must clearly take time $\Omega(n)$, we have that the
evaluation of $\naivecount_n$ on any $n$-standard predicate $P$ must
take time $\Omega(n2^n)$. If $P$ is not $n$-standard, the $\Omega(n)$
lower bound need not apply, but we may still say that the evaluation
of $\naivecount_n$ on \emph{any} predicate $P$ (at level $n$) must
take time $\Omega(2^n)$.

One might at first suppose that these properties are inevitable for
any implementation of generic count within $\BCalc$, or indeed any
purely functional language: surely, the only way to learn something
about the behaviour of $P$ on every possible $n$-point is to apply $P$
to each of these points in turn?  It turns out, however, that the
$\Omega(2^n)$ lower bound can sometimes be circumvented by
implementations that cleverly exploit \emph{nesting} of calls to $P$.
The germ of the idea may be illustrated within $\BCalc$ itself.
Suppose that we first construct some program
{\small
\[
  \bestshot_n ~: ((\Nat_n \to \Bool) \to \Bool) \to (\Nat_n \to \Bool)
\]}%
which, given a predicate $P$, returns some $n$-point $Q$ such that
$P~Q$ evaluates to true, if such a point exists, and any point at all
if no such point exists.
(In other words, $\bestshot_n$ embodies Hilbert's choice operator
$\varepsilon$ on predicates.)
It is once again routine to construct such a program by \naive means;
and we may moreover assume that for any $P$, the evaluation of
$\bestshot_n\;P$ takes only constant time, all the real work being
deferred until the argument of type $\Nat_n$ is supplied.

Now consider the following program:
{\small
\[
  \lazycount_n \defas \lambda pred.\; \If \; pred~(\bestshot_n\;pred)\; \Then\; \naivecount_n\;pred\; \Else\; \Return\;0
\]}%
Here the term $pred~(\bestshot_n~pred)$ serves to test whether there
exists an $n$-point satisfying $pred$: if there is not, our count
program may return $0$ straightaway.  It is thus clear that
$\lazycount_n$ is a correct implementation of generic count, and also
that if $pred$ is the predicate $\lambda q.\False$ then
$\lazycount_n\;pred$ returns $0$ within $O(1)$ time, thus violating
the $\Omega(2^n)$ lower bound suggested above.

This might seem like a footling point, as $\lazycount_n$ offers this
efficiency gain \emph{only} on (certain implementations of) the
constantly false predicate.  However, it turns out that by a recursive
application of this nesting trick, we may arrive at a generic count
program that spectacularly defies the $\Omega(2^n)$ lower bound for an
interesting class of (non-$n$-standard) predicates, and indeed proves
quite viable for counting solutions to `$n$-queens' and similar
problems.  We shall refer to this program as $\BergerCount$, as it is
modelled largely on Berger's PCF implementation of the so-called
\emph{fan functional}~\citep{Berger90, LongleyN15}. This program is of
interest in its own right and is briefly presented in
Appendix~\extendedref{sec:berger-count}. It actually requires a mild
extension of $\BCalc$ with a `memoisation' primitive to achieve the
effect of call-by-need evaluation; but such a language can still be
seen as purely `functional' in the same sense as Haskell.

In the meantime, however, the moral is that the use of \emph{nesting}
can lead to surprising phenomena which sometimes defy intuition
(\citet{Escardo07} gives some striking further examples). What we now
wish to show is that for \emph{$n$-standard} predicates, the \naive
lower bound of $\Omega(n2^n)$ cannot in fact be circumvented. The
example of $\BergerCount$ both highlights the need for a rigorous
proof of this and tells us that such a proof will need to pay
particular attention to the possibility of nesting.

We now proceed to the proof itself. We here present the argument in
the basic setting of $\BCalc$; later we will see how a more delicate
argument applies to languages with mutable state
(Section~\ref{sec:mutable-state}).

As a first step, we note that where lower bounds are concerned, it
will suffice to work with the small-step operational semantics of
$\BCalc$ rather than the more elaborate abstract machine model
employed in Section~\ref{sec:base-abstract-machine}. This is because,
as observed in Section~\ref{sec:base-abstract-machine}, there is a
tight correspondence between these two execution models such that for
the evaluation of any closed term, the number of abstract machine
steps is always at least the number of small-step reductions.  Thus,
if we are able to show that the number of small-step reductions for
any generic program program in $\BCalc$ on any $n$-standard predicate
is $\Omega(n2^n)$, this will establish the desired lower bound on the
runtime.

Let us suppose, then, that $\Countprog$ is a program of $\BCalc$ that correctly counts
all $n$-standard predicates of $\BCalc$ for some specific $n$.
We now establish a key lemma, which vindicates the \naive intuition
that if $P$ is $n$-standard, the only way for $\Countprog$ to discover the correct
value for $\sharp \val{P}$ is to perform $2^n$ separate applications $P\;Q$
(allowing for the possibility that these applications need not
be performed `in turn' but might be nested in some complex way).

\begin{lemma}[No shortcuts]\label{lem:no-shortcuts}
  Suppose $\Countprog$ correctly counts all $n$-standard predicates of $\BCalc$.
  If $P$ is an $n$-standard predicate,
  then $\Countprog$ applies $P$ to at least $2^n$ distinct $n$-points.
  More formally, for any of the $2^n$ possible semantic $n$-points
  $\pi : \N_n \to \B$, there is a term $\EC[P~Q]$ appearing in the
  small-step reduction of $\Countprog~P$ such that $Q$ is an $n$-point and $\val{Q} = \pi$.
\end{lemma}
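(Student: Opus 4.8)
The plan is to argue by contradiction, using an adversarial construction that exploits the purely functional nature of $\BCalc$. Fix an arbitrary semantic $n$-point $\pi_0 : \N_n \to \B$ and suppose, for contradiction, that no subterm $\EC[P\,Q]$ with $Q$ an $n$-point and $\val{Q} = \pi_0$ ever appears in the small-step reduction of $\Countprog\,P$. The goal is to exhibit a second $n$-standard predicate that $\Countprog$ cannot distinguish from $P$, yet whose count differs, contradicting correctness.

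First I would construct a \emph{twin} predicate $P'$ that agrees with $P$ on every point except $\pi_0$. Since $P$ is $n$-standard, its untimed decision tree $\tru(P)$ is a complete tree on $\Addr_n$ in which $\pi_0$ determines a unique path to an answer leaf carrying $b_0 = \tru(P)\bullet\pi_0$. Let $\tree'$ be $\tru(P)$ with this single leaf relabelled $\ans(\neg b_0)$, and let $P'$ be the $\BCalc$ term that directly encodes $\tree'$: at each internal node $\query k$ it performs $\Let\,b \revto q\,k\,\In\,\If\,b\,\Then\,\dots\,\Else\,\dots$ and at each leaf it returns the recorded answer. By construction $\tru(P') = \tree'$ is again $n$-standard, and $\val{P'}$ agrees with $\val{P}$ except at $\pi_0$, so $\sharp\val{P'} = \sharp\val{P} \pm 1 \neq \sharp\val{P}$.

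The heart of the argument is an \emph{indistinguishability} claim: for every closed point $Q$, the computations $P\,Q$ and $P'\,Q$ exhibit identical behaviour (the same returned boolean, or joint divergence) \emph{unless} $Q$ is an $n$-point with $\val{Q} = \pi_0$. Appealing to the correspondence between reduction and the tree extraction of Definition~\ref{def:model-construction} (in the spirit of Proposition~\ref{prop:pred-tree}, generalised to partial points), both $P\,Q$ and $P'\,Q$ follow the common path through $\tru(P) = \tru(P')$ dictated by $Q$'s responses, differing in outcome only if that path reaches the relabelled $\pi_0$-leaf. The key observation is that, in an $n$-standard tree, reaching this leaf forces $Q$ to have answered $\query k$ with $\pi_0(k)$ for \emph{each} $k \in \N_n$ --- since each coordinate is queried exactly once along the path --- so that $Q\,k \reducesto^\ast \Return\,\pi_0(k)$ as a standalone reduction for every $k$, i.e.\ $Q$ is an $n$-point denoting $\pi_0$.

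Given this, I would run a simulation argument showing that the reductions of $\Countprog\,P$ and $\Countprog\,P'$ proceed in tandem. Working by induction along the (terminating, by correctness) reduction of $\Countprog\,P$, whenever the predicate is applied to an argument $Q$ the two runs have so far constructed the same $Q$ (up to replacing $P$ by $P'$ inside it); by the standing assumption this $Q$ is not an $n$-point denoting $\pi_0$, so by indistinguishability $P$ and $P'$ return the same result and the runs stay aligned. Hence $\Countprog\,P$ and $\Countprog\,P'$ return the same number $m$. But $\Countprog$ correctly counts both $n$-standard predicates, forcing $m = \sharp\val{P}$ and $m = \sharp\val{P'}$, which is impossible. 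I expect the main obstacle to be making this simulation precise: $P$ and $P'$ have different internal structure, and $\Countprog$ may nest calls and build arguments that themselves mention the predicate, so the correspondence must be phrased as a relation that uniformly replaces the value $P$ by $P'$ and aligns the two reductions only at the granularity of predicate applications, relying on indistinguishability to match the intervening sub-computations.
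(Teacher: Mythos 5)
Your proposal follows essentially the same route as the paper's proof: the same twin predicate $P'$ obtained by negating the answer at the $\pi_0$-leaf of the decision tree and encoding the modified tree with nested conditionals, the same use of $n$-standardness to force any argument reaching that leaf to be an $n$-point denoting $\pi_0$, and the same Milner-context-lemma-style replacement argument by induction over the reduction of $\Countprog~P$. The ``main obstacle'' you defer at the end is precisely where the paper's proof does its real work --- Lemma~\ref{lem:replacement} establishes, by \emph{simultaneous} induction on reduction length, the two mutually dependent claims that safe computations $Q[P]~k$ and $P~Q[P]$ remain valid with $P$ replaced by $P'$ throughout; note that your indistinguishability claim must accordingly be stated for $Q[P]$ versus $Q[P']$ (not for one fixed $Q$ against both predicates) and cannot be proved independently of the simulation, since the points built by $\Countprog$ may themselves embed the predicate.
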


\begin{proof}
  Suppose for a contradiction that $\pi$ is some semantic $n$-point
  such that no application $P~Q$ with $\val{Q}=\pi$ ever arises in the
  course of computing $\Countprog~P$. Let $\tree$ be the untimed
  decision tree for $P$. Let $l$ be the maximal path through $\tau$
  associated with $\pi$: that is, the one we construct by responding
  to each query $\query k$ with $\pi(k)$. Then $l$ is a leaf node such
  that $\tree(l) = \ans (\tree \bullet \pi)$. We now let $\tree'$ be
  the tree obtained from $\tree$ by simply negating this answer value
  at $l$.

  It is a simple matter to construct a $\BCalc$ $n$-standard predicate
  $P'$ whose decision tree is $\tree'$. This may be done just by
  mirroring the structure of $\tree'$ by nested $\If$ statements; we
  omit the easy details.

  Since the numbers of true-leaves in $\tree$ and $\tree'$ differ by
  1, it is clear that if $\Countprog$ indeed correctly counts all
  $n$-standard predicates, then the values returned by $\Countprog~P$
  and $\Countprog~P'$ will have an absolute difference of 1.  On the
  other hand, we shall argue that if the computation of $\Countprog~P$
  never actually `visits' the leaf $l$ in question, then $\Countprog$
  will be unable to detect any difference between $P$ and $P'$.

  The situation is reminiscent of Milner's \emph{context
    lemma}~\citep{Milner77}, which (loosely) says that essentially the
  only way to observe a difference between two programs is to apply
  them to some argument on which they differ.  Traditional proofs of
  the context lemma reason by induction on length of reduction
  sequences, and our present proof is closely modelled on these.

  We shall make frequent use of term contexts $M[-]$ with a hole of
  type $\Predicate$ (which may appear zero, one or more times in $M$)
  in order to highlight particular occurrences of $P$ within a term.
  The following definition enables us to talk about computations that
  avoid the critical point $\pi$:

  \begin{definition}[Safe terms]\label{def:safe}
  If $M[-]$ is such a context of ground type, let us say $M[-]$ is \emph{safe} if
  \begin{itemize}
    \item $M[P]$ is closed, and $M[P] \reducesto^\ast \Return\;W$ for some closed
    ground type value $W$;
    \item For any term $\EC[P~Q]$ appearing in the reduction of $M[P]$, where the
    applicand $P$ in $P~Q$ is a residual of one of the abstracted occurrences in $M[P]$,
    we have that $\val{Q} \neq \pi$.
  \end{itemize}
  We may express this as `$M[P]$ is safe' when it is clear which occurrences of $P$
  we intend to abstract.
  \end{definition}

  For example, our current hypotheses imply that $\Countprog~P$ is safe
  (formally, $\Countprog'[-] \defas \Countprog\;-$ is safe).

  We may now prove the following:

  \begin{lemma}  \label{lem:replacement}
    (i) Suppose $Q[-] : \Point$ and $k : \Nat$ are values such that
    $Q[P]~k$ is safe, and suppose $Q[P]~k \reducesto^m \Return\;b$
    where $m \in \N$.  Then also $Q[P']~k \reducesto^\ast \Return\;b$.

    (ii) Suppose $P~Q[P]$ is safe and $P~Q[P] \reducesto^m
    \Return\;b$. Then also
    $P'~Q[P'] \reducesto^\ast \Return\;b$.
  \end{lemma}

  We prove these claims by simultaneous induction on the computation
  length $m$.  Both claims are vacuous when $m=0$ as neither $Q[P]~k$
  nor $P~Q[P]$ is a $\Return$ term.  We therefore assume $m>0$ where
  both claims hold for all $m'<m$.

  (i) Let $p:\Predicate$ be a distinguished free variable, and
  consider the behaviour of $Q[p]~k$. If this reduces to a value
  $\Return\,W$, then also $Q[P]~k \reducesto^\ast\Return\,W$, whence
  $W = b$ and also $Q[P']~k \reducesto \Return\;b$ as required.
  Otherwise, the reduction of $Q[p]~k$ will get stuck at some term
  $M_0 = \EC_0[p~Q_0[p], p]$.
  Here the first hole in $\EC_0[-,-]$ is in the evaluation position,
  and the second hole abstracts all remaining occurrences of $p$
  within $M_0$. We may also assume that $Q_0[-]$ abstracts all
  occurrences of $p$ in $Q_0[p]$.

  Correspondingly, the reduction of $Q[P]~k$ will reach
  $\EC_0[P~Q_0[P], P]$ and then proceed with the embedded reduction of
  $P~Q_0[P]$.  Note that $P~Q_0[P]$ will be safe because $Q[P]~k$ is.
  So let us suppose that $P~Q_0[P] \reducesto^\ast \Return\;b_0$,
  whence $Q[P]~k \reducesto^\ast \EC_0[\Return\;b_0, P]$.

  We may now investigate the subsequent reduction behaviour of
  $Q[P]~k$ by considering the reduction of $\EC_0[\Return\;b_0, p]$.
  Once again, this may reduce to a value $\Return\;W$, in which case
  $W = b$ and our computation is complete.  Otherwise, the reduction
  of $\EC_0[\Return\;b_0, p]$ will get stuck at some $M_1 =
  \EC_1[p~Q_1[p], p]$, and we may again proceed as above.

  By continuing in this way, we may analyse the reduction of $Q[P]~k$
  as follows.
  {\small
  \begin{mathpar}
  \begin{eqs}
     Q[P]~k & \reducesto^\ast & \EC_0[P~Q_0[P], P] ~\reducesto^\ast~ \EC_0[\Return\;b_0, P]
            ~\reducesto^\ast~ \EC_1[P~Q_1[P], P] ~\reducesto^\ast~ \EC_1[\Return\;b_1, P] \\
            & \reducesto^\ast & \dots
            ~\reducesto^\ast~ \EC_{r-1}[P~Q_{r-1}[P], P] ~\reducesto^\ast~ \EC_{r-1}[\Return\;b_{r-1}, P]
            ~\reducesto~ \Return\;b
  \end{eqs}
  \end{mathpar}
  }%

  Here the terms $P~Q_j[P]$ will be safe, and the reductions $P~Q_j[P]
  \reducesto^\ast \Return\;b_j$ each have length $<m$. We may
  therefore apply part~(ii) of the induction hypothesis and conclude
  that also $P'~Q_j[P'] \reducesto^\ast \Return\;b_j$.
  Furthermore, the remaining segments of the above computation are all
  obtained as instantiations of `generic' reduction sequences
  involving $p$, so these segments will remain valid if $p$ is
  instantiated to $P'$. Reassembling everything, we have a valid
  reduction sequence:
  {\small
  \begin{mathpar}
  \begin{eqs}
     Q[P']~k & \reducesto^\ast & \EC_0[P'~Q_0[P'], P'] ~\reducesto^\ast~ \EC_0[\Return\;b_0, P']
            ~\reducesto^\ast~ \EC_1[P'~Q_1[P'], P'] ~\reducesto^\ast~ \EC_1[\Return\;b_1, P'] \\
            & \reducesto^\ast & \dots
            ~\reducesto^\ast~ \EC_{r-1}[P'~Q_{r-1}[P'], P'] ~\reducesto^\ast~ \EC_{r-1}[\Return\;b_{r-1}, P']
            ~\reducesto~ \Return\;b
  \end{eqs}
  \end{mathpar}
  }%
  This establishes the induction step for part~(i).

  (ii) We may apply a similar analysis to the computation of $P~Q[P]$
  to detect the places where $Q[P]$ is applied to an argument. We do
  this by considering the reduction behaviour of $P~q$, where
  $q:\Point$ is the distinguished variable that featured in
  Definition~\ref{def:model-construction}.  In this way we may analyse
  the computation of $P~Q[P]$ as:
  {\small
  \begin{mathpar}
  \begin{eqs}
     P~Q[P] & ~\reducesto^\ast~ & \EC_0[Q[P]~k_0, Q[P]] ~\reducesto^\ast~ \EC_0[\Return\;b_0, Q[P]]
                    ~\reducesto^\ast~ \EC_1[Q[P]~k_1, Q[P]]  ~\reducesto^\ast~ \dots \\
                 & ~\reducesto^\ast~ & \EC_{r-1}[Q[P]~k_{r-1}, Q[P]] ~\reducesto^\ast~ \EC_{r-1}[\Return\;b_{r-1}, Q[P]]
                   ~\reducesto~ \Return\;b
  \end{eqs}
  \end{mathpar}}
where for each $j$, the first hole in $\EC_j[-,-]$ is in evaluation
position, the term $Q[P]~k_j$ is safe, the reduction
$Q[P]~k_j \reducesto^\ast \Return\;b_j$ has length $<m$, and the
remaining portions of computation are instantiations of generic
reductions involving $q$.  By part~(i) of the induction hypothesis we
may conclude that also $Q[P']~k_j \reducesto^\ast \Return\;b_j$ for
each $j$, and for the remaining segments of computation we may
instantiate $q$ to $Q[P']$.  We thus obtain a computation exhibiting
that $P~Q[P'] \reducesto^\ast \Return\;b$.

  It remains to show that the applicand $P$ may be replaced by $P'$
  here without affecting the result. The idea here is that the
  booleans $b_0,\dots,b_{r-1}$ trace out a path through the decision
  tree for $P$; but since $P~Q[P]$ is safe, we have that $\val{Q[P]}
  \neq \pi$, and so this path does \emph{not} lead to the critical
  leaf $l$. We now have everything we need to establish that $P'~Q[P']
  \reducesto^\ast \Return\;b$ as required.

  More formally, in view of the correspondence between small-step reduction
  and abstract machine semantics, we may readily correlate the above computation of $P~Q[P]$
  with an exploration of the path $bs = b_0 \dots b_{r-1}$ in $\tau = \tru(P)$,
  leading to a leaf with label $\ans b$.
  Since $P$ is $n$-standard, this correlation shows that $r=n$, that for each $j$ we have
  $\tau(b_0\ldots b_{j-1}) = \query k_j$, and that $\{ k_0,\ldots,k_{r-1} \} = \{ 0,\dots,n-1 \}$.
  Furthermore, we have already ascertained that the values of $Q[P]$ and $Q[P']$ at $k_j$ are both $b_j$,
  whence $\val{Q[P]} = \val{Q[P']} = \pi'$ where $\pi'(k_j)=b_j$ for all $j$.
  But $P~Q[P]$ is safe, so in particular $\pi' = \val{Q[P]} \neq \pi$.
  We therefore also have $\tau'(b_0 \dots b_{j-1}) = \query k_j$ for each $j \leq r$
  and $\tau'(b_0 \dots b_{r-1}) = b$.
  Since $\tau' = \tru(P')$ and $\val{Q[P']} = \pi'$, we may conclude by Proposition~\ref{prop:pred-tree}
  that $P'~Q[P'] \reducesto^\ast \Return\;b$.
  This completes the proof of Lemma~\ref{lem:replacement}.

  To finish off the proof of Lemma~\ref{lem:no-shortcuts}, we apply the same analysis
  one last time to the reduction of $\Countprog~P$ itself. This will have the form
  {\small
  \begin{mathpar}
  \begin{eqs}
  \Countprog~P & ~\reducesto^\ast~ & \EC_0[P~Q_0[P], P] ~\reducesto^\ast \EC_0[\Return\;b_0,P]
            ~\reducesto^\ast~ \dots \\
         & ~\reducesto^\ast~ & \EC_{r-1}[P~Q_{r-1}[P], P] ~\reducesto^\ast \EC_{r-1}[\Return\;b_{r-1},P]
            ~\reducesto^\ast~ \Return\;c
  \end{eqs}
  \end{mathpar}
  }%
  where, by hypothesis, each $P~Q_j[P]$ is safe. Using Lemma~\ref{lem:replacement} we may
  replace each subcomputation $P~Q_j[P] \reducesto^\ast \Return\;b_j$ with
  $P'~Q_j[P'] \reducesto^\ast \Return\;b_j$, and so construct a computation exhibiting that
  $\Countprog~P' \reducesto^\ast \Return\;c$.

  This gives our contradiction, as the values of $\Countprog~P$ and $\Countprog~P'$
  are supposed to differ by 1.
\end{proof}

\begin{corollary}
  Suppose $K$ and $P$ are as in Lemma~\ref{lem:no-shortcuts}.  For any
  semantic $n$-point $\pi$ and any natural number $k < n$, the
  reduction sequence for $K~P$ contains a term $\FF[Q~k]$, where $\FF$
  is an evaluation context and $\val{Q}=\pi$.
\end{corollary}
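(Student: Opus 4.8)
The plan is to read off the corollary from Lemma~\ref{lem:no-shortcuts}, the $n$-standardness of $P$, and the reduction/decision-tree correspondence of Proposition~\ref{prop:pred-tree}. Fix the semantic $n$-point $\pi$. Applying Lemma~\ref{lem:no-shortcuts} at $\pi$ yields an occurrence $\EC[P~Q]$ in the reduction of $\Countprog~P$, with $Q$ an $n$-point and $\val{Q} = \pi$. Since $\EC$ is an evaluation context and $P,Q$ are values, the application $P~Q$ occupies evaluation position; hence, by determinism of the semantics, the ensuing reduction of $\Countprog~P$ simply reduces $P~Q$ in place, passing through terms $\EC[M]$ as $M$ runs over the reducts of $P~Q$ until $P~Q \reducesto^\ast \Return\;b$ (which happens by Proposition~\ref{prop:pred-tree}, with $b = \tru(P)\bullet\pi$).

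The crux is to locate, within this embedded computation, an application of $Q$ to each literal $k < n$. Here I would reuse the interleaving analysis behind Proposition~\ref{prop:pred-tree}: the reduction of $P~Q$ follows exactly the path through $\tru(P)$ selected by $\pi$, and each query node $\query k$ on that path corresponds to a configuration in which the running computation has reached a term $\EC'[Q~k]$ with $Q~k$ in evaluation position, the response $\val{Q}(k)=\pi(k)$ then dictating which child is taken. Because $P$ is $n$-standard, the $\pi$-path visits a query node labelled $\query k$ exactly once for every $k<n$; thus each coordinate $k<n$ does give rise to such a query term.

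It remains to glue the contexts together. At the query for coordinate $k$ the global term is $\EC[\EC'[Q~k]]$, and evaluation contexts are closed under composition --- substituting $\EC'$ into the hole of $\EC$ again fits the grammar $\EC ::= [\,] \mid \Let\;x \revto \EC\;\In\;N$ --- so $\FF \defas \EC[\EC'[-]]$ is an evaluation context and $\FF[Q~k]$ is the required term, with $\val{Q}=\pi$ as demanded. The one genuinely delicate step is the middle one: verifying that an abstract query node $\query k$ of the decision tree really does surface, in the substitution-based reduction with the concrete point $Q$ in place, as an application of $Q$ to precisely the literal $k$ standing in evaluation position. This is exactly the information already teased out in the proof of Proposition~\ref{prop:pred-tree}, so the remaining bookkeeping --- that $n$-standardness forces all $n$ coordinates to appear and that evaluation contexts compose --- is routine.
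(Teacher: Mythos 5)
Your proposal is correct and follows essentially the same route as the paper: invoke Lemma~\ref{lem:no-shortcuts} to obtain $\EC[P~Q]$ with $\val{Q}=\pi$, use the dialogue analysis of the reduction of $P~Q$ (the one already carried out for Lemma~\ref{lem:replacement}(ii), which via $n$-standardness shows every coordinate $k<n$ is queried exactly once along the $\pi$-path) to find the terms $\EC'[Q~k]$, and compose contexts as $\FF[-] \defas \EC[\EC'[-]]$. The paper's own proof is a three-line version of exactly this argument.
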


\begin{proof}
  Suppose $\pi \in \B^n$. By Lemma~\ref{lem:no-shortcuts}, the
  computation of $\Countprog~P$ contains some $\EC[P~Q]$ where
  $\val{Q} = \pi$, and the above analysis of the computation of $P~Q$
  shows that it contains a term $\EC'[Q~k]$ for each $k < n$. The
  corollary follows, taking $\FF[-] \defas \EC[\EC'[-]]$.
\end{proof}

This gives our desired lower bound. Since our $n$-points $Q$ are
values, it is clearly impossible that $\FF[Q~k] = \FF'[Q'~k']$ (where
$\FF,\FF'$ are evaluation contexts) unless $Q=Q'$ and $k=k'$.  We may
therefore read off $\pi$ from $\FF[Q~k]$ as $\val{Q}$.  There are thus
at least $n2^n$ distinct terms in the reduction sequence for
$\Countprog~P$, so the reduction has length $\geq n 2^n$.  We have
thus proved:

\begin{theorem}
  If $\Countprog$ is a $\BCalc$ program that correctly counts all $n$-standard $\BCalc$ predicates,
  and $P$ is any $n$-standard $\BCalc$ predicate, then the evaluation of $\Countprog~P$ must take time
  $\Omega(n2^n)$.  \qed
\end{theorem}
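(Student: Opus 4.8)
The plan is to obtain the theorem as a short counting argument layered on top of the Corollary to Lemma~\ref{lem:no-shortcuts}, after first reducing the problem to bounding the number of small-step reductions. As noted at the start of this section, for any closed term the number of abstract machine steps is at least the number of $\reducesto$-steps, so it suffices to show that the small-step reduction of $\Countprog~P$ has length $\Omega(n 2^n)$.

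First I would fix the $n$-standard predicate $P$ and invoke the Corollary once for each of the $2^n$ semantic $n$-points $\pi \in \B^n$ and each coordinate $k \in \N_n$. This yields, for every such pair $(\pi,k)$, some term $\FF[Q~k]$ occurring in the reduction sequence of $\Countprog~P$, where $\FF$ is an evaluation context, $Q$ is a value denoting an $n$-point, and $\val{Q}{} = \pi$. There are exactly $n \cdot 2^n$ such pairs, so the heart of the argument is to show that these $n \cdot 2^n$ witnesses are pairwise distinct \emph{as terms in the reduction sequence}; the length bound then follows at once, since a reduction sequence containing at least $n 2^n$ distinct terms must have length $\Omega(n 2^n)$.

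The step I expect to be the crux is this distinctness claim, and here I would appeal to unique decomposition: an evaluation context together with a redex determines a term unambiguously. Since $Q$ and $k$ are both values, $Q~k$ is a genuine application redex sitting in evaluation position, so from the syntactic term $\FF[Q~k]$ one can recover both the applicand $Q$ and the argument $k$; hence $\FF[Q~k] = \FF'[Q'~k']$ forces $Q=Q'$ and $k=k'$. I would then split into two cases. If the points differ, $\pi \neq \pi'$, then $\val{Q}{} \neq \val{Q'}{}$ forces $Q \neq Q'$, so the terms differ; and if $\pi = \pi'$ but $k \neq k'$, unique decomposition forces the terms to differ. Thus distinct pairs $(\pi,k)$ give distinct terms, and in particular one may `read off' $\pi$ from any witness $\FF[Q~k]$ as $\val{Q}{}$.

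Assembling these pieces, the reduction sequence for $\Countprog~P$ contains at least $n 2^n$ distinct terms and so has length $\Omega(n 2^n)$; by the machine/small-step correspondence the abstract-machine runtime is $\Omega(n 2^n)$, as required. The only genuinely delicate point is the unique-decomposition observation underpinning distinctness --- everything else is bookkeeping --- and this is precisely where it matters that the $n$-points $Q$ are \emph{values}, so that the redex $Q~k$ cannot be confused with any partially evaluated subterm.
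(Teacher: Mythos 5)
Your proposal is correct and follows essentially the same route as the paper: reduce to counting small-step reductions, apply the Corollary to Lemma~\ref{lem:no-shortcuts} to obtain a witness $\FF[Q~k]$ for each of the $n2^n$ pairs $(\pi,k)$, and use the fact that the $Q$ are values (so that $\FF[Q~k]=\FF'[Q'~k']$ forces $Q=Q'$ and $k=k'$, letting one read off $\pi$ as $\val{Q}{}$) to conclude that these witnesses are pairwise distinct terms. The paper's own argument is exactly this unique-decomposition counting step, stated slightly more tersely.
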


Although we shall not go into details, it is not too hard to apply our
proof strategy with minor adjustments to certain richer languages: for
instance, an extension of $\BCalc$ with exceptions, or one containing
the memoisation primitive required for $\BergerCount$
(Appendix~\extendedref{sec:berger-count}). A deeper adaptation is required
for languages with state: we will return to this in
Section~\ref{sec:robustness}.

It is worth noting where the above argument breaks down if applied to $\HCalc$.
In $\BCalc$, in the course of computing $\Countprog~P$, every $Q$ to which $P$ is applied
will be a self-contained closed term denoting some specific point $\pi$.
This is intuitively why we may only learn about one point at a time.
In $\HCalc$, this is not the case, because of the presence of operation symbols.
For instance, our $\ECount$ program from Section~\ref{sec:effectful-counting}
will apply $P$ to the `generic point' $\Superpoint$.
Thus, for example, in our treatment of Lemma~\ref{lem:replacement}(i),
it need no longer be the case that the reduction of $Q[p]~k$ either yields a value
or gets stuck at some $\EC_0[p~Q_0[p],p]$: a third possibility is that it gets stuck
at some invocation of $\ell$, so that control will then pass to the effect handler.

\section{Extensions and Variations}
\label{sec:robustness}

Our complexity result is robust in that it continues to hold in more
general settings. We outline here how it generalises: beyond
$n$-standard predicates, from generic count to generic search, and
from pure $\BCalc$ to stateful $\BCalcS$.

\subsection{Beyond $n$-Standard Predicates}
\label{sec:beyond}
The $n$-standard restriction on predicates serves to make the
efficiency phenomenon stand out as clearly as possible. However, we
can relax the restriction by tweaking $\ECount$ to handle repeated
queries and missing queries.
The trade off is that the analysis of $\ECount$ becomes more involved.
The key to relaxing the $n$-standard restriction is the use of state
to keep track of which queries have been computed.
We can give stateful implementations of $\ECount$ without changing its
type signature by using \emph{parameter-passing}~\citep{KammarLO13,
  Pretnar15} to internalise state within a handler.
Parameter-passing abstracts every handler clause such that the current
state is supplied before the evaluation of a clause continues and the
state is threaded through resumptions: a resumption becomes a
two-argument curried function $r : B \to S \to D$, where the first
argument of type $B$ is the return type of the operation and the
second argument is the updated state of type $S$.

\paragraph{Repeated queries} We can generalise $\ECount$ to handle
repeated queries by memoising previous answers. First, we generalise
the type of $\Branch$ such that it carries an index of a query.
{\small
\[
  \Branch : \Nat \to \Bool
\]}
We assume a family of natural number to boolean maps, $\dec{Map}_n$
with the following interface.
{\small
  \begin{equations}
  \dec{empty}_n  &:& \dec{Map}_n \\
  \dec{add}_n    &:& (\Nat_n \times \Bool) \to \dec{Map}_n \to \dec{Map}_n \\
  \dec{lookup}_n &:& \Nat_n \to \dec{Map}_n \to (\One + \Bool) \\
  \end{equations}}%
Invoking $\dec{lookup}~i~map$ returns $\Inl~\Unit$
if $i$ is not present in $map$, and $\Inr~ans$ if $i$ is
associated by $map$ with the value $ans : \Bool$.
Allowing ourselves a few extra constant-time arithmetic operations, we
can realise suitable maps in $\BCalc$ such that the time complexity of
$\dec{add}_n$ and $\dec{lookup}_n$ is
$\BigO(\log n)$~\cite{Okasaki99}.  We can then use parameter-passing
to support repeated queries as follows.
{\small
\[
  \bl
    \ECount'_n : ((\Nat_n \to \Bool) \to \Bool) \to \Nat\\
    \ECount'_n\,pred \defas
      \bl
      \Let\; h \revto \Handle\; pred\,(\lambda i. \Do\; \Branch~i)\; \With\\
      \quad\ba[t]{@{}l@{\hspace{1.5ex}}c@{\hspace{1.5ex}}l@{}}
             \Val\, x          &\mapsto& \lambda s. \If\; x\; \Then\; 1 \;\Else\; 0 \\
             \Branch\,i\,\,r &\mapsto&
               \ba[t]{@{}l}\lambda s.
                 \ba[t]{@{}l}
                 \Case\; \dec{lookup}_n~i~s\; \{\\
                   \quad\ba[t]{@{~}l@{~}c@{~}l}
                         \Inl\, \Unit &\mapsto&
                           \ba[t]{@{}l}
                           \Let\;x_\True \revto  r~\True~(\dec{add}_n~\Record{i, \True}~s)\; \In\\
                           \Let\;x_\False \revto r~\False~(\dec{add}_n~\Record{i, \False}~s)\; \In\\
                             (x_\True + x_\False); \\
                           \ea\\
                         \Inr\,x &\mapsto& r~x~s\; \} \\
                         \ea \\
                 \ea \\
               \ea \\
           \ea\\
      \In\;h~\dec{empty}_n \\
      \el \\
  \el
\]}%
The state parameter $s$ memoises query results, thus avoiding
double-counting and enabling $\ECount'_n$ to work correctly for
predicates performing the same query multiple times.

\paragraph{Missing queries}
Similarly, we can use parameter-passing to support missing queries.
{\small
\[
  \bl
    \ECount''_n : ((\Nat_n \to \Bool) \to \Bool) \to \Nat\\
    \ECount''_n\,pred \defas
      \bl
      \Let\;h \revto \bl
                     \Handle\;pred\,(\lambda i. \Do\;\Branch~\Unit)\;\With\\
                     \quad
                       \ba[t]{@{}l@{\hspace{1.5ex}}c@{\hspace{1.5ex}}l@{}}
                         \Val~x &\mapsto& \lambda d.
                           \ba[t]{@{}l}
                             \Let\; result \revto \If\;x\;\Then\;1\;\Else\;0\;\In\;
                             result \times 2^{n - d}\\
                           \ea\\
                         \Branch~\Unit~r &\mapsto& \lambda d.
                           \ba[t]{@{}l}
                             \Let\;x_\True \revto r~\True~(d+1)\;\In\\
                             \Let\;x_\False \revto r~\False~(d+1)\;\In\\
                             (x_\True + x_\False)
                           \ea
                       \ea\\
                     \el \\
      \In\;h~0 \\
      \el \\
  \el
\]}%
The parameter $d$ tracks the depth and the returned result is scaled
by $2^{n-d}$ accounting for the unexplored part of the current
subtree.
This enables $\ECount''_n$ to operate correctly on predicates that
inspect $n$ points at most once.
We leave it as an exercise for the reader to combine $\ECount'_n$ and
$\ECount''_n$ in order to handle both repeated queries and missing
queries.

\subsection{From Generic Count to Generic Search}
\label{sec:count-vs-search}

We can generalise the problem of generic counting to generic
searching. The main operational difference is that a generic search
procedure must materialise a list of solutions, thus its type is
{\small
\[
  \mathsf{search}_{n} : ((\Nat_n \to \Bool) \to \Bool) \to \List_{\Nat_n \to \Bool}
\]}%
where $\List_A$ is the type of cons-lists whose elements have type
$A$.
We modify $\ECount$ to return a list of solutions rather than the
number of solutions by lifting each result into a singleton list and
using list concatenation instead of addition to combine partial
results $xs_\True$ and $xs_\False$ as follows.
\newcommand{\ESearch}{\mathsf{effsearch}}
\newcommand{\Singleton}{\mathsf{singleton}}
\newcommand{\Concat}{\mathsf{concat}}
\newcommand{\HughesList}{\mathsf{HList}}
\newcommand{\ToConsList}{\mathsf{toConsList}}
{\small
\[
  \bl
    \ESearch_n : ((\Nat_n \to \Bool) \to \Bool) \to \List_{\Nat_n \to \Bool}\\
    \ESearch_n\,pred \defas
         \bl\Let\; f \revto \bl
                             \Handle\; pred\,(\lambda i. \Do\; \Branch~i)\; \With\\
                             \ba[t]{@{}l@{\hspace{1.5ex}}c@{\hspace{1.5ex}}l@{}}
                                           \Val\, x        &\mapsto& \lambda q. \If\, x \;\Then\; \Singleton~q \;\Else\; \dec{nil} \\
                                           \Branch\,i\,\,r &\mapsto&
                                             \ba[t]{@{}l}\lambda q.
                                                \ba[t]{@{}l}
                                                 \Let\;xs_\True \revto  r~\True~(\lambda j.\If\;i=j\;\Then\;\True\;\Else\;q~j) \;\In\\
                                                 \Let\;xs_\False \revto r~\False~(\lambda j. \If\;i=j\;\Then\;\False\;\Else\;q~j) \;\In\\
                                                 \Concat~\Record{xs_\True,xs_\False} \\
                                                 \ea\\
                                             \ea\\
                              \ea \\
                              \el \\
             \In\;\ToConsList~(f~(\lambda j. \bot))
          \el \\
  \el
\]}%
The $\Branch$ operation is now parameterised by an index $i$.
The handler is now parameterised by the current path as a point $q$,
which is output at a leaf iff it is in the predicate.
A little care is required to ensure that $\ESearch_n$ has runtime
$\BigO(2^n)$; \naive use of cons-list concatenation would result in
$\BigO(n2^n)$ runtime, as cons-list concatenation is linear in its
first operand. In place of cons-lists we use Hughes
lists~\citep{Hughes86}, which admit constant time concatenation:
$\HughesList_A \defas \List_A \to \List_A$. The empty Hughes list
$\dec{nil} : \HughesList_A$ is defined as the identity function:
$\dec{nil} \defas \lambda xs. xs$.
{\small
  \[
  \ba{@{}l@{\qquad}l@{\qquad}l}
     \Singleton_A : A \to \HughesList_A
   & \Concat_A : \HughesList_A \times \HughesList_A \to \HughesList_A
   & \ToConsList_A : \HughesList \to \List_A\\
     \Singleton_A~x \defas \lambda xs. x \cons xs
   & \Concat_A~f\,g \defas \lambda xs. g~(f~xs)
   & \ToConsList_A~f \defas f~\nil
   \ea
\]}%
We use the function $\ToConsList$ to convert the final Hughes list to
a standard cons-list at the end; this conversion has linear time
complexity (it just conses all of the elements of the list together).

\subsection{From Pure $\BCalc$ to Stateful $\BCalcS$}
\label{sec:mutable-state}

Mutable state is a staple ingredient of many practical programming
languages.  We now outline how our main lower bound result can be
extended to a language with state.  We will not give full details, but
merely point out the respects in which our earlier treatment needs to
be modified.

We have in mind an extension $\BCalcS$ of $\BCalc$ with ML-style
reference cells: we extend our grammar for types with a reference type
($\Ref~A$), and that for computation terms with forms for creating
references ($\keyw{letref}\; x = V\; \In\; N$), dereferencing ($!x$),
and destructive update ($x := V$), with the familiar typing rules.  We
also add a new kind of value, namely \emph{locations} $l^A$, of type
$\Ref~A$. We adopt a basic Scott-Strachey~\citeyearpar{ScottS71} model
of store: a location is a natural number decorated with a type, and
the execution of a stateful program allocates locations in the order
$0,1,2,\ldots$, assigning types to them as it does so. A \emph{store}
$s$ is a type-respecting mapping from some set of locations $\{
0,\ldots,l-1 \}$ to values.  For the purposes of small-step
operational semantics, a \emph{configuration} will be a triple
$(M,l,s)$, where $M$ is a computation, $l$ is a `location counter',
and $s$ is a store with domain $\{ 0,\ldots,l-1 \}$. A reduction
relation $\reducesto$ on configurations is defined in a familiar way
(again we omit the details).

Certain aspects of our setup require care in the presence of state.
For instance, there is in general no unique way to assign an (untimed)
decision tree to a closed value $P : \Predicate_n$, since the
behaviour of $P$ on a value $q : \Point_n$ may depend both on the
initial state when $P$ is invoked, and on the ways in which the
associated computations $q~V \reducesto^\ast \Return\;W$ modify the
state.  In this situation, there is not even a clear specification for
what an $n$-count program ought to do.

The simplest way to circumvent this difficulty is to restrict
attention to predicates $P$ \emph{within the sublanguage $\BCalc$}.
For such predicates, the notions of decision tree, counting and
$n$-standardness are unproblematic. Our result will establish a
runtime lower bound of $\Omega(n2^n)$ for programs $\Countprog \in
\BCalcS$ that correctly count predicates $P$ of this kind.
On the other hand, since $\Countprog$ itself may be stateful, we
cannot exclude the possibility that $\Countprog~P$ will apply $P$ to a
term $Q$ that is itself stateful. Such a $Q$ will no longer
unambiguously denote a semantic point $\pi$, hence the proof of
Section~\ref{sec:pure-counting} must be adapted.

To adapt our proof to the setting of $\BCalcS$, some more machinery is
needed.  If $\Countprog$ is an $n$-count program and $P$ an
$n$-standard predicate, we expect that the evaluation of
$\Countprog~P$ will feature terms $\EC[P~Q]$ which are then reduced to
some $\EC[\Return\;b]$, via a reduction sequence which, modulo
$\EC[-]$, has the following form:
{\small
\[ P\,Q \reducesto^\ast \EC_0[Q~k_0] \reducesto^\ast \EC_0[\Return\,b_0] \reducesto^\ast \cdots
   \reducesto^\ast \EC_{n-1}[Q~k_{n-1}] \reducesto^\ast \EC_{n-1}[\Return\,b_{n-1}]
   \reducesto^\ast \Return\;b
\]}%
(For notational clarity, we suppress mention of the location and store
components here.)  Informally we think of this as a dialogue in
which control passes back and forth between $P$ and $Q$. We shall
refer to the portions $\EC_j[Q~k_j] \reducesto^\ast
\EC_j[\Return\;b_j]$ of the above reduction as \emph{$Q$-sections},
and to the remaining portions (including the first and the last) as
\emph{$P$-sections}. We refer to the totality of these $P$-sections
and $Q$-sections as the \emph{thread} arising from the given
occurrence of the application $P\,Q$.  An important point to note is
that since $Q$ may contain other occurrences of $P$, it is quite
possible for the $Q$-sections above to contain further threads
corresponding to other applications $P~Q'$.

Since $P$ is $n$-standard, we know that each thread will consist of
$n+1$ $P$-sections separated by $n$ $Q$-sections.
Indeed, it is clear that this computation traces the path
$b_0 \ldots b_{n-1}$ through the decision tree for $P$, with
$k_0,\ldots,k_{n-1}$ the corresponding internal node labels.  We may
now, `with hindsight', construe this as a semantic point
$\pi : \N_n \to \B$ (where $\pi(k_j)=b_j$ for each $j$), and call it
the semantic point \emph{associated with} (the thread arising from)
the application occurrence $P~p$.

The following lemma now serves as a surrogate for
Lemma~\ref{lem:no-shortcuts}:

\begin{lemma}
  Let $P$ be an $n$-standard predicate. For any semantic point
  $\pi \in \B^n$, the evaluation of $\Countprog~P$ involves an
  application occurrence $P~Q$ with which $\pi$ is associated.
\end{lemma}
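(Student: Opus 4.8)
The plan is to reprise the contradiction argument of Lemma~\ref{lem:no-shortcuts}, but to recast the whole development in the stateful configuration semantics of $\BCalcS$, threading the store through every reduction step. Suppose for a contradiction that some semantic point $\pi$ is associated with no application occurrence $P~Q$ arising in the evaluation of $\Countprog~P$. Since $P$ is a \emph{pure} $\BCalc$ predicate, its untimed decision tree $\tru(P)$ is well-defined, and $\pi$ determines a unique leaf $l$; I would define $P'$ exactly as in the pure proof, namely the $\BCalc$ $n$-standard predicate built by nested $\If$-statements whose tree agrees with $\tru(P)$ everywhere except that the answer at $l$ is negated. The counts of $P$ and $P'$ then differ by exactly $1$, so it suffices to show that $\Countprog$ cannot distinguish them: that $\Countprog~P \reducesto^\ast \Return\;c$ forces $\Countprog~P' \reducesto^\ast \Return\;c$ for the same $c$.

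The heart of the argument is a stateful analogue of the replacement lemma (Lemma~\ref{lem:replacement}). First I would redefine \emph{safety} in terms of the thread machinery introduced above: a subcomputation is safe if none of the threads arising from its abstracted occurrences of $P$ is associated with the critical point $\pi$. The replacement lemma would then assert, by simultaneous induction on reduction length, that within a safe computation any subreduction $\EC[P~Q] \reducesto^\ast \EC[\Return\;b]$ — relative to a given incoming store — may be replayed with $P'$ substituted for $P$, yielding the \emph{same} boolean $b$ \emph{and the same resulting store}. As before, each thread is analysed as an alternation of $P$-sections and $Q$-sections; because $P$ is $n$-standard, the thread comprises exactly $n+1$ $P$-sections separated by $n$ $Q$-sections, and the returned booleans $b_0,\dots,b_{n-1}$ trace a path through $\tru(P)$ with node labels $k_0,\dots,k_{n-1}$. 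Safety guarantees this path avoids $l$, so $P$ and $P'$ return identical query answers along it and the same final leaf answer; the surrounding `generic' segments, which do not inspect $P$, replay verbatim together with their store effects. The final leaf's semantic point is precisely the $\pi'\neq\pi$ associated with the thread, so the argument closes just as in the pure case via Proposition~\ref{prop:pred-tree} relativised to $\BCalcS$.

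The main obstacle is the interplay between state and \emph{nesting}. Because $Q$ may itself contain occurrences of $P$, a single $Q$-section can harbour further nested threads, and each $Q$-section may mutate the store. The crucial verification is that store evolution is preserved under the $P \to P'$ substitution. This splits cleanly: the $P$-sections are pure (as $P \in \BCalc$ touches no reference cell), so they leave the store untouched; the stateful $Q$-sections are handled by the induction hypothesis, which yields not only the same boolean result but the identical resulting store, so that the next $P$-section and all later activity resume from exactly the same configuration. The key point, exactly as in the pure setting, is that a safe thread never reaches the critical leaf, so the entire back-and-forth dialogue — including every read and write to the store — is determined by data on which $P$ and $P'$ agree, and hence is faithfully reproduced with $P'$ in place of $P$. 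Once this is in hand, applying the replacement lemma one last time to the reduction of $\Countprog~P$ itself gives $\Countprog~P' \reducesto^\ast \Return\;c$, contradicting the fact that the two counts differ by $1$; this establishes the lemma.
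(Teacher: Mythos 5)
Your overall strategy (contradiction via a predicate $P'$ differing at the unvisited leaf, plus a stateful replacement lemma proved by induction on reduction length, with threads decomposed into $P$-sections and $Q$-sections) is the same as the paper's. But there is one genuine gap, and it is precisely the point the paper singles out as the new difficulty in the stateful setting: residual occurrences of $P$ can find their way into the \emph{store}, not just into the term being reduced. For example, $\Countprog$ may execute $\keyw{letref}\;x = V\;\In\;\dots$ where $V$ is a closure mentioning $pred$, so that after a few steps the store $s$ maps some location to a value containing $P$. Your induction invariant asserts that replaying a safe subcomputation with $P'$ in place of $P$ yields ``the \emph{same} resulting store.'' That cannot be right: in the $P'$-run the stored value must contain $P'$ where the original contained $P$, so the two stores are not equal but related by the substitution. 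Worse, if you carry the literal-equality invariant forward, then at the step where such a value is dereferenced the $P'$-run would read back a value containing $P$, and the replacement breaks down.

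The fix is the paper's notion of \emph{configuration context}: one must abstract \emph{all} residual occurrences of $P$ throughout the configuration $(M,l,s)$ --- in the store as well as the term --- and state the replacement lemma as ``if $C[P]$ reduces thus, then $C[P']$ reduces correspondingly,'' where $C[-]$ is a configuration context. With that reformulation the invariant becomes ``the resulting configurations are instances of a common configuration context at $P$ and $P'$ respectively,'' and your thread analysis (pure $P$-sections leaving the store fixed, $Q$-sections handled by the induction hypothesis, safety forcing the traced path to avoid the critical leaf) goes through as you describe. The rest of your argument matches the paper's intended proof.
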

The proof of this lemma is not too different from that of
Lemma~\ref{lem:no-shortcuts}: if $\pi$ were a point with no associated
thread, there would be an unvisited leaf in the decision tree, and we
could manufacture an $n$-standard predicate $P'$ whose tree differed
from that of $P$ only at this leaf. We can then show, by induction on
length of reductions, that any portion of the evaluation of
$\Countprog~P$ can be suitably mimicked with $P$ replaced by $P'$.
Naturally, this idea now needs to be formulated at the level of
\emph{configurations} rather than plain terms: in the course of
reducing $(\Countprog~P,0,[])$, we may encounter configurations
$(M,l,s)$ in which residual occurrences of $P$ have found their way
into $s$ as well as $M$, so in order to replace $P$ by $P'$ we must
abstract on all these occurrences via an evident notion of
\emph{configuration context}.  With this adjustment, however, the
argument of Lemma~\ref{lem:no-shortcuts} goes through.

A further argument is then needed to show that any two threads are
indeed ‘disjoint’ as regards their $P$-sections, so that there must be
at least $n2^n$ steps in the overall reduction sequence.

\newcommand{\tooslow}{-}

\newcommand{\tableone}
{\begin{table*}
    \footnotesize
  \caption{SML/NJ: Runtime Relative to Effectful Implementation}
  \label{tbl:results}
  \vspace{-2.5ex}
  \begin{tabular}{@{}| l | r@{\,} | r@{\,} | r@{\,} |@{\,}| r@{\,} | r@{\,} | r@{\,} |@{\,}| r@{\,} |@{\,}| r@{\,} | r@{\,} | r@{\,} |@{\,}| r@{\,} | r@{\,} | r@{\,} | r@{\,} | r@{\,} |@{}}
    \cline{2-16}
    \multicolumn{1}{l |}{} &
    \multicolumn{6}{@{}c@{} |@{\,}|}{\textbf{Queens}} &
    \multicolumn{9}{@{}c@{} |}{\textbf{Integration}}
    \\\cline{2-16}
    \multicolumn{1}{c |}{} &
    \multicolumn{3}{| @{}c@{} |@{\,}|}{\textbf{First solution}} &
    \multicolumn{3}{| @{}c@{} |@{\,}|}{\textbf{All solutions}} &
    \multicolumn{1}{@{}c@{} |@{\,}|}{\textbf{Id}} &
    \multicolumn{3}{  @{}c@{} |@{\,}|}{\textbf{Squaring}} &
    \multicolumn{5}{  @{}c@{} |}{\textbf{Logistic}}
    \\\cline{2-16}

    \multicolumn{1}{c |}{\emph{Parameter\!\!}} &
    \multicolumn{1}{@{}c@{} |}{$20$} &
    \multicolumn{1}{@{}c@{} |}{$24$} &
    \multicolumn{1}{@{}c@{} |@{\,}|}{$28$} &
    \multicolumn{1}{@{}c@{} |}{$8$} &
    \multicolumn{1}{@{}c@{} |}{$10$} &
    \multicolumn{1}{@{}c@{} |@{\,}|}{$12$} &
    \multicolumn{1}{@{}c@{} |@{\,}|}{$20$} &
    \multicolumn{1}{@{}c@{} |}{$14$} &
    \multicolumn{1}{@{}c@{} |}{$17$} &
    \multicolumn{1}{@{}c@{} |@{\,}|}{$20$} &
    \multicolumn{1}{@{}c@{} |}{$1$} &
    \multicolumn{1}{@{}c@{} |}{$2$} &
    \multicolumn{1}{@{}c@{} |}{$3$} &
    \multicolumn{1}{@{}c@{} |}{$4$} &
    \multicolumn{1}{@{}c@{} |}{$5$}
    \\\hline

    \Naive   &
    $\tooslow$ &
    $\tooslow$ &
    $\tooslow$ &
    $\!\!217.74$ &
    $\tooslow$ &
    $\tooslow$ &
    $\!\!12.89$ &
    $\!\!45.04$ &
    $\!\!57.80$ &
    $\!\!69.86$ &
    $\tooslow$ &
    $\tooslow$ &
    $\tooslow$ &
    $\tooslow$ &
    $\tooslow$
    \\\hline

    Berger &
    $11.24$ &
    $15.70$ &
    $\tooslow$ &
    $2.06$ &
    $2.86$ &
    $3.64$ &
    $5.18$ &
    $\!\!20.62$ &
    $\!\!22.37$ &
    $\!\!23.46$ &
    $22.51$ &
    $28.97$ &
    $30.14$ &
    $29.30$ &
    $27.94$
    \\\hline

    Pruned &
    $2.13$ &
    $2.54$ &
    $2.91$ &
    $1.04$ &
    $1.24$ &
    $1.39$ &
    $2.07$ &
    $3.78$ &
    $4.05$ &
    $4.24$ &
    $4.10$ &
    $5.44$ &
    $6.42$ &
    $7.26$ &
    $7.94$
    \\\cline{1-16}

    Bespoke &
    $0.12$ &
    $0.12$ &
    $0.12$ &
    $0.13$ &
    $0.13$ &
    $0.12$ &
    \multicolumn{9}{l}{}
    \\\cline{1-7}
  \end{tabular}
\end{table*}}

\newcommand{\tabletwo}
{\begin{table*}
  \footnotesize
  \caption{MLton: Runtime Relative to Effectful Implementation}
  \label{tbl:results-mlton}
  \vspace{-2.5ex}
  \begin{tabular}{@{}| l | r@{\,} | r@{\,} | r@{\,} |@{\,}| r@{\,} | r@{\,} | r@{\,} |@{\,}| r@{\,} |@{\,}| r@{\,} | r@{\,} | r@{\,} |@{\,}| r@{\,} | r@{\,} | r@{\,} | r@{\,} | r@{\,} |@{}}
    \cline{2-16}
    \multicolumn{1}{l |}{} &
    \multicolumn{6}{@{}c@{} |@{\,}|}{\textbf{Queens}} &
    \multicolumn{9}{@{}c@{} |}{\textbf{Integration}}
    \\\cline{2-16}
    \multicolumn{1}{c |}{} &
    \multicolumn{3}{| @{}c@{} |@{\,}|}{\textbf{First solution}} &
    \multicolumn{3}{| @{}c@{} |@{\,}|}{\textbf{All solutions}} &
    \multicolumn{1}{@{}c@{} |@{\,}|}{\textbf{Id}} &
    \multicolumn{3}{  @{}c@{} |@{\,}|}{\textbf{Squaring}} &
    \multicolumn{5}{  @{}c@{} |}{\textbf{Logistic}}
    \\\cline{2-16}

    \multicolumn{1}{c |}{\emph{Parameter\!\!}} &
    \multicolumn{1}{@{}c@{} |}{$20$} &
    \multicolumn{1}{@{}c@{} |}{$24$} &
    \multicolumn{1}{@{}c@{} |@{\,}|}{$28$} &
    \multicolumn{1}{@{}c@{} |}{$8$} &
    \multicolumn{1}{@{}c@{} |}{$10$} &
    \multicolumn{1}{@{}c@{} |@{\,}|}{$12$} &
    \multicolumn{1}{@{}c@{} |@{\,}|}{$20$} &
    \multicolumn{1}{@{}c@{} |}{$14$} &
    \multicolumn{1}{@{}c@{} |}{$17$} &
    \multicolumn{1}{@{}c@{} |@{\,}|}{$20$} &
    \multicolumn{1}{@{}c@{} |}{$1$} &
    \multicolumn{1}{@{}c@{} |}{$2$} &
    \multicolumn{1}{@{}c@{} |}{$3$} &
    \multicolumn{1}{@{}c@{} |}{$4$} &
    \multicolumn{1}{@{}c@{} |}{$5$}
    \\\hline

    \Naive   &
    $\tooslow$ &
    $\tooslow$ &
    $\tooslow$ &
    $17.31$ &
    $\tooslow$ &
    $\tooslow$ &
    $1.45$ &
    $4.51$ &
    $5.13$ &
    $5.82$ &
    $\tooslow$ &
    $\tooslow$ &
    $\tooslow$ &
    $\tooslow$ &
    $\tooslow$
    \\\hline

    Berger &
    $0.52$ &
    $0.66$ &
    $\tooslow$ &
    $0.19$ &
    $0.22$ &
    $0.20$ &
    $0.43$ &
    $2.02$ &
    $1.95$ &
    $1.92$ &
    $2.17$ &
    $3.59$ &
    $4.24$ &
    $4.34$ &
    $4.28$
    \\\hline

    Pruned &
    $0.11$ &
    $0.11$ &
    $0.13$ &
    $0.10$ &
    $0.10$ &
    $0.08$ &
    $0.14$ &
    $0.39$ &
    $0.35$ &
    $0.35$ &
    $0.39$ &
    $0.63$ &
    $0.86$ &
    $1.03$ &
    $1.21$
    \\\cline{1-16}

    Bespoke &
    $0.005$ &
    $0.004$ &
    $0.004$ &
    $0.01$ &
    $0.009$ &
    $0.006$ &
    \multicolumn{9}{l}{}
    \\\cline{1-7}
  \end{tabular}
\end{table*}}

\newcommand{\tablethree}
{\begin{table*}
  \caption{MLton: Runtime Relative to SML/NJ}
  \label{tbl:results-mlton-vs-smlnj}
  \vspace{-2.5ex}
\footnotesize
  \begin{tabular}{@{}| l | r@{\,} | r@{\,} | r@{\,} |@{\,}| r@{\,} | r@{\,} | r@{\,} |@{\,}| r@{\,} |@{\,}| r@{\,} | r@{\,} | r@{\,} |@{\,}| r@{\,} | r@{\,} | r@{\,} | r@{\,} | r@{\,} |@{}}
    \cline{2-16}
    \multicolumn{1}{l |}{} &
    \multicolumn{6}{@{}c@{} |@{\,}|}{\textbf{Queens}} &
    \multicolumn{9}{@{}c@{} |}{\textbf{Integration}}
    \\\cline{2-16}
    \multicolumn{1}{c |}{} &
    \multicolumn{3}{| @{}c@{} |@{\,}|}{\textbf{First solution}} &
    \multicolumn{3}{| @{}c@{} |@{\,}|}{\textbf{All solutions}} &
    \multicolumn{1}{@{}c@{} |@{\,}|}{\textbf{Id}} &
    \multicolumn{3}{  @{}c@{} |@{\,}|}{\textbf{Squaring}} &
    \multicolumn{5}{  @{}c@{} |}{\textbf{Logistic}}
    \\\cline{2-16}

    \multicolumn{1}{c |}{\emph{Parameter\!\!}} &
    \multicolumn{1}{@{}c@{} |}{$20$} &
    \multicolumn{1}{@{}c@{} |}{$24$} &
    \multicolumn{1}{@{}c@{} |@{\,}|}{$28$} &
    \multicolumn{1}{@{}c@{} |}{$8$} &
    \multicolumn{1}{@{}c@{} |}{$10$} &
    \multicolumn{1}{@{}c@{} |@{\,}|}{$12$} &
    \multicolumn{1}{@{}c@{} |@{\,}|}{$20$} &
    \multicolumn{1}{@{}c@{} |}{$14$} &
    \multicolumn{1}{@{}c@{} |}{$17$} &
    \multicolumn{1}{@{}c@{} |@{\,}|}{$20$} &
    \multicolumn{1}{@{}c@{} |}{$1$} &
    \multicolumn{1}{@{}c@{} |}{$2$} &
    \multicolumn{1}{@{}c@{} |}{$3$} &
    \multicolumn{1}{@{}c@{} |}{$4$} &
    \multicolumn{1}{@{}c@{} |}{$5$}
    \\\hline

    \Naive   &
    $\tooslow$ &
    $\tooslow$ &
    $\tooslow$ &
    $0.49$ &
    $\tooslow$ &
    $\tooslow$ &
    $0.55$ &
    $0.35$ &
    $0.35$ &
    $0.35$ &
    $\tooslow$ &
    $\tooslow$ &
    $\tooslow$ &
    $\tooslow$ &
    $\tooslow$
    \\\hline

    Berger &
    $0.62$ &
    $0.64$ &
    $\tooslow$ &
    $0.73$ &
    $0.65$ &
    $0.68$ &
    $0.41$ &
    $0.35$ &
    $0.34$ &
    $0.34$ &
    $0.37$ &
    $0.37$ &
    $0.37$ &
    $0.37$ &
    $0.37$
    \\\hline

    Pruned &
    $0.70$ &
    $0.68$ &
    $0.71$ &
    $0.74$ &
    $0.70$ &
    $0.71$ &
    $0.34$ &
    $0.36$ &
    $0.35$ &
    $0.35$ &
    $0.36$ &
    $0.35$ &
    $0.35$ &
    $0.35$ &
    $0.36$
    \\\hline

    Effectful &
    $12.87$ &
    $13.99$ &
    $14.90$ &
    $8.00$ &
    $8.60$ &
    $12.19$ &
    $4.93$ &
    $3.53$ &
    $3.95$ &
    $4.20$ &
    $3.80$ &
    $3.00$ &
    $2.62$ &
    $2.46$ &
    $2.37$
    \\\cline{1-16}

    Bespoke &
    $0.56$ &
    $0.56$ &
    $0.56$ &
    $0.69$ &
    $0.63$ &
    $0.59$ &
    \multicolumn{9}{l}{}
    \\\cline{1-7}
  \end{tabular}
\end{table*}}

\tableone
\tabletwo

\section{Experiments}
\label{sec:experiments}
The theoretical efficiency gap between realisations of $\BCalc$ and
$\HCalc$ manifests in practice. We observe it empirically on
instantiations of $n$-queens and exact real number integration, which
can be cast as generic search. Table~\ref{tbl:results} shows the
speedup of using an effectful implementation of generic search over
various pure implementations. We discuss the benchmarks and results in
further detail below.

\setlength{\floatsep}{1.0ex}
\setlength{\textfloatsep}{1.0ex}

\paragraph{Methodology}
We evaluated an effectful implementation of generic search against
three ``pure'' implementations which are realisable in $\BCalc$
extended with mutable state:
\begin{itemize}
\item \Naive: a simple, and rather \naive, functional implementation;
\item Pruned: a generic search procedure with space pruning based on
  Longley's technique~\cite{Longley99} (uses local state);
\item Berger: a lazy pure functional generic search procedure based on
  Berger's algorithm.
\end{itemize}
Each benchmark was run 11 times. The reported figure is the median
runtime ratio between the particular implementation and the baseline
effectful implementation. Benchmarks that failed to terminate within a
threshold (1 minute for single solution, 8 minutes for enumerations),
are reported as $\tooslow$. The experiments were conducted in
\citet{smlnj} v110.97 64-bit with factory settings on an Intel Xeon
CPU E5-1620 v2 @ 3.70GHz powered workstation running Ubuntu 16.04. The
effectful implementation uses an encoding of delimited control akin to
effect handlers based on top of SML/NJ's call/cc.
The complete source code for the benchmarks is available at:
\begin{center}
  \url{https://github.com/dhil/effects-for-efficiency-code}
\end{center}

\paragraph{Queens}
We phrase the $n$-queens problem as a generic search problem. As a
control we include a bespoke implementation hand-optimised for the
problem. We perform two experiments: finding the first solution for $n
\in \{20,24,28\}$ and enumerating all solutions for $n \in
\{8,10,12\}$. The speedup over the \naive implementation is dramatic,
but less so over the Berger procedure. The pruned procedure is more
competitive, but still slower than the baseline. Unsurprisingly, the
baseline is slower than the bespoke implementation.

\paragraph{Exact Real Integration}
The integration benchmarks are adapted from \citet{Simpson98}. We
integrate three different functions with varying precision in the
interval $[0,1]$. For the identity function (Id) at precision $20$ the
speedup relative to Berger is $5.18\times$. For the squaring function
the speedups are larger at higher precisions: at precision $14$ the
speedup is $3.78\times$ over the pruned integrator, whilst it is
$4.24\times$ at precision $20$. The speedups are more extreme against
the \naive and Berger integrators. We also integrate the logistic map
$x \mapsto 1 - 2x^2$ at a fixed precision of $15$. We make the
function harder to compute by iterating it up to $5$ times. Between
the pruned and effectful integrator the speedup ratio increases as the
function becomes harder to compute.

\paragraph{MLton}
SML/NJ is compiled into CPS, thus providing a particularly efficient
implementation of call/cc.
\citet{mlton}, a whole program compiler for SML, implements
call/cc by copying the stack.
We repeated our experiments using MLton 20180207.
Table~\ref{tbl:results-mlton} shows the results. The effectful
implementation performs much worse under MLton than SML/NJ, being
surpassed in nearly every case by the pruned search procedure and in
some cases by the Berger search procedure.
Table~\ref{tbl:results-mlton-vs-smlnj} summarises the runtime of MLton
relative to SML/NJ. Berger, Pruned, and Bespoke run between 1 and 3
times as fast with MLton compared to SML/NJ.
However, the effectful implementation runs between 2 and 14 times as
fast with SML/NJ compared with MLton.

\tablethree

\section{Conclusions and Future Work}
\label{sec:conclusions}
We presented a PCF-inspired language $\BCalc$ and its extension with
effect handlers $\HCalc$. We proved that $\HCalc$ supports an
asymptotically more efficient implementation of generic search than
any possible implementation in $\BCalc$. We observed its effect in
practice on several benchmarks.
We also proved that our $\Omega(n2^n)$ lower bound applies to a
language $\BCalcS$ which extends $\BCalc$ with state.

Our positive result for $\HCalc$ extends to other control operators by appeal to existing
results on interdefinability of handlers and other control
operators~\citep{ForsterKLP19,PirogPS19}.
The result no longer applies directly if we add an effect type system
to $\HCalc$, as the implementation of the counting program would
require a change of type for predicates to reflect the ability to
perform effectful operations.
In future we plan to investigate how to account for effect type systems.

We have verified that our $\Omega(n2^n)$ lower bound also applies to
a language $\BCalcE$ with (Benton-Kennedy style~\citep{BentonK01})
\emph{exceptions} and handlers.
The lower bound also applies to the combined language $\BCalcSE$
with both state and exceptions --- this seems to bring us close to
the expressive power of real languages such as Standard ML, Java, and
Python, strongly suggesting that the speedup we have discussed is
unattainable in these languages.

In future work, we hope to establish the more general result that our
$\Omega(n2^n)$ applies to a language with \emph{affine effect
  handlers} (handlers which invoke the resumption $r$ at most once).
This would not only subsume our present results (since state and
exceptions are examples of affine effects), but would also apply
e.g.\ to a richer language with \emph{coroutines}.  However, it
appears that our present methods do not immediately adapt to this more
general situation, as our arguments depend at various points on an
orderly nesting of subcomputations which coroutining would break.

One might object that the efficiency gap we have analysed is of merely
theoretical interest, since an $\Omega(2^n)$ runtime is already
`infeasible'.  We claim, however, that what we have presented is an
example of a much more pervasive phenomenon, and our generic count
example serves merely as a convenient way to bring this phenomenon
into sharp formal focus. Suppose, for example, that our programming
task was not to count all solutions to $P$, but to find just one of
them.  It is informally clear that for many kinds of predicates this
would in practice be a feasible task, and also that we could still
gain our factor $n$ speedup here by working in a language with
first-class control.  However, such an observation appears less
amenable to a clean mathematical formulation, as the runtimes in
question are highly sensitive to both the particular choice of
predicate and the search order employed.

\begin{acks}                            
  We would like to thank James McKinna and Maciej Pir\'og for
  insightful discussions, and Danel Ahman and the anonymous reviewers
  for helpful feedback and suggestions for improvement.
  Daniel Hillerström was supported by
  \href{https://www.epsrc.ac.uk/}{EPSRC} grant
  \href{http://pervasiveparallelism.inf.ed.ac.uk}{EP/L01503X/1}
  and by ERC Consolidator Grant Skye (grant number 682315).
  Sam Lindley was supported by \href{https://www.epsrc.ac.uk}{EPSRC}
  grant \href{http://groups.inf.ed.ac.uk/abcd/}{EP/K034413/1} (From
  Data Types to Session Types---A Basis for Concurrency and
  Distribution).
\end{acks}

\bibliography{generic-search}


\begin{thebibliography}{56}


\ifx \showCODEN    \undefined \def \showCODEN     #1{\unskip}     \fi
\ifx \showDOI      \undefined \def \showDOI       #1{#1}\fi
\ifx \showISBNx    \undefined \def \showISBNx     #1{\unskip}     \fi
\ifx \showISBNxiii \undefined \def \showISBNxiii  #1{\unskip}     \fi
\ifx \showISSN     \undefined \def \showISSN      #1{\unskip}     \fi
\ifx \showLCCN     \undefined \def \showLCCN      #1{\unskip}     \fi
\ifx \shownote     \undefined \def \shownote      #1{#1}          \fi
\ifx \showarticletitle \undefined \def \showarticletitle #1{#1}   \fi
\ifx \showURL      \undefined \def \showURL       {\relax}        \fi
\providecommand\bibfield[2]{#2}
\providecommand\bibinfo[2]{#2}
\providecommand\natexlab[1]{#1}
\providecommand\showeprint[2][]{arXiv:#2}

\bibitem[\protect\citeauthoryear{Bauer}{Bauer}{2018}]%
        {Bauer18}
\bibfield{author}{\bibinfo{person}{Andrej Bauer}.}
  \bibinfo{year}{2018}\natexlab{}.
\newblock \showarticletitle{What is algebraic about algebraic effects and
  handlers?}
\newblock \bibinfo{journal}{\emph{CoRR}}  \bibinfo{volume}{abs/1807.05923}
  (\bibinfo{year}{2018}).
\newblock


\bibitem[\protect\citeauthoryear{Bauer and Pretnar}{Bauer and Pretnar}{2015}]%
        {BauerP15}
\bibfield{author}{\bibinfo{person}{Andrej Bauer} {and} \bibinfo{person}{Matija
  Pretnar}.} \bibinfo{year}{2015}\natexlab{}.
\newblock \showarticletitle{Programming with algebraic effects and handlers}.
\newblock \bibinfo{journal}{\emph{J. Log. Algebr. Meth. Program.}}
  \bibinfo{volume}{84}, \bibinfo{number}{1} (\bibinfo{year}{2015}),
  \bibinfo{pages}{108--123}.
\newblock


\bibitem[\protect\citeauthoryear{Bell and Stevens}{Bell and Stevens}{2009}]%
        {BellS09}
\bibfield{author}{\bibinfo{person}{Jordan Bell} {and} \bibinfo{person}{Brett
  Stevens}.} \bibinfo{year}{2009}\natexlab{}.
\newblock \showarticletitle{A survey of known results and research areas for
  n-queens}.
\newblock \bibinfo{journal}{\emph{Discret. Math.}} \bibinfo{volume}{309},
  \bibinfo{number}{1} (\bibinfo{year}{2009}), \bibinfo{pages}{1--31}.
\newblock


\bibitem[\protect\citeauthoryear{Benton and Kennedy}{Benton and
  Kennedy}{2001}]%
        {BentonK01}
\bibfield{author}{\bibinfo{person}{Nick Benton} {and} \bibinfo{person}{Andrew
  Kennedy}.} \bibinfo{year}{2001}\natexlab{}.
\newblock \showarticletitle{Exceptional Syntax Journal of Functional
  Programming}.
\newblock \bibinfo{journal}{\emph{J. Funct. Program.}} \bibinfo{volume}{11},
  \bibinfo{number}{4} (\bibinfo{year}{2001}), \bibinfo{pages}{395--410}.
\newblock


\bibitem[\protect\citeauthoryear{Berger}{Berger}{1990}]%
        {Berger90}
\bibfield{author}{\bibinfo{person}{Ulrich Berger}.}
  \bibinfo{year}{1990}\natexlab{}.
\newblock \emph{\bibinfo{title}{Totale Objekte und Mengen in der
  Bereichstheorie}}.
\newblock \bibinfo{thesistype}{Ph.D. Dissertation}. \bibinfo{school}{Ludwig
  Maximillians-Universtität}, \bibinfo{address}{Munich}.
\newblock


\bibitem[\protect\citeauthoryear{Biernacki, Pir{\'{o}}g, Polesiuk, and
  Sieczkowski}{Biernacki et~al\mbox{.}}{2019}]%
        {BiernackiPPS19}
\bibfield{author}{\bibinfo{person}{Dariusz Biernacki}, \bibinfo{person}{Maciej
  Pir{\'{o}}g}, \bibinfo{person}{Piotr Polesiuk}, {and} \bibinfo{person}{Filip
  Sieczkowski}.} \bibinfo{year}{2019}\natexlab{}.
\newblock \showarticletitle{Abstracting algebraic effects}.
\newblock \bibinfo{journal}{\emph{{PACMPL}}} \bibinfo{volume}{3},
  \bibinfo{number}{{POPL}} (\bibinfo{year}{2019}), \bibinfo{pages}{6:1--6:28}.
\newblock


\bibitem[\protect\citeauthoryear{Biernacki, Pir{\'{o}}g, Polesiuk, and
  Sieczkowski}{Biernacki et~al\mbox{.}}{2020}]%
        {BiernackiPPS20}
\bibfield{author}{\bibinfo{person}{Dariusz Biernacki}, \bibinfo{person}{Maciej
  Pir{\'{o}}g}, \bibinfo{person}{Piotr Polesiuk}, {and} \bibinfo{person}{Filip
  Sieczkowski}.} \bibinfo{year}{2020}\natexlab{}.
\newblock \showarticletitle{Binders by day, labels by night: effect instances
  via lexically scoped handlers}.
\newblock \bibinfo{journal}{\emph{{PACMPL}}} \bibinfo{volume}{4},
  \bibinfo{number}{{POPL}} (\bibinfo{year}{2020}),
  \bibinfo{pages}{48:1--48:29}.
\newblock


\bibitem[\protect\citeauthoryear{Bird, Jones, and de~Moor}{Bird
  et~al\mbox{.}}{1997}]%
        {BirdJdM97}
\bibfield{author}{\bibinfo{person}{Richard Bird}, \bibinfo{person}{Geraint
  Jones}, {and} \bibinfo{person}{Oege de Moor}.}
  \bibinfo{year}{1997}\natexlab{}.
\newblock \showarticletitle{More haste less speed: lazy versus eager
  evaluation}.
\newblock \bibinfo{journal}{\emph{J. Funct. Program.}} \bibinfo{volume}{7},
  \bibinfo{number}{5} (\bibinfo{year}{1997}), \bibinfo{pages}{541--547}.
\newblock


\bibitem[\protect\citeauthoryear{Bird}{Bird}{2006}]%
        {Bird06}
\bibfield{author}{\bibinfo{person}{Richard~S. Bird}.}
  \bibinfo{year}{2006}\natexlab{}.
\newblock \showarticletitle{Functional Pearl: {A} program to solve Sudoku}.
\newblock \bibinfo{journal}{\emph{J. Funct. Program.}} \bibinfo{volume}{16},
  \bibinfo{number}{6} (\bibinfo{year}{2006}), \bibinfo{pages}{671--679}.
\newblock


\bibitem[\protect\citeauthoryear{Cartwright and Felleisen}{Cartwright and
  Felleisen}{1992}]%
        {CartwrightF92}
\bibfield{author}{\bibinfo{person}{Robert Cartwright} {and}
  \bibinfo{person}{Matthias Felleisen}.} \bibinfo{year}{1992}\natexlab{}.
\newblock \showarticletitle{Observable Sequentiality and Full Abstraction}. In
  \bibinfo{booktitle}{\emph{{POPL}}}. \bibinfo{publisher}{{ACM} Press},
  \bibinfo{pages}{328--342}.
\newblock


\bibitem[\protect\citeauthoryear{Convent, Lindley, McBride, and
  McLaughlin}{Convent et~al\mbox{.}}{2020}]%
        {ConventLMM20}
\bibfield{author}{\bibinfo{person}{Lukas Convent}, \bibinfo{person}{Sam
  Lindley}, \bibinfo{person}{Conor McBride}, {and} \bibinfo{person}{Craig
  McLaughlin}.} \bibinfo{year}{2020}\natexlab{}.
\newblock \showarticletitle{Doo bee doo bee doo}.
\newblock \bibinfo{journal}{\emph{J. Funct. Program.}}  \bibinfo{volume}{30}
  (\bibinfo{year}{2020}).
\newblock
\newblock
\shownote{To appear.}


\bibitem[\protect\citeauthoryear{Cormen, Leiserson, Rivest, and Stein}{Cormen
  et~al\mbox{.}}{2009}]%
        {CormenLRS09}
\bibfield{author}{\bibinfo{person}{Thomas~H. Cormen},
  \bibinfo{person}{Charles~E. Leiserson}, \bibinfo{person}{Ronald~L. Rivest},
  {and} \bibinfo{person}{Clifford Stein}.} \bibinfo{year}{2009}\natexlab{}.
\newblock \bibinfo{booktitle}{\emph{Introduction to Algorithms, Third Edition}
  (\bibinfo{edition}{3rd} ed.)}.
\newblock \bibinfo{publisher}{{MIT} Press}.
\newblock


\bibitem[\protect\citeauthoryear{Daniels}{Daniels}{2016}]%
        {Daniels16}
\bibfield{author}{\bibinfo{person}{Robbie Daniels}.}
  \bibinfo{year}{2016}\natexlab{}.
\newblock \emph{\bibinfo{title}{Efficient Generic Searches and Programming
  Language Expressivity}}.
\newblock \bibinfo{thesistype}{Master's\ thesis}. \bibinfo{school}{School of
  Informatics, the University of Edinburgh}, \bibinfo{address}{Scotland}.
\newblock
\urldef\tempurl%
\url{http://homepages.inf.ed.ac.uk/jrl/Research/Robbie_Daniels_MSc_dissertation.pdf}
\showURL{%
\tempurl}


\bibitem[\protect\citeauthoryear{Danvy and Filinski}{Danvy and
  Filinski}{1990}]%
        {DanvyF90}
\bibfield{author}{\bibinfo{person}{Olivier Danvy} {and}
  \bibinfo{person}{Andrzej Filinski}.} \bibinfo{year}{1990}\natexlab{}.
\newblock \showarticletitle{Abstracting Control}. In
  \bibinfo{booktitle}{\emph{{LISP} and Functional Programming}}.
  \bibinfo{publisher}{{ACM}}, \bibinfo{pages}{151--160}.
\newblock


\bibitem[\protect\citeauthoryear{Dolan, White, Sivaramakrishnan, Yallop, and
  Madhavapeddy}{Dolan et~al\mbox{.}}{2015}]%
        {DolanWSYM15}
\bibfield{author}{\bibinfo{person}{Stephen Dolan}, \bibinfo{person}{Leo White},
  \bibinfo{person}{{KC} Sivaramakrishnan}, \bibinfo{person}{Jeremy Yallop},
  {and} \bibinfo{person}{Anil Madhavapeddy}.} \bibinfo{year}{2015}\natexlab{}.
\newblock \bibinfo{title}{Effective Concurrency through Algebraic Effects}.
\newblock \bibinfo{howpublished}{OCaml Workshop}.
\newblock


\bibitem[\protect\citeauthoryear{Escard{\'{o}}}{Escard{\'{o}}}{2007}]%
        {Escardo07}
\bibfield{author}{\bibinfo{person}{Mart{\'{\i}}n~H{\"{o}}tzel Escard{\'{o}}}.}
  \bibinfo{year}{2007}\natexlab{}.
\newblock \showarticletitle{Infinite sets that admit fast exhaustive search}.
  In \bibinfo{booktitle}{\emph{{LICS}}}. \bibinfo{publisher}{{IEEE} Computer
  Society}, \bibinfo{pages}{443--452}.
\newblock


\bibitem[\protect\citeauthoryear{Farvardin and Reppy}{Farvardin and
  Reppy}{2020}]%
        {FarvardinR20}
\bibfield{author}{\bibinfo{person}{Kavon Farvardin} {and}
  \bibinfo{person}{John~H. Reppy}.} \bibinfo{year}{2020}\natexlab{}.
\newblock \showarticletitle{From folklore to fact: comparing implementations of
  stacks and continuations}. In \bibinfo{booktitle}{\emph{{PLDI}}}.
  \bibinfo{publisher}{{ACM}}, \bibinfo{pages}{75--90}.
\newblock


\bibitem[\protect\citeauthoryear{Felleisen}{Felleisen}{1987}]%
        {Felleisen87}
\bibfield{author}{\bibinfo{person}{Matthias Felleisen}.}
  \bibinfo{year}{1987}\natexlab{}.
\newblock \emph{\bibinfo{title}{The Calculi of Lambda-nu-cs Conversion: A
  Syntactic Theory of Control and State in Imperative Higher-order Programming
  Languages}}.
\newblock \bibinfo{thesistype}{Ph.D. Dissertation}.
  \bibinfo{address}{Indianapolis, IN, USA}.
\newblock
\newblock
\shownote{AAI8727494.}


\bibitem[\protect\citeauthoryear{Felleisen}{Felleisen}{1988}]%
        {Felleisen88}
\bibfield{author}{\bibinfo{person}{Matthias Felleisen}.}
  \bibinfo{year}{1988}\natexlab{}.
\newblock \showarticletitle{The Theory and Practice of First-Class Prompts}. In
  \bibinfo{booktitle}{\emph{{POPL}}}. \bibinfo{publisher}{{ACM} Press},
  \bibinfo{pages}{180--190}.
\newblock


\bibitem[\protect\citeauthoryear{Felleisen}{Felleisen}{1991}]%
        {Felleisen91}
\bibfield{author}{\bibinfo{person}{Matthias Felleisen}.}
  \bibinfo{year}{1991}\natexlab{}.
\newblock \showarticletitle{On the expressive power of programming languages}.
\newblock \bibinfo{journal}{\emph{Sci. Comput. Prog.}} \bibinfo{volume}{17},
  \bibinfo{number}{1--3} (\bibinfo{year}{1991}), \bibinfo{pages}{35--75}.
\newblock


\bibitem[\protect\citeauthoryear{Felleisen and Friedman}{Felleisen and
  Friedman}{1987}]%
        {FelleisenF86}
\bibfield{author}{\bibinfo{person}{Matthias Felleisen} {and}
  \bibinfo{person}{Daniel~P. Friedman}.} \bibinfo{year}{1987}\natexlab{}.
\newblock \showarticletitle{Control Operators, the {SECD}-machine, and the
  $\lambda$-Calculus}. In \bibinfo{booktitle}{\emph{The Proceedings of the
  Conference on Formal Description of Programming Concepts III, Ebberup,
  Denmark}}. \bibinfo{publisher}{Elsevier}, \bibinfo{pages}{193--217}.
\newblock


\bibitem[\protect\citeauthoryear{Flanagan, Sabry, Duba, and Felleisen}{Flanagan
  et~al\mbox{.}}{1993}]%
        {FlanaganSDF93}
\bibfield{author}{\bibinfo{person}{Cormac Flanagan}, \bibinfo{person}{Amr
  Sabry}, \bibinfo{person}{Bruce~F. Duba}, {and} \bibinfo{person}{Matthias
  Felleisen}.} \bibinfo{year}{1993}\natexlab{}.
\newblock \showarticletitle{The Essence of Compiling with Continuations}. In
  \bibinfo{booktitle}{\emph{{PLDI}}}. \bibinfo{publisher}{{ACM}},
  \bibinfo{pages}{237--247}.
\newblock


\bibitem[\protect\citeauthoryear{Flatt and Dybvig}{Flatt and Dybvig}{2020}]%
        {FlattD20}
\bibfield{author}{\bibinfo{person}{Matthew Flatt} {and}
  \bibinfo{person}{R.~Kent Dybvig}.} \bibinfo{year}{2020}\natexlab{}.
\newblock \showarticletitle{Compiler and runtime support for continuation
  marks}. In \bibinfo{booktitle}{\emph{{PLDI}}}. \bibinfo{publisher}{{ACM}},
  \bibinfo{pages}{45--58}.
\newblock


\bibitem[\protect\citeauthoryear{Forster, Kammar, Lindley, and Pretnar}{Forster
  et~al\mbox{.}}{2019}]%
        {ForsterKLP19}
\bibfield{author}{\bibinfo{person}{Yannick Forster}, \bibinfo{person}{Ohad
  Kammar}, \bibinfo{person}{Sam Lindley}, {and} \bibinfo{person}{Matija
  Pretnar}.} \bibinfo{year}{2019}\natexlab{}.
\newblock \showarticletitle{On the expressive power of user-defined effects:
  Effect handlers, monadic reflection, delimited control}.
\newblock \bibinfo{journal}{\emph{J. Funct. Program.}}  \bibinfo{volume}{29}
  (\bibinfo{year}{2019}), \bibinfo{pages}{e15}.
\newblock


\bibitem[\protect\citeauthoryear{Hillerstr{\"{o}}m and
  Lindley}{Hillerstr{\"{o}}m and Lindley}{2016}]%
        {HillerstromL16}
\bibfield{author}{\bibinfo{person}{Daniel Hillerstr{\"{o}}m} {and}
  \bibinfo{person}{Sam Lindley}.} \bibinfo{year}{2016}\natexlab{}.
\newblock \showarticletitle{Liberating effects with rows and handlers}. In
  \bibinfo{booktitle}{\emph{TyDe@ICFP}}. \bibinfo{publisher}{{ACM}},
  \bibinfo{pages}{15--27}.
\newblock


\bibitem[\protect\citeauthoryear{Hillerstr{\"{o}}m and
  Lindley}{Hillerstr{\"{o}}m and Lindley}{2018}]%
        {HillerstromL18}
\bibfield{author}{\bibinfo{person}{Daniel Hillerstr{\"{o}}m} {and}
  \bibinfo{person}{Sam Lindley}.} \bibinfo{year}{2018}\natexlab{}.
\newblock \showarticletitle{Shallow Effect Handlers}. In
  \bibinfo{booktitle}{\emph{{APLAS}}} \emph{(\bibinfo{series}{Lecture Notes in
  Computer Science}, Vol.~\bibinfo{volume}{11275})}.
  \bibinfo{publisher}{Springer}, \bibinfo{pages}{415--435}.
\newblock


\bibitem[\protect\citeauthoryear{Hillerstr{\"{o}}m, Lindley, and
  Atkey}{Hillerstr{\"{o}}m et~al\mbox{.}}{2020a}]%
        {HillerstromLA20}
\bibfield{author}{\bibinfo{person}{Daniel Hillerstr{\"{o}}m},
  \bibinfo{person}{Sam Lindley}, {and} \bibinfo{person}{Robert Atkey}.}
  \bibinfo{year}{2020}\natexlab{a}.
\newblock \showarticletitle{Effect handlers via generalised continuations}.
\newblock \bibinfo{journal}{\emph{J. Funct. Program.}}  \bibinfo{volume}{30}
  (\bibinfo{year}{2020}), \bibinfo{pages}{e5}.
\newblock


\bibitem[\protect\citeauthoryear{Hillerstr{\"{o}}m, Lindley, Atkey, and
  Sivaramakrishnan}{Hillerstr{\"{o}}m et~al\mbox{.}}{2017}]%
        {HillerstromLAS17}
\bibfield{author}{\bibinfo{person}{Daniel Hillerstr{\"{o}}m},
  \bibinfo{person}{Sam Lindley}, \bibinfo{person}{Robert Atkey}, {and}
  \bibinfo{person}{K.~C. Sivaramakrishnan}.} \bibinfo{year}{2017}\natexlab{}.
\newblock \showarticletitle{Continuation Passing Style for Effect Handlers}. In
  \bibinfo{booktitle}{\emph{{FSCD}}} \emph{(\bibinfo{series}{LIPIcs},
  Vol.~\bibinfo{volume}{84})}. \bibinfo{publisher}{Schloss Dagstuhl -
  Leibniz-Zentrum fuer Informatik}, \bibinfo{pages}{18:1--18:19}.
\newblock


\bibitem[\protect\citeauthoryear{Hillerstr{\"{o}}m, Lindley, and
  Longley}{Hillerstr{\"{o}}m et~al\mbox{.}}{2020b}]%
        {hillerstromLJ20}
\bibfield{author}{\bibinfo{person}{Daniel Hillerstr{\"{o}}m},
  \bibinfo{person}{Sam Lindley}, {and} \bibinfo{person}{John Longley}.}
  \bibinfo{year}{2020}\natexlab{b}.
\newblock \bibinfo{title}{Effects for Efficiency: Asymptotic Speedup with
  First-Class Control (extended version)}.
\newblock
\newblock
\showeprint[arxiv]{2007.00605}~[cs.PL]


\bibitem[\protect\citeauthoryear{Hughes}{Hughes}{1986}]%
        {Hughes86}
\bibfield{author}{\bibinfo{person}{John Hughes}.}
  \bibinfo{year}{1986}\natexlab{}.
\newblock \showarticletitle{A Novel Representation of Lists and its Application
  to the Function "reverse"}.
\newblock \bibinfo{journal}{\emph{Inf. Process. Lett.}} \bibinfo{volume}{22},
  \bibinfo{number}{3} (\bibinfo{year}{1986}), \bibinfo{pages}{141--144}.
\newblock


\bibitem[\protect\citeauthoryear{Jones}{Jones}{2001}]%
        {Jones01}
\bibfield{author}{\bibinfo{person}{Neil Jones}.}
  \bibinfo{year}{2001}\natexlab{}.
\newblock \showarticletitle{The expressive power of higher-order types, or,
  life without {CONS}}.
\newblock \bibinfo{journal}{\emph{J. Funct. Program.}}  \bibinfo{volume}{11}
  (\bibinfo{year}{2001}), \bibinfo{pages}{5--94}.
\newblock


\bibitem[\protect\citeauthoryear{Kammar, Lindley, and Oury}{Kammar
  et~al\mbox{.}}{2013}]%
        {KammarLO13}
\bibfield{author}{\bibinfo{person}{Ohad Kammar}, \bibinfo{person}{Sam Lindley},
  {and} \bibinfo{person}{Nicolas Oury}.} \bibinfo{year}{2013}\natexlab{}.
\newblock \showarticletitle{Handlers in action}. In
  \bibinfo{booktitle}{\emph{{ICFP}}}. \bibinfo{publisher}{{ACM}},
  \bibinfo{pages}{145--158}.
\newblock


\bibitem[\protect\citeauthoryear{Kiselyov, Sabry, and Swords}{Kiselyov
  et~al\mbox{.}}{2013}]%
        {KiselyovSS13}
\bibfield{author}{\bibinfo{person}{Oleg Kiselyov}, \bibinfo{person}{Amr Sabry},
  {and} \bibinfo{person}{Cameron Swords}.} \bibinfo{year}{2013}\natexlab{}.
\newblock \showarticletitle{Extensible effects: an alternative to monad
  transformers}. In \bibinfo{booktitle}{\emph{Haskell}}.
  \bibinfo{publisher}{{ACM}}, \bibinfo{pages}{59--70}.
\newblock


\bibitem[\protect\citeauthoryear{Kiselyov, Shan, Friedman, and Sabry}{Kiselyov
  et~al\mbox{.}}{2005}]%
        {KiselyovSFA05}
\bibfield{author}{\bibinfo{person}{Oleg Kiselyov}, \bibinfo{person}{Chung-chieh
  Shan}, \bibinfo{person}{Daniel~P. Friedman}, {and} \bibinfo{person}{Amr
  Sabry}.} \bibinfo{year}{2005}\natexlab{}.
\newblock \showarticletitle{Backtracking, Interleaving, and Terminating Monad
  Transformers: (Functional Pearl)}.
\newblock  (\bibinfo{year}{2005}), \bibinfo{pages}{192--203}.
\newblock


\bibitem[\protect\citeauthoryear{Knuth}{Knuth}{1997}]%
        {Knuth97}
\bibfield{author}{\bibinfo{person}{Donald Knuth}.}
  \bibinfo{year}{1997}\natexlab{}.
\newblock \bibinfo{booktitle}{\emph{The Art of Computer Programming, Volume 1:
  Fundamental Algorithms (third edition)}}.
\newblock \bibinfo{publisher}{Addison-Wesley}.
\newblock


\bibitem[\protect\citeauthoryear{Leijen}{Leijen}{2017}]%
        {Leijen17}
\bibfield{author}{\bibinfo{person}{Daan Leijen}.}
  \bibinfo{year}{2017}\natexlab{}.
\newblock \showarticletitle{Type directed compilation of row-typed algebraic
  effects}. In \bibinfo{booktitle}{\emph{{POPL}}}. \bibinfo{publisher}{{ACM}},
  \bibinfo{pages}{486--499}.
\newblock


\bibitem[\protect\citeauthoryear{Levy, Power, and Thielecke}{Levy
  et~al\mbox{.}}{2003}]%
        {LevyPT03}
\bibfield{author}{\bibinfo{person}{Paul~Blain Levy}, \bibinfo{person}{John
  Power}, {and} \bibinfo{person}{Hayo Thielecke}.}
  \bibinfo{year}{2003}\natexlab{}.
\newblock \showarticletitle{Modelling environments in call-by-value programming
  languages}.
\newblock \bibinfo{journal}{\emph{Inf. Comput.}} \bibinfo{volume}{185},
  \bibinfo{number}{2} (\bibinfo{year}{2003}), \bibinfo{pages}{182--210}.
\newblock


\bibitem[\protect\citeauthoryear{Lindley, McBride, and McLaughlin}{Lindley
  et~al\mbox{.}}{2017}]%
        {LindleyMM17}
\bibfield{author}{\bibinfo{person}{Sam Lindley}, \bibinfo{person}{Conor
  McBride}, {and} \bibinfo{person}{Craig McLaughlin}.}
  \bibinfo{year}{2017}\natexlab{}.
\newblock \showarticletitle{Do be do be do}. In
  \bibinfo{booktitle}{\emph{{POPL}}}. \bibinfo{publisher}{{ACM}},
  \bibinfo{pages}{500--514}.
\newblock


\bibitem[\protect\citeauthoryear{Longley}{Longley}{1999}]%
        {Longley99}
\bibfield{author}{\bibinfo{person}{John Longley}.}
  \bibinfo{year}{1999}\natexlab{}.
\newblock \showarticletitle{When is a functional program not a functional
  program?}. In \bibinfo{booktitle}{\emph{{ICFP}}}. \bibinfo{publisher}{{ACM}},
  \bibinfo{pages}{1--7}.
\newblock


\bibitem[\protect\citeauthoryear{Longley}{Longley}{2018}]%
        {Longley18a}
\bibfield{author}{\bibinfo{person}{John Longley}.}
  \bibinfo{year}{2018}\natexlab{}.
\newblock \showarticletitle{The recursion hierarchy for {PCF} is strict}.
\newblock \bibinfo{journal}{\emph{Logical Methods in Comput. Sci.}}
  \bibinfo{volume}{14}, \bibinfo{number}{3:8} (\bibinfo{year}{2018}),
  \bibinfo{pages}{1--51}.
\newblock


\bibitem[\protect\citeauthoryear{Longley}{Longley}{2019}]%
        {Longley19}
\bibfield{author}{\bibinfo{person}{John Longley}.}
  \bibinfo{year}{2019}\natexlab{}.
\newblock \showarticletitle{Bar recursion is not computable via iteration}.
\newblock \bibinfo{journal}{\emph{Computability}} \bibinfo{volume}{8},
  \bibinfo{number}{2} (\bibinfo{year}{2019}), \bibinfo{pages}{119--153}.
\newblock


\bibitem[\protect\citeauthoryear{Longley and Normann}{Longley and
  Normann}{2015}]%
        {LongleyN15}
\bibfield{author}{\bibinfo{person}{John Longley} {and} \bibinfo{person}{Dag
  Normann}.} \bibinfo{year}{2015}\natexlab{}.
\newblock \bibinfo{booktitle}{\emph{Higher-Order Computability}}.
\newblock \bibinfo{publisher}{Springer}.
\newblock


\bibitem[\protect\citeauthoryear{Milner}{Milner}{1977}]%
        {Milner77}
\bibfield{author}{\bibinfo{person}{Robin Milner}.}
  \bibinfo{year}{1977}\natexlab{}.
\newblock \showarticletitle{Fully Abstract Models of Typed $\lambda$-Calculi}.
\newblock \bibinfo{journal}{\emph{Theor. Comput. Sci.}} \bibinfo{volume}{4},
  \bibinfo{number}{1} (\bibinfo{year}{1977}), \bibinfo{pages}{1--22}.
\newblock


\bibitem[\protect\citeauthoryear{{MLton}}{{MLton}}{2020}]%
        {mlton}
\bibfield{author}{\bibinfo{person}{{MLton}}.} \bibinfo{year}{2020}\natexlab{}.
\newblock \bibinfo{title}{MLton website}.
\newblock
\newblock
\urldef\tempurl%
\url{http://www.mlton.org}
\showURL{%
\tempurl}


\bibitem[\protect\citeauthoryear{Moggi}{Moggi}{1991}]%
        {Moggi91}
\bibfield{author}{\bibinfo{person}{Eugenio Moggi}.}
  \bibinfo{year}{1991}\natexlab{}.
\newblock \showarticletitle{Notions of Computation and Monads}.
\newblock \bibinfo{journal}{\emph{Inf. Comput.}} \bibinfo{volume}{93},
  \bibinfo{number}{1} (\bibinfo{year}{1991}), \bibinfo{pages}{55--92}.
\newblock


\bibitem[\protect\citeauthoryear{Okasaki}{Okasaki}{1999}]%
        {Okasaki99}
\bibfield{author}{\bibinfo{person}{Chris Okasaki}.}
  \bibinfo{year}{1999}\natexlab{}.
\newblock \bibinfo{booktitle}{\emph{Purely functional data structures}}.
\newblock \bibinfo{publisher}{Cambridge University Press}.
\newblock
\showISBNx{978-0-521-66350-2}


\bibitem[\protect\citeauthoryear{Pippenger}{Pippenger}{1996}]%
        {Pippenger96}
\bibfield{author}{\bibinfo{person}{Nicholas Pippenger}.}
  \bibinfo{year}{1996}\natexlab{}.
\newblock \showarticletitle{Pure versus impure Lisp}. In
  \bibinfo{booktitle}{\emph{{POPL}}}. \bibinfo{publisher}{{ACM}},
  \bibinfo{pages}{104--109}.
\newblock


\bibitem[\protect\citeauthoryear{Pir{\'{o}}g, Polesiuk, and
  Sieczkowski}{Pir{\'{o}}g et~al\mbox{.}}{2019}]%
        {PirogPS19}
\bibfield{author}{\bibinfo{person}{Maciej Pir{\'{o}}g}, \bibinfo{person}{Piotr
  Polesiuk}, {and} \bibinfo{person}{Filip Sieczkowski}.}
  \bibinfo{year}{2019}\natexlab{}.
\newblock \showarticletitle{Typed Equivalence of Effect Handlers and Delimited
  Control}. In \bibinfo{booktitle}{\emph{{FSCD}}}
  \emph{(\bibinfo{series}{LIPIcs}, Vol.~\bibinfo{volume}{131})}.
  \bibinfo{publisher}{Schloss Dagstuhl - Leibniz-Zentrum fuer Informatik},
  \bibinfo{pages}{30:1--30:16}.
\newblock


\bibitem[\protect\citeauthoryear{Plotkin}{Plotkin}{1977}]%
        {Plotkin77}
\bibfield{author}{\bibinfo{person}{Gordon Plotkin}.}
  \bibinfo{year}{1977}\natexlab{}.
\newblock \showarticletitle{{LCF} considered as a programming language}.
\newblock \bibinfo{journal}{\emph{Theor. Comput. Sci.}} \bibinfo{volume}{5},
  \bibinfo{number}{3} (\bibinfo{year}{1977}), \bibinfo{pages}{223--255}.
\newblock


\bibitem[\protect\citeauthoryear{Plotkin and Power}{Plotkin and Power}{2001}]%
        {PlotkinP01}
\bibfield{author}{\bibinfo{person}{Gordon~D. Plotkin} {and}
  \bibinfo{person}{John Power}.} \bibinfo{year}{2001}\natexlab{}.
\newblock \showarticletitle{Adequacy for Algebraic Effects}. In
  \bibinfo{booktitle}{\emph{FoSSaCS}} \emph{(\bibinfo{series}{Lecture Notes in
  Computer Science}, Vol.~\bibinfo{volume}{2030})}.
  \bibinfo{publisher}{Springer}, \bibinfo{pages}{1--24}.
\newblock


\bibitem[\protect\citeauthoryear{Plotkin and Pretnar}{Plotkin and
  Pretnar}{2013}]%
        {PlotkinP13}
\bibfield{author}{\bibinfo{person}{Gordon~D. Plotkin} {and}
  \bibinfo{person}{Matija Pretnar}.} \bibinfo{year}{2013}\natexlab{}.
\newblock \showarticletitle{Handling Algebraic Effects}.
\newblock \bibinfo{journal}{\emph{Logical Methods in Computer Science}}
  \bibinfo{volume}{9}, \bibinfo{number}{4} (\bibinfo{year}{2013}).
\newblock


\bibitem[\protect\citeauthoryear{Pretnar}{Pretnar}{2015}]%
        {Pretnar15}
\bibfield{author}{\bibinfo{person}{Matija Pretnar}.}
  \bibinfo{year}{2015}\natexlab{}.
\newblock \showarticletitle{An Introduction to Algebraic Effects and Handlers}.
\newblock \bibinfo{journal}{\emph{Electr. Notes Theor. Comput. Sci.}}
  \bibinfo{volume}{319} (\bibinfo{year}{2015}), \bibinfo{pages}{19--35}.
\newblock
\newblock
\shownote{Invited tutorial paper.}


\bibitem[\protect\citeauthoryear{Scott and Strachey}{Scott and
  Strachey}{1971}]%
        {ScottS71}
\bibfield{author}{\bibinfo{person}{Dana Scott} {and}
  \bibinfo{person}{Christopher Strachey}.} \bibinfo{year}{1971}\natexlab{}.
\newblock \bibinfo{journal}{\emph{Proceedings of the Symposium on Computers and
  Automata}}  \bibinfo{volume}{21} (\bibinfo{year}{1971}).
\newblock


\bibitem[\protect\citeauthoryear{Simpson}{Simpson}{1998}]%
        {Simpson98}
\bibfield{author}{\bibinfo{person}{Alex~K. Simpson}.}
  \bibinfo{year}{1998}\natexlab{}.
\newblock \showarticletitle{Lazy Functional Algorithms for Exact Real
  Functionals}. In \bibinfo{booktitle}{\emph{{MFCS}}}
  \emph{(\bibinfo{series}{Lecture Notes in Computer Science},
  Vol.~\bibinfo{volume}{1450})}. \bibinfo{publisher}{Springer},
  \bibinfo{pages}{456--464}.
\newblock


\bibitem[\protect\citeauthoryear{{SML/NJ}}{{SML/NJ}}{2020}]%
        {smlnj}
\bibfield{author}{\bibinfo{person}{{SML/NJ}}.} \bibinfo{year}{2020}\natexlab{}.
\newblock \bibinfo{title}{SML/NJ website}.
\newblock
\newblock
\urldef\tempurl%
\url{http://www.smlnj.org}
\showURL{%
\tempurl}


\bibitem[\protect\citeauthoryear{Sperber, Dybvig, Flatt, {van Stratten},
  Findler, and Matthews}{Sperber et~al\mbox{.}}{2009}]%
        {SperberDFvSFM09}
\bibfield{author}{\bibinfo{person}{Michael Sperber}, \bibinfo{person}{Kent~R.
  Dybvig}, \bibinfo{person}{Matthew Flatt}, \bibinfo{person}{Anton {van
  Stratten}}, \bibinfo{person}{Robby~Bruce Findler}, {and}
  \bibinfo{person}{Jacob Matthews}.} \bibinfo{year}{2009}\natexlab{}.
\newblock \showarticletitle{Revised$^6$ Report on the Algorithmic Language
  {Scheme}}.
\newblock \bibinfo{journal}{\emph{J. Funct. Program.}} \bibinfo{volume}{19},
  \bibinfo{number}{S1} (\bibinfo{year}{2009}), \bibinfo{pages}{1--301}.
\newblock


\end{thebibliography}

\ifdefined\EXTENDED
\clearpage
\fi

\appendix
\extendedsection{Correctness of the Base Machine}
\label{sec:base-machine-correctness}
\ifdefined\EXTENDED

We now show that the base abstract machine is correct with respect to
the operational semantics, that is, the abstract machine faithfully
simulates the operational semantics.
Initial states provide a canonical way to map a computation term onto
the abstract machine.
A more interesting question is how to map an arbitrary configuration
to a computation term.
Figure~\ref{fig:config-to-term} describes such a mapping $\inv{-}$
from configurations to terms via a collection of mutually recursive
functions defined on configurations, continuations, computation terms,
value terms, and machine values. The mapping makes use of two
operations on environments, $\gamma$, which we define now.

\begin{definition}
\begin{sloppypar}
We write $dom(\gamma)$ for the domain of $\gamma$, and $\gamma \res
\{x_1, \dots, x_n\}$ for the restriction of environment $\gamma$ to
$dom(\gamma) \res \{x_1, \dots, x_n\}$.
\end{sloppypar}
\end{definition}

\renewcommand{\contapp}[2]{#1 #2}
\renewcommand{\contappp}[2]{#1(#2)}
\begin{figure*}
\small
\raggedright

\begin{minipage}{0.4\textwidth}
\textbf{Configurations}
\begin{mathpar}
\inv{\cek{M \mid \env \mid \sigma}} = \contappp{\inv{\sigma}}{\inv{M}\env}
\end{mathpar}
\end{minipage}
\begin{minipage}{0.5\textwidth}
\textbf{Pure continuations}
\begin{equations}
\contapp{\inv{[]}}{M}
  &=& M \\
\contapp{\inv{(\env, x, N) \cons \sigma}}{M}
  &=& \contappp{\inv{\sigma}}{\Let\; x \revto M \;\In\; \inv{N}(\env \res \{x\})} \\
\end{equations}
\end{minipage}

\textbf{Computation terms}
\begin{equations}
\inv{V\,W}\env &=& \inv{V}\env\,\inv{W}{\env} \\
\inv{\Let\;\Record{x; y} = V \;\In\;N}\env
  &=& \Let\;\Record{x; y} =\inv{V}\env \;\In\; \inv{N}(\env \res \{x, y\}) \\
\hspace{2cm}\inv{\Case\;V\,\{\Inl\;x \mapsto M; \Inr\;y \mapsto N\}}\env
  &=& \Case\;\inv{V}\env \,\{\bl
                             \Inl\;x \mapsto \inv{M}(\env \res \{x\}); \\
                             \Inr\;y \mapsto \inv{N}(\env \res \{y\})\} \\
                             \el \\
\inv{\Return\;V}\env &=& \Return\;\inv{V}\env \\
\inv{\Let\;x \revto M \;\In\;N}\env
  &=& \Let\;x \revto\inv{M}\env \;\In\; \inv{N}(\env \res \{x\}) \\
\end{equations}

\textbf{Value terms and values}
\begin{mathpar}
\begin{eqs}
\inv{x}\env                  &=& \inv{v}, \quad \text{ if }\env(x) = v \\
\inv{x}\env                  &=& x, \quad \text{ if }x \notin dom(\env) \\
\inv{n}\env                  &=& n \\
\inv{\lambda x^A.M}\env      &=& \lambda x^A.\inv{M}(\env \res \{x\}) \\
\inv{\Rec\,f\, x^A.M}\env    &=& \Rec\,f\,x^A.\inv{M}(\env \res \{f,x\}) \\
\inv{\Unit{}}\env          &=& \Unit{} \\
\inv{\Record{V, W}}\env &=& \Record{\inv{V}\env, \inv{W}\env} \\
\inv{(\Inl\;V)^B}\env        &=& (\Inl\;\inv{V}\env)^B \\
\inv{(\Inr\;W)^A}\env        &=& (\Inr\;\inv{W}\env)^A \\
\end{eqs}

\begin{eqs}
\inv{n}                        &=& n \\
\inv{(\env, \lambda x^A.M)}      &=& \lambda x^A.\inv{M}(\env \res \{x\}) \\
\inv{(\env, \Rec\,f\,x^A.M)}     &=& \Rec\,f\, x^A.\inv{M}(\env \res \{f,x\}) \\
\inv{\Unit}                &=& \Unit \\
\inv{\Record{v; w}}       &=& \Record{\inv{v}; \inv{w}} \\
\inv{(\Inl\;v)^B}         &=& (\Inl\;\inv{v})^B \\
\inv{(\Inr\;w)^A}         &=& (\Inr\;\inv{w})^A \\
\inv{\sigma^A} &=& \lambda x^A.\inv{\sigma}(\Return\;x) \\
\end{eqs}
\end{mathpar}

\caption{Mapping from Base Machine Configurations to Terms}
\label{fig:config-to-term}
\end{figure*}

The $\inv{-}$ function enables us to classify the abstract machine
reduction rules according to how they relate to the operational
semantics.
The rule (\mlab{Let}) is administrative in the sense that $\inv{-}$ is
invariant under this rule.
This leaves the $\beta$-rules (\mlab{App}), (\mlab{Split}),
(\mlab{Case}), and (\mlab{RetCont}). Each of these corresponds
directly with performing a reduction in the operational semantics.

\begin{definition}[Auxiliary reduction relations]
We write $\stepsto_{\textrm{a}}$ for administrative steps
($\mlab{Let}$) and $\simeq_{\textrm{a}}$ for the symmetric closure of
$\stepsto_{\textrm{a}}^*$. We write $\stepsto_\beta$ for $\beta$-steps
(all other rules) and $\Stepsto$ for a sequence of steps of the form
$\stepsto_{\textrm{a}}^* \stepsto_\beta$.
\end{definition}

The following lemma describes how we can simulate each reduction in
the operational semantics by a sequence of administrative steps
followed by one $\beta$-step in the abstract machine.
\begin{lemma}
\label{lem:base-simulation}
Suppose $M$ is a computation and $\conf$ is configuration such that
$\inv{\conf} = M$, then if $M \reducesto N$ there exists $\conf'$ such
that $\conf \Stepsto \conf'$ and $\inv{\conf'} = N$, or if
$M \not\reducesto$ then $\conf \not\Stepsto$.
\end{lemma}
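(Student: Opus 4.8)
The plan is to prove the two halves of the disjunction separately, with essentially all the work in the simulation half (the case $M \reducesto N$), proceeding by induction on the size of the computation component of $\conf$. Write $\conf = \cek{M' \mid \env \mid \sigma}$, so that $M = \inv{\conf} = \inv{\sigma}(\inv{M'}\env)$. The redex that the (deterministic) operational semantics fires sits at the head of the evaluation context obtained by composing $\inv{\sigma}$ with whatever $\Let$-nesting appears in $M'$; the strategy is to let the machine navigate to exactly that redex using administrative $\mlab{Let}$-steps and then fire a single $\beta$-step, whose reduct reads back to $N$.

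First I would record two auxiliary facts. The administrative invariance of $\mlab{Let}$, namely
\[
  \inv{\cek{\Let\;x\revto M_1\In M_2 \mid \env \mid \sigma}} ~=~ \inv{\cek{M_1 \mid \env \mid (\env,x,M_2)\cons\sigma}},
\]
follows immediately by unfolding the clauses of $\inv{-}$ on configurations and on pure continuations. The genuine technical content is a coherence-and-substitution lemma relating environment extension to term-level substitution: for every computation $N'$, value term $V$, environment $\env$ and variable $x$,
\[
  \inv{N'}(\env[x\mapsto\val{V}\env]) ~=~ \bigl(\inv{N'}(\env\res\{x\})\bigr)[\inv{V}\env / x],
\]
together with its companion $\inv{\val{V}\env} = \inv{V}\env$ (proved by structural induction on $V$) and the evident two-variable version demanded by $\mlab{Rec}$. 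This substitution lemma is proved by a routine but delicate induction on the structure of $N'$, the care lying entirely in tracking free variables through the restrictions $\env\res\{\dots\}$ that appear in the clauses for binders and for recursive closures.

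With these in hand I would do the case analysis on $M'$. If $M' = \Let\;x\revto M_1\In M_2$, the machine takes one administrative step to $\conf_1 = \cek{M_1 \mid \env \mid (\env,x,M_2)\cons\sigma}$; administrative invariance gives $\inv{\conf_1}=M$, and since the current computation $M_1$ is a strict subterm, the induction hypothesis yields $\conf_1 \Stepsto \conf'$ with $\inv{\conf'}=N$, so that $\conf\stepsto_{\textrm{a}}\conf_1\Stepsto\conf'$ is again of the form $\stepsto_{\textrm{a}}^{*}\stepsto_\beta$. Otherwise $M'$ is an application, a product or sum elimination, or $\Return\;V$, and in each case $M'$ (under the context $\inv{\sigma}$) is itself the head redex: the machine fires exactly one $\beta$-step via the correspondingly named rule, and I verify $\inv{\conf'}=N$ by matching that rule against its operational counterpart ($\mlab{App}/\mlab{Rec}/\mlab{Const}$ against $\semlab{App}/\semlab{App-Rec}/\semlab{Const}$, $\mlab{Split}$ against $\semlab{Split}$, $\mlab{CaseL}/\mlab{CaseR}$ against $\semlab{Case-inl}/\semlab{Case-inr}$, and $\mlab{RetCont}$ against $\semlab{Let}$) and discharging the residual equality with the substitution lemma. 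In the $\Return\;V$ subcase one distinguishes $\sigma=\nil$, which makes $\conf$ final and forces $M=\Return(\inv{V}\env)$ to be normal (impossible when $M\reducesto N$), from $\sigma=(\env',x,N')\cons\sigma'$, which fires $\mlab{RetCont}$.

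For the second half, suppose $M\not\reducesto$. Unfolding $\inv{-}$ shows that $\inv{\conf}$ can be normal only when $\sigma=\nil$ and $M'=\Return\;V$: any other shape of $M'$, or any non-empty $\sigma$, exposes a redex in $\inv{\conf}$. Hence $\conf$ is a final configuration admitting no $\beta$-step, so $\conf\not\Stepsto$. The main obstacle throughout is the substitution lemma: reconciling the machine's environment-based account of substitution with the literal substitution of the operational semantics is where all the real content resides, and getting the free-variable side conditions and the recursive-closure clause right is the only genuinely delicate point.
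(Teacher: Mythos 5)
Your proof is correct in substance and supplies exactly the details the paper omits: the paper's own proof of this lemma is the single line ``by induction on the derivation of $M \reducesto N$'', so the real content --- the administrative invariance of $\mlab{Let}$, the substitution lemma reconciling environment extension with meta-level substitution, and the rule-by-rule matching of $\beta$-transitions against reduction rules --- is precisely what any complete proof must contain. The one organisational difference is that you induct on the computation component of the configuration (peeling off $\Let$-binders administratively) where the paper inducts on the reduction derivation (peeling off $\semlab{Lift}$ layers); these measures descend in lockstep and either works. Two small points. First, your substitution lemma should be stated for an arbitrary machine value $v$ extending an arbitrary closure environment $\env'$, i.e.\ $\inv{N'}(\env'[x\mapsto v]) = (\inv{N'}(\env'\res\{x\}))[\inv{v}/x]$: in the $\mlab{App}$ case the environment being extended is the one stored in the closure, not the one in which the argument $W$ is evaluated, so your phrasing with a single $\env$ playing both roles is too specific, though you clearly intend the general form. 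Second, your argument for the half $M \not\reducesto$ implies $\conf \not\Stepsto$ claims that $\inv{\conf}$ fails to reduce only when $\sigma=\nil$ and $M'=\Return\;V$; the paper explicitly notes that this lemma does not assume well-typedness, so $M$ may also be an ill-typed stuck term (e.g.\ $\Unit\,\Unit$, or a $\Case$ on a non-sum value). The conclusion still holds --- the machine has no applicable transition at the same stuck elimination, so no $\stepsto_\beta$ step and hence no $\Stepsto$ step is available --- but this case needs to be acknowledged rather than excluded by the normal-form claim.
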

\begin{proof}
By induction on the derivation of $M \reducesto N$.
\end{proof}
The correspondence here is rather strong: there is a one-to-one
mapping between $\reducesto$ and
$\Stepsto\mathbin{/}\simeq_{\textrm{a}}$ (where we write $R/S$ for the
quotient of relation $R$ by relation $S$). The inverse of the lemma is
straightforward as the semantics is deterministic.
Notice that Lemma~\ref{lem:base-simulation} does not require that $M$ be
well-typed. We have chosen here not to perform type-erasure, but the
results can be adapted to semantics in which all type annotations are
erased.

\begin{theorem}[Base simulation]\label{thm:base-simulation}
  If $\typ{}{M : A}$ and $M \reducesto^+ N$ such that
  $N$ is normal, then $\cek{M \mid \emptyset \mid \nil} \stepsto^+
  \conf$ such that $\inv{\conf} = N$, or $M
  \not\reducesto$ then $\cek{M \mid \emptyset \mid \nil}
  \not\stepsto$.
\end{theorem}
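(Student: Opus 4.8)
The plan is to lift the single-step correspondence of Lemma~\ref{lem:base-simulation} to the multi-step statement by a routine induction on the length of the reduction sequence $M \reducesto^+ N$. It is cleanest to prove the strengthened claim for an arbitrary starting configuration: for every $\conf_0$ with $\inv{\conf_0} = M_0$, if $M_0 \reducesto^+ N$ with $N$ normal, then $\conf_0 \stepsto^+ \conf'$ for some $\conf'$ with $\inv{\conf'} = N$; the theorem is then the instance $\conf_0 = \cek{M \mid \emptyset \mid \nil}$. The one fact needed to start the induction at this particular $\conf_0$ is that the initial configuration faithfully represents $M$: since $M$ is closed, unwinding Figure~\ref{fig:config-to-term} gives $\inv{\cek{M \mid \emptyset \mid \nil}} = \inv{[\,]}(\inv{M}\emptyset) = \inv{M}\emptyset = M$, because the empty environment substitutes nothing and leaves the (nonexistent) free variables untouched. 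This establishes the invariant $\inv{\conf_0} = M$ at the start.

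For the inductive step I would write $M_0 \reducesto M_1 \reducesto^* N$ and apply Lemma~\ref{lem:base-simulation} to $\conf_0$: since $\inv{\conf_0} = M_0$ and $M_0 \reducesto M_1$, there is $\conf_1$ with $\conf_0 \Stepsto \conf_1$ and $\inv{\conf_1} = M_1$. As $\Stepsto$ is by definition $\stepsto_{\textrm{a}}^{*}\,\stepsto_\beta$, it is a nonempty sequence of genuine machine transitions, so $\conf_0 \stepsto^+ \conf_1$, and the invariant is re-established at $\conf_1$. If $M_1 \reducesto^+ N$, the induction hypothesis applied from $\conf_1$ yields $\conf_1 \stepsto^+ \conf'$ with $\inv{\conf'} = N$; if instead $M_1 = N$ (the remaining reduction is empty) we simply take $\conf' = \conf_1$. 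In either case, concatenating the runs gives $\cek{M \mid \emptyset \mid \nil} \stepsto^+ \conf'$ with $\inv{\conf'} = N$. The exponent $^+$ rather than $^*$ is guaranteed precisely because $M \reducesto^+ N$ contributes at least one $\reducesto$ step, and each such step is simulated by a $\Stepsto$ containing at least one $\stepsto_\beta$.

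For the second clause, if $M \not\reducesto$ then, being well-typed and closed, $M$ is a normal form of $\BCalc$, i.e.\ of the shape $\Return\;V$; the initial configuration $\cek{\Return\;V \mid \emptyset \mid \nil}$ has an empty continuation and is therefore a final state admitting no transition, so $\cek{M \mid \emptyset \mid \nil} \not\stepsto$. I expect the genuinely delicate reasoning to live entirely inside Lemma~\ref{lem:base-simulation}, which we may assume; at the level of this theorem only bookkeeping remains, and the point most worth stating explicitly is that administrative steps $\stepsto_{\textrm{a}}$ leave $\inv{-}$ invariant. This invariance (the reason $\simeq_{\textrm{a}}$ quotients the correspondence) is exactly what makes each $\Stepsto$ map cleanly onto a single $\reducesto$ while preserving the invariant $\inv{\conf} = M$, and hence what lets the induction go through without any further hypotheses on $M$ beyond closedness.
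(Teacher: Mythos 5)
Your proposal is correct and matches the paper's proof, which is simply ``by repeated application of Lemma~\ref{lem:base-simulation}''; your write-up spells out the same induction on the length of the reduction sequence, together with the (routine but worth stating) facts that $\inv{\cek{M \mid \emptyset \mid \nil}} = M$ for closed $M$ and that each $\Stepsto$ contains at least one genuine machine transition.
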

\begin{proof}
By repeated application of Lemma~\ref{lem:base-simulation}.
\end{proof}
\fi

\extendedsection{Correctness of the Handler Machine}
\label{sec:handler-machine-correctness}
\ifdefined\EXTENDED

The correctness result for the base machine can mostly be repurposed
for the handler machine as we need only recheck the cases for
$(\mlab{Let})$ and $(\mlab{RetCont})$ and check the cases for
handlers. Figure~\ref{fig:config-to-term-eff} shows the necessary
changes to the $\inv{-}$ function.

\begin{lemma}
\label{lem:handler-simulation}
Suppose $M$ is a computation and $\conf$ is configuration such that
$\inv{\conf} = M$, then if $M \reducesto N$ there exists $\conf'$ such
that $\conf \Stepsto \conf'$ and $\inv{\conf'} = N$, or if
$M \not\reducesto$ then $\conf \not\Stepsto$.
\end{lemma}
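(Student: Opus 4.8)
The plan is to mirror the proof of Lemma~\ref{lem:base-simulation} exactly, proceeding by induction on the derivation of $M \reducesto N$, now reinterpreting $\inv{-}$ over the generalised continuation structure of Figure~\ref{fig:config-to-term-eff}. The reduction relation of $\HCalc$ extends that of $\BCalc$ by retaining all base axioms, adding the two handler axioms $\semlab{Ret}$ and $\semlab{Op}$, and replacing the $\semlab{Lift}$ congruence over $\EC$ by the congruence $\HC[M] \reducesto \HC[N]$. Correspondingly, I would reclassify the machine rules: the administrative steps (those leaving $\inv{-}$ invariant) are now $\mlab{Let}$ together with $\mlab{Handle}$, while the $\beta$-steps are the inherited value-elimination rules plus $\mlab{RetHandler}$, $\mlab{Handle-Op}$, and $\mlab{Resume}$.

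First I would dispatch the cases inherited verbatim from $\BCalc$. These go through unchanged once we observe that the clauses of $\inv{-}$ for pure-continuation frames and for the base computation forms are structurally identical; the sole difference is that the topmost frame now lives inside a resumption $(\sigma,\chi)$, so $\mlab{Let}$ and $\mlab{RetCont}$ need re-checking only to confirm that pushing or popping a let-frame on $\sigma$ preserves the inverse image while carrying the handler closure $\chi$ along untouched. I would likewise verify directly that $\mlab{Handle}$ is administrative, since $\cek{\Handle\;M\;\With\;H \mid \gamma \mid \kappa}$ and $\cek{M \mid \gamma \mid (\nil,(\gamma,H)) \cons \kappa}$ have the same image under $\inv{-}$.

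For the new axioms, the $\semlab{Ret}$ case follows once the machine reaches $\cek{\Return\;V \mid \gamma \mid (\nil,(\gamma',H)) \cons \kappa}$ administratively and then fires $\mlab{RetHandler}$, landing in a configuration whose inverse is $N[V/x]$ by the success-clause reading of $\inv{-}$. The congruence case over $\HC$ is handled by the induction hypothesis: descending through a $\Handle\;[\,]\;\With\;H$ or $\Let$ frame costs only administrative steps, so the buried redex is reached without disturbing $\inv{-}$ and is then reduced by the hypothesis. The stuck direction ($M \not\reducesto \Rightarrow \conf \not\Stepsto$) is as before, using determinism together with the two final or stuck shapes of the handler machine, namely a value under the identity continuation or an unhandled $\Do\;\ell\,W$.

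The crux, and the step I expect to be the main obstacle, is the $\semlab{Op}$ case, together with the resumption clause of $\inv{-}$ that it forces. Operationally the rule reifies the captured delimited continuation as the syntactic abstraction $\lambda y.\,\Handle\;\EC[\Return\;y]\;\With\;H$, whereas the machine reifies it, via $\mlab{Handle-Op}$, as the resumption value $(\sigma,(\gamma',H))$. I must establish that $\inv{-}$ identifies the two, i.e.\ that $\inv{(\sigma,(\gamma',H))}$ equals $\lambda y.\,\Handle\;\inv{\sigma}(\Return\;y)\;\With\;\inv{H}\gamma'$ up to $\simeq_{\textrm{a}}$, which in turn rests on the fact that the pure continuation $\sigma$ accumulated since the last $\mlab{Handle}$ faithfully encodes the evaluation context $\EC$ up to its nearest enclosing handler, and that $\mlab{Handle-Op}$ fires exactly at that boundary precisely because of the syntactic separation between $\EC$ and $\HC$. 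Once this is secured, I would check that a subsequent application of such a resumption is simulated by $\mlab{Resume}$ as the $\semlab{App}$-reduction of the reified $\lambda$, after which the residual bookkeeping is a routine transcription of the base-machine argument that I would leave to the reader.
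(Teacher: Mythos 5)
Your proposal follows essentially the same route as the paper: the paper's proof is simply ``by induction on the derivation of $M \reducesto N$,'' repurposing the base-machine argument, rechecking $\mlab{Let}$ and $\mlab{RetCont}$, and checking the handler cases against the extended $\inv{-}$ of Figure~\ref{fig:config-to-term-eff}. Your classification of $\mlab{Handle}$ as administrative and your identification of the $\semlab{Op}$/$\mlab{Handle-Op}$ correspondence (matching the reified resumption $(\sigma,(\gamma',H))$ with $\lambda y.\,\Handle\;\EC[\Return\;y]\;\With\;H$ under $\inv{-}$) as the crux are exactly the content the paper leaves implicit.
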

\begin{proof}
By induction on the derivation of $M \reducesto N$.
\end{proof}

\begin{theorem}[Handler simulation]\label{thm:handler-simulation}
  If $\typ{}{M : A}$ and $M \reducesto^+ N$ such that
  $N$ is normal, then $\cek{M \mid \emptyset \mid \kappa_0} \stepsto^+
  \conf$ such that $\inv{\conf} = N$, or $M
  \not\reducesto$ then $\cek{M \mid \emptyset \mid \kappa_0}
  \not\stepsto$.
\end{theorem}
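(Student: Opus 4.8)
The plan is to deduce the theorem from the single-step simulation lemma (Lemma~\ref{lem:handler-simulation}) by a straightforward induction on the length of the reduction sequence $M \reducesto^+ N$, exactly as the base simulation theorem (Theorem~\ref{thm:base-simulation}) is obtained from Lemma~\ref{lem:base-simulation}. The substantive work has already been localised in the lemma; the theorem itself is a bookkeeping argument that chains single-step simulations together.

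First I would check the bridge between terms and configurations at the starting point: the initial configuration $\cek{M \mid \emptyset \mid \kappa_0}$ must satisfy $\inv{\cek{M \mid \emptyset \mid \kappa_0}} = M$. This reduces to showing that the identity continuation $\kappa_0 = [(\nil, (\emptyset, \{\Val\;x \mapsto x\}))]$ is sent by $\inv{-}$ to the empty context, which follows directly from the clauses of the extended inverse map in Figure~\ref{fig:config-to-term-eff} for the identity handler closure paired with an empty pure continuation.

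Next, writing $M = M_0 \reducesto M_1 \reducesto \cdots \reducesto M_k = N$, I would build by induction on $i$ a chain $\conf_0 \Stepsto \conf_1 \Stepsto \cdots \Stepsto \conf_k$ with $\conf_0 = \cek{M \mid \emptyset \mid \kappa_0}$ and $\inv{\conf_i} = M_i$. At each step, since $\inv{\conf_i} = M_i$ and $M_i \reducesto M_{i+1}$, Lemma~\ref{lem:handler-simulation} supplies $\conf_{i+1}$ with $\conf_i \Stepsto \conf_{i+1}$ and $\inv{\conf_{i+1}} = M_{i+1}$. Because every $\Stepsto$ unfolds as $\stepsto_{\textrm{a}}^* \stepsto_\beta$ and hence contains at least one machine transition, concatenating these chains yields $\cek{M \mid \emptyset \mid \kappa_0} \stepsto^+ \conf_k$ with $\inv{\conf_k} = N$, as required. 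For the divergence clause, if $M \not\reducesto$ then the second part of Lemma~\ref{lem:handler-simulation} gives $\cek{M \mid \emptyset \mid \kappa_0} \not\Stepsto$, hence $\not\stepsto$.

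The real content therefore lies in Lemma~\ref{lem:handler-simulation}, and the main obstacle is verifying its inductive step for the handler-specific transitions. The cases inherited from the base machine reduce to the earlier argument, but I would re-examine the modified $(\mlab{Let})$ and $(\mlab{RetCont})$ rules, which now thread the topmost handler closure through the generalised continuation, and then treat the genuinely new rules $(\mlab{Handle})$, $(\mlab{RetHandler})$, $(\mlab{Handle-Op})$, and $(\mlab{Resume})$. The delicate case is $(\mlab{Handle-Op})$ simulating the \semlab{Op} reduction: the machine reifies the captured pure continuation together with its handler closure into a resumption value $\rho = (\sigma, \chi)$, whereas the operational rule substitutes the meta-level abstraction $\lambda y.\,\Handle\;\EC[\Return\;y]\;\With\;H$ for $r$. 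I would show that $\inv{-}$ maps the reified resumption to precisely this abstraction, so that the configuration after $(\mlab{Handle-Op})$ and the term after \semlab{Op} have equal images under $\inv{-}$; the later $(\mlab{Resume})$ transition then corresponds to the administrative unfolding of applying such a reified continuation. Classifying $(\mlab{Handle})$ and $(\mlab{Resume})$ as administrative (so $\inv{-}$ is invariant) and $(\mlab{RetHandler})$, $(\mlab{Handle-Op})$ as $\beta$-steps (so $\inv{-}$ advances by exactly one reduction) completes the correspondence.
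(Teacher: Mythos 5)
Your proposal is correct and follows essentially the same route as the paper: the theorem is obtained by repeated application of Lemma~\ref{lem:handler-simulation} along the reduction sequence, with the observation that each $\Stepsto$ contains at least one machine transition. Your additional remarks on the initial configuration and on the handler-specific cases of the lemma are consistent with the paper's development of $\inv{-}$ in Figure~\ref{fig:config-to-term-eff}.
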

\begin{proof}
By repeated application of Lemma~\ref{lem:handler-simulation}.
\end{proof}

\begin{figure*}
\small
\raggedright

\begin{minipage}[t]{0.4\textwidth}
\textbf{Configurations}
\begin{mathpar}
\inv{\cek{M \mid \env \mid \kappa}} = \contappp{\inv{\kappa}}{\inv{M}\env}
\end{mathpar}
\end{minipage}
\begin{minipage}[t]{0.5\textwidth}
\textbf{Continuations}
\begin{equations}
\contapp{\inv{[]}}{M}
  &=& M \\
\contapp{\inv{(\sigma, \chi) \cons \kappa}}{M}
  &=& \contappp{\inv{\kappa}}{\contappp{\inv{\chi}}{\contappp{\inv{\sigma}}{M}}}
\end{equations}
\end{minipage}

\textbf{Handler Closures and Definitions}
\begin{mathpar}
\begin{eqs}
  \inv{(\env, H)}M &=& \Handle\;M\;\With\;\inv{H}\env
\end{eqs}
\begin{eqs}
  \inv{\{\Val~x \mapsto M\}}\env &=& \{\Val~x \mapsto \inv{M}(\env \res \{x\}) \}\\
  \inv{\{\ell~x~r \mapsto M\} \uplus H}\env &=& \{\ell~x~r \mapsto \inv{M}(\env \res \{x, r\}) \} \uplus \inv{H}\env
\end{eqs}
\end{mathpar}

\textbf{Computation Terms and Machine Values}
\begin{mathpar}
\begin{eqs}
  \inv{\Handle\;M\;\With\;H}\env &=& \Handle\;\inv{M}\env \;\With\;\inv{H}\env\\
  \inv{\Do\;\ell\,V}\env         &=& \Do\;\ell\,\inv{V}\env
\end{eqs}

\begin{eqs}
  \inv{(\env, H)^D}\env &=& \lambda x^D.\inv{(\env, H)}(\Return\;x)
\end{eqs}
\end{mathpar}
\caption{Mapping from Handler Machine Configurations to Terms}
\label{fig:config-to-term-eff}
\end{figure*}
\fi

\extendedsection{Proof Details for the Complexity of Effectful Generic Count}
\label{sec:positive-theorem}
\ifdefined\EXTENDED
\newcommand{\HCount}{H_{\Count}}
\newcommand{\Henv}{\env_{\HCount}}
\newcommand{\Pure}{\dec{pure}}
\newcommand{\envt}{\dec{env}}
\newcommand{\hclo}{\chi_{\Count}}
\newcommand{\DTDT}{\dec{DT}}
\newcommand{\CF}{\dec{CF}}
\newcommand{\residual}{\dec{residual}}
\newcommand{\cont}{\dec{cont}}
\newcommand{\comp}{\dec{comp}}
\newcommand{\purecont}{\dec{purecont}}
\newcommand{\descend}[1]{\dec{env}^{\downarrow}_{#1}}
\newcommand{\ascend}[1]{\dec{env}^{\uparrow}_{#1}}
\newcommand{\initial}{\dec{env}^{\bot}}
\newcommand{\final}{\dec{env}^{\top}}
\newcommand{\ctrl}{\dec{control}}
\newcommand{\dt}{\mathcal{D}}
\newcommand{\arrive}{\dec{arrive}}
\newcommand{\depart}{\dec{depart}}
\newcommand{\where}[1]{%
  \multicolumn{2}{l}%
  {\text{where } #1}%
}

In this appendix we give proof details and artefacts for
Theorem~\ref{thm:complexity-effectful-counting}. Throughout this
section we let $\HCount$ denote the handler definition of $\Count$,
that is
\[
  \HCount \defas
  \left\{
    \ba[m]{@{~}l@{~}c@{~}l}
      \Val~x           &\mapsto& \If\;x\;\Then\; \Return~1 \;\Else\; \Return~0\\
      \Branch~\Unit~r  &\mapsto&
      \ba[t]{@{}l}
        \Let\; x_\True \revto r~\True \; \In\\
        \Let\; x_\False \revto r~\False \; \In\\
        x_\True + x_\False
      \ea
    \ea
  \right\}
\]
The timed decision tree model embeds timing information. For the proof
we must also know the abstract machine environment and the pure
continuation. Thus we decorate timed decision trees with this
information.
\begin{definition}[decorated timed decision trees]
  A decorated timed decision tree is a partial function $\tree :
  \Addr \pto (\Lab \times \Nat) \times \Conf_q$ such
  that its first projection $bs \mapsto \tree(bs).1$ is a timed
  decision tree.
\end{definition}
We extend the projections $\labs$ and $\steps$ in the obvious way to
work over decorated timed decision trees.  We define three further
projections. The first $\comp(\tree) \defas bs \mapsto \tree(bs).2.1$
projects the computation component of the configuration, the second
$\envt(\tree) \defas bs \mapsto \tree(bs).2.2$ projects the
environment, and finally the third
$\Pure(\tree) \defas bs \mapsto \mathsf{head}(t(bs).2.3).1$ projects
the pure continuation.

The following definition gives a procedure for constructing a
decorated timed decision tree. The construction is analogous to that
of Definition~\ref{def:model-construction}.
\begin{definition}\label{def:model-construction-extended}
 (i) Define $\dt : \Conf_q \pto \Addr \pto (\Lab \times \Nat) \times \Conf_q$ to be the minimal family of
  partial functions satisfying the following equations:
{\small
\begin{mathpar}
\ba{@{}r@{~}c@{~}l@{\qquad}l@{}}
  \dt(\cek{\Return\;W \mid \env \mid \nil})\, \nil  &~=~& ((!b, 0), \cek{\Return\;W \mid \env \mid \nil}),
                                                    &\text{if }\val{W}\env = b \smallskip\\
%
  \dt(\cek{z\,V \mid \env \mid \kappa})\, \nil  &~=~& ((?\val{V}{\env}, 0), \cek{z\,V \mid \env \mid \kappa}),
                                                    & \text{if } \gamma(z) = q \smallskip\\
  \dt(\cek{z\,V \mid \env \mid \kappa})\, (b \cons bs) &~\simeq~& \dt(\cek{\Return\;b \mid \env \mid \kappa})\,bs,
                                                                & \text{if } \gamma(z) = q \smallskip\\
  \dt(\cek{M \mid \env \mid \kappa})\, bs &~\simeq~& \mathsf{inc}\,(\dt(\cek{M' \mid \env' \mid \kappa'})\, bs),
  &\text{if } \cek{M \mid \env \mid \kappa} \stepsto \cek{M' \mid \env' \mid \kappa'}
\ea
\end{mathpar}}%
Here
$\mathsf{inc}((\ell, s), \mathcal{C}) = ((\ell, s + 1), \mathcal{C})$,
and in all of the above equations $\gamma(q) = \gamma'(q) =
q$. Clearly $\dt(\conf)$ is a decorated timed decision tree for any
$\conf \in \Conf_q$.

(ii) The decorated timed decision tree of a computation term is
obtained by placing it in the initial configuration:
$\dt(M) \defas \dt(\cek{M, \emptyset[q \mapsto q], \kappa_0})$.

(iii) The decorated timed decision tree of a closed value
$P:\Predicate$ is $\dt(P\,q)$. Since $q$ plays the role of a dummy
argument, we will usually omit it and write $\dt(P)$ for $\dt(P\,q)$.
\end{definition}

We define some functions, that given a list of booleans and a
$n$-standard predicate, compute configurations of the effectful
abstract machine at particular points of interest during evaluation of
the given predicate. Let
$\hclo(V) \defas (\emptyset[pred \mapsto \val{V}\emptyset], \HCount)$
denote the handler closure of $\HCount$.

\paragraph{Notation.} For an $n$-standard predicate $P$ we write
$|P| = n$ for the size of the predicate. Furthermore, we define
$\chi_{\text{id}}$ for the identity handler closure
$(\emptyset, \{ \Val~x \mapsto x \})$.

\begin{definition}[computing machine configurations]
  For any $n$-standard predicate $P$ and a list of booleans $bs$, such
  that $|bs| \leq n$, we can compute machine configurations at points
  of interest during evaluation of $\Count~P$.

  To make the notation slightly simpler we use the following
  conventions whenever $n$, $\tree$, and $c$ appear free: $n = |P|$,
  $\tree = \dt(P)$, and $c(bs) = \sharp(bs' \mapsto \val{P}~(bs \concat bs'))$.
  The definitions are presented in a top-down manner.
  \begin{itemize}
  \item The function $\arrive$ either computes the configuration at a
    query node, if $|bs| < n$, or the configuration at an answer node.
    \begin{equations}
      \arrive &:& \Addr \times \ValCat \pto \Conf \\
      \arrive(bs, P) &\defas& \cek{z~V \mid \env \mid (\sigma, \hclo(P)) \cons \residual(bs, P)}, \quad \text{if } |bs| < n\\
      \multicolumn{3}{@{}l@{}}
      {\hfill
        \text{where } \ba[t]{@{~}l}
        z~V = \comp(\tree)(bs), \env = \envt(\tree)(bs), \env(z) = (\initial(P), \Superpoint) \\
        ?k = \labs(\tree)(bs), \val{V}\env = k, \text{ and } \sigma = \Pure(\tau)(bs)
           \ea}\\
      \arrive(bs, P) &\defas& \cek{\Return\;W \mid \env \mid (\nil, \hclo(P)) \cons \residual(bs, P)}, \quad \text{if } |bs| = n\\
      \multicolumn{3}{@{}l@{}}
      {\hfill
        \text{where } \Return\;W = \comp(\tree)(bs), \env = \envt(\tree)(bs), !b = \labs(\tree)(bs),  \text{ and } \val{W}\env = b}
    \end{equations}
  \item Correspondingly, the $\depart$ function computes the
    configuration either after the completion of a query or handling
    of an answer.
    \begin{equations}
      \depart &:& \Addr \times \ValCat \pto \Conf \\
      \depart(bs, P) &\defas& \cek{\Return\; m \mid \env \mid \residual(bs, P)}, \quad \text{if } |bs| < n\\
      \multicolumn{3}{@{}l@{}}
      {\hfill
        \text{where } \env = \ascend{\False}(bs, P) \text{ and } m = c(bs)}\\
      \depart(bs, P) &\defas& \cek{\Return\;m \mid \env \mid \residual(bs, P)}, \quad \text{if } |bs| = n\\
      \multicolumn{3}{@{}l@{}}
      {\hfill
        \text{where }\ba[t]{@{}l}
        m = c(bs),
        b = \begin{cases} \True & \text{if } m = 1\\
                          \False & \text{if } m = 0
            \end{cases}, \text{ and } \env = \initial(P)[x \mapsto b]
       \ea}
    \end{equations}
    The two clauses of $\depart$ yield slightly different
    configurations. The first clause computes a configuration inside
    the operation clause of $\HCount$. The configuration is exactly
    tail-configuration after summing up the two respective values
    returned by the two invocations of resumption. Whilst the second
    clause computes the tail-configuration inside of the success clause
    of $\HCount$ after handling a return value of the predicate.
    \item The $\residual$ function computes the residual continuation
    structure which contains the bits of computations to perform after
    handling a complete path in a decision tree.
    \begin{equations}
      \residual &:& \Addr \times \ValCat \pto \Cont\\
      \residual(bs, P) &\defas& [(\purecont(bs, P), \chi_{id})]
    \end{equations}
    \item The function $\purecont$ computes the pure continuation.
    \begin{equations}
      \purecont &:& \Addr \times \ValCat \pto \PureCont\\
      \purecont (\nil, P) &\defas& \nil\\
      \purecont (\snoc{bs}{\True}, P) &\defas& (\env, x_\True, \Let\;x_\False\revto r~\False\;\In\;x_\True+x_\False) \cons \purecont(bs, P),\\
      \multicolumn{3}{@{}l@{}}
      {\hfill
        \text{where } \env = \descend{\True}(\snoc{bs}{\True}, P)}\\
      \purecont (\snoc{bs}{\False}, P) &\defas& (\env, x_\False, x_\True+x_\False) \cons \purecont (bs, P),\\
      \multicolumn{3}{@{}l@{}}
      {\hfill
        \text{where } \env = \descend{\False}(\snoc{bs}{\False}, P)}\\
    \end{equations}
  \item The function $\initial$ computes the initial environment of
    the handler. The family of functions $\descend{b\in\mathbb{B}}$
    contains two functions, one for each instantiation of $b$, which
    describe how to compute the environment prior \emph{descending}
    down a branch as the result of invoking a resumption with
    $b$. Analogously, the functions in the family
    $\ascend{b \in \mathbb{B}}$ describe how to compute the
    environment after \emph{ascending} from the resumptive exploration
    of a branch.
    \begin{equations}
      \initial &:& \ValCat \to \Env\\
      \initial(P) &\defas& \emptyset[pred \mapsto \val{P}\emptyset]\\[1.5ex]
    \end{equations}
  \begin{minipage}{.5\linewidth}
  \begin{equations}
      \descend{\True} &:& \Addr \times \ValCat \pto \Env\\
      \descend{\True}(bs, P) &\defas& \initial(P)[r \mapsto (\sigma, \hclo(P))],\\
      \multicolumn{3}{@{}l@{}}
      {\qquad
        \text{where } \sigma = \Pure(\tree)(bs)}\\[1.5ex]
      \ascend{\True} &:& \Addr \times \ValCat \pto \Env\\
      \ascend{\True}(bs, P) &\defas& \env[x_\True \mapsto i],\\
      \multicolumn{3}{@{}l@{}}
      { \qquad\bl
          \text{where } \env = \descend{\True}(bs, P)\\
          \text{and } i = c(\snoc{bs}{\True})
        \el}\\[1.5ex]
    \end{equations}
  \end{minipage}
  \begin{minipage}{.5\linewidth}
    \begin{equations}
      \descend{\False} &:& \Addr \times \ValCat \pto \Env\\
      \descend{\False}(bs, P) &\defas& \ascend{\True}\\[1.5ex]
      \ascend{\False} &:& \Addr \times \ValCat \pto \Env\\
      \ascend{\False}(bs, P) &\defas& \env[x_\False \mapsto j],\\
      \multicolumn{3}{@{}l@{}}
      {\qquad\bl
        \text{where } \env = \descend{\False}(bs, P)\\
        \text{ and } j = c(\snoc{bs}{\False})
       \el}\\
    \end{equations}
  \end{minipage}
  \end{itemize}
\end{definition}
The proof of Theorem~\ref{thm:complexity-effectful-counting} works by
alternating between two different modes of reasoning: intensional and
extensional. The former is used to reason directly about the steps
taken by $\ECount$ program and the latter is used to reason about
steps taken by the provided predicate. The number of steps taken by an
$n$-standard predicate is readily available by constructing its
corresponding decorated timed decision tree model. The model is
constructed using a distinguished free variable $q$ to denote a
point. The following lemma lets us reason about the number of steps
taken by a predicate between its initial application and its first
query, between subsequent queries, and between final query and answer
when $q$ is instantiated to $\Superpoint$.
\begin{lemma}\label{lem:inductive-lem-aux}
  Suppose $P$ is an $n$-standard predicate, $bs \in \Addr$ is a list
  of booleans, and for all $\chi \in \HClosure$ and
  $\kappa \in \Cont$. Let $q$ denote the distinguished free variable
  used to construct the decorated timed decision tree $\tree$ of $P$.
  \begin{enumerate}
  \item If $|bs| = 0$ then
    \begin{derivation}
      &\cek{pred~q \mid \initial(P)[q \mapsto q] \mid (\nil, \chi) \cons \kappa}\\
       \stepsto&^{\steps(\tree)(\nil)}\\
      &\cek{z~V \mid \env[q \mapsto q] \mid (\sigma, \chi) \cons \kappa}
    \end{derivation}
    where $z~V = \comp(\tree)(\nil)$, $\env = \envt(\tree)(\nil)$,
    $?k = \labs(\tree)(\nil)$, $\val{V}\env = k$, $\env(z) = q$, and
    $\sigma = \Pure(\tree)(\nil)$; implies
    \begin{derivation}
      &\cek{pred~(\Superpoint) \mid \initial(P) \mid (\nil, \chi) \cons \kappa}\\
       \stepsto&^{\steps(\tree)(\nil)}\\
      &\cek{z~V \mid \env[z \mapsto (\initial(P), \Superpoint)] \mid (\sigma, \chi) \cons \kappa}
    \end{derivation}

  \item If $|bs| < n - 1$ then for all $b \in \mathbb{B}$ and $W \in \ValCat$
    \begin{derivation}
      &\cek{\Return\;W \mid \descend{b}(bs,P) \mid (\sigma, \chi) \cons \kappa}\\
       \stepsto&^{\steps(\tree)(\snoc{bs}{b})}\\
      &\cek{z~V \mid \env[q \mapsto q] \mid (\sigma', \chi) \cons \kappa}
    \end{derivation}
    where $\val{W}(\descend{b}(bs,P)) = b$, $\sigma = \Pure(\tree)(bs)$, $z~V = \comp(\tree)(\snoc{bs}{b})$, $\env = \envt(\tree)(\snoc{bs}{b})$, $\env(z) = q$, $?k = \labs(\tree)(\snoc{bs}{b})$, $\val{V}\env = k$, and $\sigma' = \Pure(\tree)(\snoc{bs}{b})$; implies
    \begin{derivation}
      &\cek{\Return\;W \mid \descend{b}(bs,P) \mid (\sigma, \chi) \cons \kappa}\\
       \stepsto&^{\steps(\tree)(\snoc{bs}{b})}\\
      &\cek{z~V \mid \env[z \mapsto (\initial(P), \Superpoint)] \mid (\sigma', \chi) \cons \kappa}
    \end{derivation}

  \item If $|bs| = n - 1$ then for all $b \in \mathbb{B}$ and $W \in \ValCat$
    \begin{derivation}
      &\cek{\Return\;W \mid \descend{b}(bs,P) \mid (\sigma, \chi) \cons \kappa}\\
       \stepsto&^{\steps(\tree)(\snoc{bs}{b})}\\
      &\cek{\Return\;W' \mid \env[q \mapsto q] \mid (\nil, \chi) \cons \kappa}
    \end{derivation}
    where $\val{W}(\descend{b}(bs,P)) = b$, $\sigma = \Pure(\tree)(bs)$, $\Return\;W' = \comp(\tree)(\snoc{bs}{b})$, $\env = \envt(\tree)(\snoc{bs}{b})$, $\ans b' = \labs(\tree)(\snoc{bs}{b})$, and $\val{W'}\env = b'$; implies
    \begin{derivation}
      &\cek{\Return\;W \mid \descend{b}(bs,P) \mid (\sigma, \chi) \cons \kappa}\\
       \stepsto&^{\steps(\tree)(\snoc{bs}{b})}\\
      &\cek{\Return\;W' \mid \env \mid (\nil, \chi) \cons \kappa}
    \end{derivation}
  \end{enumerate}
\end{lemma}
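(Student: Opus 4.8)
The plan is to prove all three parts by a single \emph{substitution principle}: the machine computation along any edge of the decision tree is insensitive to whether the point argument is the abstract free variable $q$ (used in Definition~\ref{def:model-construction-extended} to build $\tr(P)$) or the concrete superposition point $\Superpoint$ (supplied by $\ECount$). The three parts correspond to the three kinds of edge --- the initial edge from the application of $pred$ to the point up to the first query (part~1), an intermediate edge between consecutive queries (part~2), and a final edge from the last query to an answer (part~3) --- but in each case the content is the same: along an edge the point is only ever \emph{propagated} as a value and never \emph{eliminated}, since eliminating (applying) the point is exactly what constitutes a query node, and query nodes terminate edges. Consequently, replacing $q$ by the closure $(\initial(P),\Superpoint)$ leaves the computation in lockstep, so the real computation reaches the corresponding configuration in precisely the $\steps(\tree)(\cdots)$ steps recorded by the model.

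The technical device is a replacement operation $\theta$ that substitutes the machine value $q$ by $(\initial(P),\Superpoint)$, extended as a congruence through environments, pure-continuation frames, continuation stacks, pairs, tagged values, and --- crucially --- resumptions, into whose captured pure continuations the point has flowed. First I would establish that $\theta$ commutes with $\stepsto$: whenever $\conf \stepsto \conf'$ by a transition that does not apply the point, the same rule gives $\theta(\conf) \stepsto \theta(\conf')$, and conversely. This is checked rule by rule. Because the point is function-typed it can be eliminated only by an application (rules $\mlab{App}$, $\mlab{Rec}$, $\mlab{Const}$, $\mlab{Resume}$), and along an edge no such elimination of the point occurs; for every other rule $\theta$ preserves the top-level shape of each inspected value (a pair remains a pair, an $\Inl$ remains an $\Inl$), so the same rule fires on both sides and advances the step counter identically. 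Each part then follows by iterating this single-step commutation along the model edge-computation, whose length is $\steps(\tree)(\cdots)$ by the $\mathsf{inc}$ clause of Definition~\ref{def:model-construction-extended}; part~1 begins with an $\mlab{App}$ step binding $P$'s formal parameter to the point, while parts~2 and~3 begin with a $\mlab{RetCont}$ step feeding the response $b$ back through the top frame of the resumption's captured pure continuation $\sigma = \Pure(\tree)(bs)$.

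The main obstacle is the treatment of resumptions under $\theta$. Since a resumption $(\sigma, \hclo(P))$ packages a pure continuation whose frame environments carry the point, the congruence must recurse into these captured structures, and one must verify that this recursion is well-founded and interacts correctly with the $\mlab{Resume}$ and $\mlab{Handle-Op}$ transitions. The second delicate point is the endpoint: in the model the machine \emph{halts} at $z\,V$ with $\val{z}{\env}=q$ (a bare variable value is not a closure, so no rule fires), whereas in reality $z$ is bound to $(\initial(P),\Superpoint)$ and an $\mlab{App}$ step \emph{would} fire next. Because the lemma stops exactly at this configuration --- before the would-be application --- both computations reach $z\,V$ after the same $\steps(\tree)(\cdots)$ steps, and the only residual difference is the binding of $z$, which is $q$ in the hypothesis and $(\initial(P),\Superpoint)$ in the conclusion, precisely as the two displayed target configurations assert. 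Finally, observing that $P$ invokes no operation along an edge (its only queries are applications of the point, which end the edge) shows that the ambient handler closure $\chi$ and continuation $\kappa$ are never touched and may be carried through unchanged as the stated parameters, completing the argument.
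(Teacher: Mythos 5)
Your proposal is correct and takes essentially the same route as the paper, which disposes of this lemma in a single line (``by unfolding Definition~\ref{def:model-construction-extended}''): the only sensible reading of that one-liner is precisely your lockstep argument, namely that each $\mathsf{inc}$-step in the construction of the decorated tree corresponds to one machine transition that never eliminates the point, so the computation with $q$ replaced by $(\initial(P),\Superpoint)$ tracks it step for step. Your explicit treatment of the substitution's passage through resumptions, of the stuck endpoint at $z\,V$, and of the fact that no operation is invoked mid-edge (which follows from the tree being defined along the edge) supplies exactly the detail the paper elides.
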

\begin{proof}
  By unfolding Definition~\ref{def:model-construction-extended}.
\end{proof}

Let $\ctrl : \Conf \pto \ValCat$ denote a partial function that hoists
a value out of a given machine configuration, that is
\[
  \ctrl(\cek{M \mid \env \mid \kappa})
    \defas
    \begin{cases}
      \val{V}{\env}  & \text{if } M = \Return\; V \\
      \bot           & \text{otherwise}
    \end{cases}
\]

\paragraph{Notation}
For a given predicate $P$ we write $\hclo(P)^\Val$ to mean
$\hclo(P)^\Val = (\emptyset[pred \mapsto \val{P}\emptyset],
\HCount)^\Val = \HCount^\Val$, that is the projection of the success
clause of $\HCount$.

The following lemma performs most of the heavy lifting for the proof
of Theorem~\ref{thm:complexity-effectful-counting}.
\begin{lemma}\label{lem:inductive-bit-of-thm1}
  Suppose $P$ is an $n$-standard predicate, then for any list of
  booleans $bs \in \Addr$ such that $|bs| \leq n$
  \[
    \ba{@{}l}
      \arrive(bs, P) \reducesto^{T(bs, n)} \depart(bs, P),
    \ea
  \]
  and $\ctrl(\depart(bs, P)) \leq 2^{n - |bs|}$ with the
  function $T$ defined as
  \[
    T(bs, n) =
     \begin{cases}
       9*(2^{n - |bs|} - 1) + 2^{n - |bs| + 1} + \sum_{bs' \in \Addr}^{1 \leq |bs'| \leq n - |bs|}\steps(\tree)(bs \concat bs') & \text{if } |bs| < n\\
       2            & \text{if } |bs| = n
     \end{cases}
   \]

\end{lemma}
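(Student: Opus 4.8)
The plan is to prove the two conjuncts simultaneously by \emph{downward} induction on $|bs|$, equivalently by induction on the height $h \defas n - |bs|$ of the subtree rooted at $bs$. The base case $|bs| = n$ treats an answer node; the inductive step $|bs| = j < n$ treats a query node and appeals both to the induction hypothesis at $\snoc{bs}{\True}$ and $\snoc{bs}{\False}$ (which have length $j+1$) and to Lemma~\ref{lem:inductive-lem-aux} for the edge computations performed by $P$ itself. Throughout, I would alternate between \emph{intensional} reasoning (counting the concrete machine transitions contributed by $\HCount$) and \emph{extensional} reasoning (delegating the steps taken by the predicate to the decorated tree $\tree = \dt(P)$ via Lemma~\ref{lem:inductive-lem-aux}).

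For the base case, $\arrive(bs,P)$ is a configuration $\cek{\Return\;W \mid \env \mid (\nil, \hclo(P))\cons \residual(bs,P)}$ whose topmost resumption carries the empty pure continuation $\nil$. Hence \mlab{RetHandler} fires, transferring control to the success clause and binding $x$ to the leaf answer $b$; a single \mlab{CaseL}/\mlab{CaseR} step then evaluates the $\If$, yielding $\Return\;1$ or $\Return\;0$. This is exactly $\depart(bs,P)$, reached in $2 = T(bs,n)$ steps, and since $c(bs) \in \{0,1\}$ its control value is $\leq 1 = 2^{n-|bs|}$.

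For the inductive step I would trace the transitions explicitly. Starting from $\arrive(bs,P) = \cek{z\,V \mid \env \mid (\Pure(\tree)(bs),\hclo(P))\cons \residual(bs,P)}$ with $z$ bound to the generic point $\Superpoint$, rule \mlab{App} unfolds the thunk to $\Do\;\Branch\;\Unit$, \mlab{Handle-Op} reifies the topmost resumption and binds it to $r$ while entering the $\Branch$ clause, \mlab{Let} pushes the $x_\True$ frame (producing precisely $\purecont(\snoc{bs}{\True},P)$) and evaluates $r\,\True$, and \mlab{Resume} pushes the reified resumption back, leaving a $\Return\;\True$ configuration whose environment is $\descend{\True}(bs,P)$ and whose continuation is $(\Pure(\tree)(bs),\hclo(P))\cons\residual(\snoc{bs}{\True},P)$. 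Four transitions have elapsed. Lemma~\ref{lem:inductive-lem-aux} then carries this configuration, in $\steps(\tree)(\snoc{bs}{\True})$ steps, to $\arrive(\snoc{bs}{\True},P)$; the induction hypothesis supplies a further $T(\snoc{bs}{\True},n)$ steps to $\depart(\snoc{bs}{\True},P)$, a $\Return\;m_\True$ configuration with $m_\True = c(\snoc{bs}{\True}) \leq 2^{n-j-1}$. From there \mlab{RetCont}, \mlab{Let} and \mlab{Resume} (three transitions) launch the $\False$ branch symmetrically, and after its edge computation and recursive call I reach $\depart(\snoc{bs}{\False},P)$ with $m_\False = c(\snoc{bs}{\False})$; finally \mlab{RetCont} binds $x_\False$ and \mlab{Const} evaluates $x_\True + x_\False$ (two transitions), producing $\Return\;(m_\True + m_\False)$. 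Since $c(bs) = c(\snoc{bs}{\True}) + c(\snoc{bs}{\False})$, this is $\depart(bs,P)$, and its control value is $m_\True + m_\False \leq 2\cdot 2^{n-j-1} = 2^{n-j}$, as required.

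It then remains to check the step count. The handler's own transitions total $4 + 3 + 2 = 9$ per query node, so writing $f(h)$ for the overhead obtained by stripping the $\steps(\tree)$ contributions from $T$, the recurrence $f(0) = 2$ and $f(h) = 9 + 2f(h-1)$ holds; its solution is $f(h) = 11\cdot 2^h - 9 = 9(2^h-1) + 2^{h+1}$, matching the stated $T$. The deferred predicate costs assemble correctly because the edge into each child ($bs' = [\True]$ or $[\False]$) together with that child's subtree sum (indices $2 \leq |bs'| \leq n-j$) reconstitute $\sum_{1\leq |bs'|\leq n-j}\steps(\tree)(bs\concat bs')$. The main obstacle is not conceptual but the bookkeeping: one must verify at each of the nine transitions that the environment and generalised continuation match the definitions of $\initial$, $\descend{b}$, $\ascend{b}$, $\purecont$ and $\residual$ exactly, so that the configuration reached after each recursive call is literally the $\arrive$/$\depart$ configuration the next rule expects. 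This is laborious but entirely mechanical, driven by unfolding Definition~\ref{def:model-construction-extended} and Lemma~\ref{lem:inductive-lem-aux}.
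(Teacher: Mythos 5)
Your proposal is correct and follows essentially the same route as the paper's proof: downward induction on $|bs|$, a $2$-step base case at answer nodes via \mlab{RetHandler} and \mlab{CaseL}/\mlab{CaseR}, and an inductive step that charges exactly $9$ handler transitions per query node (split $4+3+2$ around the two recursive descents), delegates the predicate's edge costs to Lemma~\ref{lem:inductive-lem-aux}, and solves the recurrence $f(0)=2$, $f(h)=9+2f(h-1)$ to recover $T$. The only detail you elide is that the paper splits the inductive step into the sub-cases $1+|bs|<n$ and $1+|bs|=n$ because $\arrive$ and $\depart$ take different syntactic forms at leaf children, but this is precisely the mechanical bookkeeping you flag and it does not affect the step count or the bound $\ctrl(\depart(bs,P)) \leq 2^{n-|bs|}$.
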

\begin{proof}
  By downward induction on $bs$.
  \begin{description}
  \item[Base step] We have that $|bs| = n$. Since the predicate is
    $n$-standard we further have that $n \geq 1$. We proceed by direct
    calculation.
    \begin{derivation}
      &\arrive(bs, P)\\
      =& \reason{definition of $\arrive$ when $n = |bs|$}\\
      &\cek{\Return\;W \mid \env \mid (\nil, \hclo(P)) \cons \residual(bs, P)}\\
      \where{\ba[t]{@{~}l}
        \Return\;W = \comp(\tree)(bs), \env = \envt(\tree)(bs), !b = \labs(\tree)(bs), \text{ and } \val{W}\env = b
        \ea}\\
      \stepsto& \reason{\mlab{RetHandler}, $\hclo(P)^\Val = \{\Val~x \mapsto \cdots\}$}\\
      &\cek{\If\;x\;\Then\;\Return\;1\;\Else\;\Return\;0 \mid \env'[x \mapsto \val{b}\env'] \mid \residual(bs, P)}\\
      \where{\env' = \hclo(P).1}
    \end{derivation}
    The value $b$ can assume either of two values. We consider first
    the case $b = \True$.
    \begin{derivation}
      =& \reason{assumption $b = \True$, definition of $\val{-}$ (2 value steps)}\\
      &\cek{\If\;x\;\Then\;\Return\;1\;\Else\;\Return\;0 \mid \env'[x \mapsto \True] \mid \residual(bs, P)}\\
      \stepsto& \reason{\mlab{Case-inl} (and $\log|\env'[x \mapsto \True]| = 1$ environment operations)}\\
      &\cek{\Return\;1 \mid \env'[x \mapsto \True] \mid \residual(bs, P)}\\
      =& \reason{definition of $\depart$ when $n = |bs|$}\\
      &\depart(bs, P)
    \end{derivation}
    We have that $\ctrl(\depart(bs, P)) = 1 \leq 2^0 = 2^{n - |bs|}$.
    Next, we consider the case when $b = \False$.
    \begin{derivation}
      =& \reason{assumption $b = \False$, definition of $\val{-}$ (2 value steps)}\\
      &\cek{\If\;x\;\Then\;\Return\;1\;\Else\;\Return\;0 \mid \env'[x \mapsto \False] \mid \residual(bs, P)}\\
      \stepsto& \reason{\mlab{Case-Inl} (and $\log|\env'[x \mapsto \False]| = 1$ environment operations)}\\
      &\cek{\Return\;0 \mid \env'[x \mapsto \False] \mid \residual(bs, P)}\\
      =& \reason{definition of $\depart$ when $n = |bs|$}\\
      &\depart(bs, P)
    \end{derivation}
    Again, we have that $\ctrl(\depart(bs, P)) = 0 \leq 2^0 = 2^{n - |bs|}$.
    \paragraph{Step analysis}
    In either case, the machine uses exactly 2 transitions. Thus we get that
    \[
        2 = T(bs, n), \quad \text{when } |bs| = n
    \]
  \item[Inductive step] The induction hypothesis states that for all
    $b \in \mathbb{B}$ and $|bs| < n$
    \[
      \arrive(\snoc{bs}{b}, P) \reducesto^{T(\snoc{bs}{b}, n)} \depart(\snoc{bs}{b}, P),
    \]
    such that $\ctrl(\depart(\snoc{bs}{b}, P)) \leq 2^{n - |\snoc{bs}{b}|}$.
    We proceed by direct calculation.
    \begin{derivation}
      &\arrive(bs, P)\\
      =& \reason{definition of $\arrive$ when $n < |bs|$}\\
      &\cek{z~V \mid \env \mid (\sigma, \hclo(P)) \cons \residual(bs, P)}\\
      \where{\ba[t]{@{~}l}
           z~V = \comp(\tree)(bs), \env = \dec{env}(\tree)(bs)[z \mapsto (\initial(P), \Superpoint)],\\
          ?k = \labs(\tree)(bs), \val{V}\env = k, \text{ and } \sigma = \Pure(\tree)(bs)
        \ea}\\
      \stepsto& \reason{\mlab{App}}\\
      &\cek{\Do\;\Branch~\Unit \mid \env'[\_ \mapsto k] \mid (\sigma, \hclo(P)) \cons \residual(bs, P)}\\
      \where{\env' = \initial(P)}\\
      \stepsto& \reason{\mlab{Handle-Op}, $\hclo(P)^{\Branch} = \{\Branch~\Unit~r \mapsto \cdots\}$}\\
      &\left\langle
        \ba[m]{@{}l}
          \Let\;x_{\True} \revto r~\True \;\In\\
          \Let\;x_{\False} \revto r~\False \;\In\\
          x_\True + x_\False
        \ea
       \mid \env[r \mapsto \val{(\sigma, \hclo(P))}\env] \mid \residual(bs, P)
     \right\rangle\\
     \where{\env = \initial(P)}\\
     =& \reason{definition of $\val{-}$ (1 value step)}\\
     &\left\langle
        \ba[m]{@{}l}
          \Let\;x_{\True} \revto r~\True \;\In\\
          \Let\;x_{\False} \revto r~\False \;\In\\
          x_\True + x_\False
        \ea
       \mid \env' \mid \residual(bs, P)
     \right\rangle\\
     \where{\env' = \env[r \mapsto (\sigma, \hclo(P))]}\\
    \stepsto& \reason{\mlab{Let}, definition of $\residual$}\\
    &\cek{r~\True \mid \env' \mid \residual(\snoc{bs}{\True} bs, P)}\\
    \stepsto& \reason{\mlab{Resume}, $\val{r}\env' = (\sigma, \hclo(P))$ ($\log|\env'| = 1$ environment operations)}\\
    &\cek{\Return\;\True \mid \env' \mid (\sigma, \hclo(P)) \cons \residual(\snoc{bs}{\True}, P)}
    \end{derivation}
    We now use Lemma~\ref{lem:inductive-lem-aux} to reason about the
    progress of the predicate computation $\sigma$. There are two
    cases consider, either $1 + |bs| < n$ or
    $1 + |bs| = n$.
    \begin{description}
    \item[Case] $1 + |bs| < n$. We obtain the following internal node
      configuration.
    \begin{derivation}
      \stepsto&^{\steps(\tree)(\snoc{bs}{\True})} \reason{by Lemma~\ref{lem:inductive-lem-aux}}\\
      &\cek{z~V \mid \env'' \mid (\sigma', \hclo(P)) \cons \residual(\snoc{bs}{\True}, P)}\\
      \where{\ba[t]{@{~}l}
        z~V = \comp(\tree)(bs), \env'' = \dec{env}(\tree)(\snoc{bs}{\True})[z \mapsto (\initial(P), \Superpoint)],\\
        ?k = \labs(\tree)(\snoc{bs}{\True}), \val{V}\env'' = k, \text{ and } \sigma' = \Pure(\tree)(\snoc{bs}{\True})
       \ea}\\
      =& \reason{definition of $\arrive$ when $1 + |bs| < n$}\\
      &\arrive(\snoc{bs}{\True}, P)\\
      \stepsto&^{T(\snoc{bs}{\True}, n)} \reason{induction hypothesis}\\
      &\depart(\snoc{bs}{\True}, P)\\
      =& \reason{definition of $\depart$ when $1 + |bs| < n$}\\
      &\cek{\Return\;i \mid \env \mid \residual(\snoc{bs}{\True}, P)}\\
      \where{i = c(\snoc{\snoc{bs}{\True}}{\True}) + c(\snoc{\snoc{bs}{\True}}{\False}) \text{ and } \env = \ascend{\False}(\snoc{bs}{\True}, P)}\\
      =& \reason{definition of $\residual$ and $\purecont$}\\
      &\cek{\Return\;i \mid \env \mid [((\env', x_\True, \Let\;x_\False\revto r~\False\;\In\;x_\True+x_\False) \cons \purecont(bs, P), \chi_{id})]}\\
      \where{\env' = \descend{\True}(bs, P)}\\
      \stepsto& \reason{\mlab{RetCont}}\\
      &\cek{\Let\;x_\False\revto r~\False\;\In\;x_\True+x_\False \mid \env'' \mid [(\purecont(bs, P), \chi_{id})]}\\
      \where{\env'' = \env'[x_\True \mapsto \val{i}\env']}\\
      \stepsto& \reason{\mlab{Let}}\\
      &\cek{r~\False \mid \env'' \mid [((\env'', x_\False, x_\True + x_\False) \cons \purecont(bs, P), \chi_{id})]}
    \end{derivation}
    \begin{derivation}
      =& \reason{definition of $\purecont$ and $\residual$}\\
      &\cek{r~\False \mid \env'' \mid \residual(\snoc{bs}{\False}, P)}\\
      \stepsto& \reason{\mlab{Resume}}\\
      &\cek{\Return\;\False \mid \env'' \mid (\sigma, \hclo(P)) \cons \residual(\snoc{bs}{\False}, P)}\\
      \where{\sigma = \Pure(\tree)(bs)}\\
      \stepsto&^{\steps(\tree)(\snoc{bs}{\False})} \reason{by Lemma~\ref{lem:inductive-lem-aux}}\\
      &\cek{z~V \mid \env \mid (\sigma, \hclo(P)) \cons \residual(\snoc{bs}{\False}, P)}\\
      \where{\ba[t]{@{~}l}
        z~V = \comp(\tree)(bs), \env = \envt(\tree)(\snoc{bs}{\False})[q \mapsto (\initial(P), \Superpoint)],\\
        ?k = \labs(\tree)(\snoc{bs}{\False}), \val{V}\env = k, \text{ and } \sigma = \Pure(\tree)(\snoc{bs}{\False})
        \ea}\\
      =& \reason{definition of $\arrive$ when $1 + |bs| < n$}\\
      &\arrive(\snoc{bs}{\False}, P)\\
      \stepsto&^{T(\snoc{bs}{\False}, n)} \reason{induction hypothesis}\\
      &\depart(\snoc{bs}{\False}, P)\\
      =& \reason{definition of $\depart$ when $1 + |bs| < n$}\\
      &\cek{\Return\;j \mid \env \mid \residual(\snoc{bs}{\False}, P)}\\
      \where{j = c(\snoc{\snoc{bs}{\False}}{\True}) + c(\snoc{\snoc{bs}{\False}}{\False}) \text{ and } \env = \ascend{\False}(\snoc{bs}{\False}, P)}\\
      =& \reason{definition of $\residual$ and $\purecont$}\\
      &\cek{\Return\;j \mid \env \mid [((\env'', x_\False, x_\True + x_\False) \cons \purecont(bs, P), \chi_{id})]}\\
      \stepsto& \reason{\mlab{RetCont}}\\
      &\cek{x_\True + x_\False \mid \env''[x_\False \mapsto \val{j}\env''] \mid \residual(bs, P)}\\
      \stepsto& \reason{\mlab{Plus}}\\
      &\cek{\Return\;m \mid \env''[x_\False \mapsto \val{j}\env''] \mid \residual(bs, P)}\\
      &\text{where}
        \ba[t]{@{~}l@{~}l}
      m &= c(\snoc{\snoc{bs}{\True}}{\True}) + c(\snoc{\snoc{bs}{\True}}{\False}) \\
        &+ c(\snoc{\snoc{bs}{\False}}{\True}) + c(\snoc{\snoc{bs}{\False}}{\False})\\
        &= c(\snoc{bs}{\True}) + c(\snoc{bs}{\False}) = c(bs) \leq 2^{n - |bs|}
        \ea
      \\
    =& \reason{definition of $\depart$ when $|bs| < n$}\\
    &\depart(bs, P)
  \end{derivation}
  \paragraph{Step analysis} The total number of machine steps is
  given by
    \begin{derivation}
      &9 + \steps(\tree)(\snoc{bs}{\True}) + T(\snoc{bs}{\True}, n) + \steps(\tree)(\snoc{bs}{\False}) + T(\snoc{bs}{\False}, n)\\
      =& \reason{reorder}\\
      &9 + T(\snoc{bs}{\True}, n) + \steps(\tree)(\snoc{bs}{\False}) + \steps(\tree)(\snoc{bs}{\True}) + \steps(\tree)(\snoc{bs}{\False})\\
      =& \reason{definition of $T$}\\
      & 9 + 9*(2^{n - |\snoc{bs}{\True}|} - 1) + 9*(2^{n - |\snoc{bs}{\False}|} - 1) + 2^{n - |\snoc{bs}{\True}| + 1} + 2^{n - |\snoc{bs}{\False}| + 1}\\
      &+ \displaystyle\sum_{bs' \in \Addr}^{1 \leq |bs'| \leq n - |\snoc{bs}{\True}|}\steps(\tree)(\snoc{bs}{\True} \concat bs')
      + \displaystyle\sum_{bs' \in \Addr}^{1 \leq |bs'| \leq n - |\snoc{bs}{\False}|}\steps(\tree)(\snoc{bs}{\False} \concat bs')\\
      &+\steps(\tree)(\snoc{bs}{\True}) + \steps(\tree)(\snoc{bs}{\False})\\
      =& \reason{simplify}\\
       & 9 + 9*(2^{n - |\snoc{bs}{\True}|} - 1) + 9*(2^{n - |\snoc{bs}{\False}|} - 1) + 2^{n - |bs| + 1}\\
      &+ \displaystyle\sum_{bs' \in \Addr}^{1 \leq |bs'| \leq n - |\snoc{bs}{\True}|}\steps(\tree)(\snoc{bs}{\True} \concat bs')
      + \displaystyle\sum_{bs' \in \Addr}^{1 \leq |bs'| \leq n - |\snoc{bs}{\False}|}\steps(\tree)(\snoc{bs}{\False} \concat bs')\\
      &+\steps(\tree)(\snoc{bs}{\True}) + \steps(\tree)(\snoc{bs}{\False})\\
    \end{derivation}
    \begin{derivation}
      =& \reason{merge sums}\\
       & 9 + 9*(2^{n - |\snoc{bs}{\True}|} - 1) + 9*(2^{n - |\snoc{bs}{\False}|} - 1) + 2^{n - |bs| + 1}\\
       &+ \left(\displaystyle\sum_{bs' \in \Addr}^{2 \leq |bs'| \leq n - |bs|}\steps(\tree)(bs \concat bs')\right)
       + \steps(\tree)(\snoc{bs}{\True}) + \steps(\tree)(\snoc{bs}{\False})\\
      =& \reason{rewrite binary sum}\\
      &9 + 9*(2^{n - |\snoc{bs}{\True}|} - 1) + 9*(2^{n - |\snoc{bs}{\False}|} - 1) + 2^{n - |bs| + 1}\\
      &+ \displaystyle\sum_{bs' \in \Addr}^{2 \leq |bs'| \leq n - |bs|}\steps(\tree)(bs \concat bs')
      + \displaystyle\sum_{bs' \in \Addr}^{1 \leq |bs'| \leq 1}\steps(\tree)(bs \concat bs')\\
      =& \reason{merge sums}\\
      &9 + 9*(2^{n - |\snoc{bs}{\True}|} - 1) + 9*(2^{n - |\snoc{bs}{\True}|} - 1) + 2^{n - |bs| + 1} + \displaystyle\sum_{bs' \in \Addr}^{1 \leq |bs'| \leq n - |bs|}\hspace{-0.5cm}\steps(\tree)(bs \concat bs')\\
      =& \reason{factoring}\\
       &9 + 2*9*(2^{n - |bs| - 1} - 1) + 2^{n - |bs| + 1} + \displaystyle\sum_{bs' \in \Addr}^{1 \leq |bs'| \leq n - |bs|}\steps(\tree)(bs \concat bs')\\
      =& \reason{distribute}\\
       &9 + 9*(2^{n - |bs|} - 2) + 2^{n - |bs| + 1} + \displaystyle\sum_{bs' \in \Addr}^{1 \leq |bs'| \leq n - |bs|}\steps(\tree)(bs \concat bs')\\
      =& \reason{distribute}\\
       &9 + 9*2^{n - |bs|} - 18 + 2^{n - |bs| + 1} + \displaystyle\sum_{bs' \in \Addr}^{1 \leq |bs'| \leq n - |bs|}\steps(\tree)(bs \concat bs')\\
      =& \reason{simplify}\\
       &9*2^{n - |bs|} - 9 + 2^{n - |bs| + 1} + \displaystyle\sum_{bs' \in \Addr}^{1 \leq |bs'| \leq n - |bs|}\steps(\tree)(bs \concat bs')\\
      =& \reason{factoring}\\
       &9*(2^{n - |bs|} - 1) + 2^{n - |bs| + 1} + \displaystyle\sum_{bs' \in \Addr}^{1 \leq |bs'| \leq n - |bs|}\steps(\tree)(bs \concat bs')\\
      =& \reason{definition of $T$}\\
      &T(bs, n)
    \end{derivation}
    \item[Case] $1 + |bs| = n$. We obtain the following
      configuration.
    \begin{derivation}
    \stepsto&^{\steps(\tree)(\snoc{bs}{\True})} \reason{by Lemma~\ref{lem:inductive-lem-aux}}\\
      &\cek{\Return\;W \mid \env'' \mid (\nil, \hclo(P)) \cons \residual(\snoc{bs}{\True}, P)}\\
      \where{\ba[t]{@{~}l}
        \Return\;W = \comp(\tree)(\snoc{s}{\True}), !b = \labs(\tree)(\snoc{bs}{\True}),\\
        \env'' = \envt(\tree)(\snoc{bs}{\True}), \text{ and } \val{W}\env'' = b
        \ea}\\
      =& \reason{definition of $\arrive$ when $1 + |bs| = n$}\\
      &\arrive(\snoc{bs}{\True}, P)\\
      \stepsto&^{T(\snoc{bs}{\True}, n)} \reason{induction hypothesis}\\
      &\depart(\snoc{bs}{\True}, P)\\
      =& \reason{definition of $\depart$ when $1 + |bs| = n$}\\
      &\cek{\Return\;i \mid \env \mid \residual(\snoc{bs}{\True}, P)}\\
      \where{i = c(\snoc{bs}{\True}) \leq 2^{n - |\snoc{bs}{\True}|} = 1 \text{ and } \env = \initial(P)}\\
      =& \reason{definition of $\residual$ and $\purecont$}\\
      &\cek{\Return\;i \mid \env \mid [((\env',x_\True,\Let\;x_\False\revto r~\False\;\In\;x_\True+x_\False) \cons \purecont(bs, P), \chi_{id})]}\\
      \stepsto& \reason{\mlab{RetCont}}\\
      &\cek{\Let\;x_\False \revto r~\False\;\In\;x_\True + x_\False \mid \env'[x_\True \mapsto \val{i}\env'] \mid [(\purecont(bs, P), \chi_{id})]}\\
      =& \reason{definition of $\val{-}$ (1 value step)}\\
      &\cek{\Let\;x_\False \revto r~\False\;\In\;x_\True + x_\False \mid \env'' \mid [(\purecont(bs, P), \chi_{id})]}\\
      \where{\env'' = \env'[x_\True \mapsto i]}\\
      \stepsto& \reason{\mlab{Let}, definition of $\residual$}\\
      &\cek{r~\False \mid \env'' \mid \residual(\snoc{bs}{\False}, P)}\\
      \stepsto& \reason{\mlab{Resume}}\\
      &\cek{\Return\;\False \mid \env'' \mid (\sigma, \hclo(P)) \cons \residual(\snoc{bs}{\False}, P)}\\
      \where{\sigma = \Pure(\tree)(bs)}\\
      \stepsto&^{\steps(\tree)(\snoc{bs}{\False})} \reason{by Lemma~\ref{lem:inductive-lem-aux}}\\
      &\cek{\Return\;W \mid \env \mid (\nil, \hclo(P)) \cons \residual(\snoc{bs}{\False}, P)}\\
      \where{\ba[t]{@{~}l}
        \Return\;W = \comp(\tree)(\snoc{bs}{\False}), !b = \labs(\tree)(\snoc{bs}{\False}),\\
        \env = \envt(\tree)(\snoc{bs}{\False}), \text{ and } \val{W}\env = b
        \ea}\\
      =& \reason{definition of $\arrive$ when $1 + |bs| = n$}\\
      &\arrive(\snoc{bs}{\False}, P)\\
      \stepsto&^{T(\snoc{bs}{\False}, n)} \reason{induction hypothesis}\\
      &\depart(\snoc{bs}{\False}, P)\\
      =& \reason{definition of $\depart$ when $1 + |bs| = n$}\\
      &\cek{\Return\;j \mid \env \mid \residual(\snoc{bs}{\False}, P)}\\
      \where{j = c(\snoc{bs}{\False}) \leq 2^{n - |\snoc{bs}{\False}|} = 1 \text{ and } \env = \initial(P)}\\
     =& \reason{definition of $\residual$ and $\purecont$}\\
      &\cek{\Return\;j \mid \env \mid [((\env',x_\False,x_\True+x_\False) \cons \purecont(bs, P), \chi_{id})]}\\
      \where{\env' = \descend{\False}(bs, P)}\\
      \stepsto& \reason{\mlab{RetCont}}\\
      &\cek{x_\True + x_\False \mid \env'' \mid [(\purecont(bs, P), \chi_{id})]}\\
      \where{\env'' = \env'[x_\False \mapsto \val{j}\env'] = \env'[x_\False \mapsto j]}\\
      \stepsto& \reason{\mlab{Plus}}\\
      &\cek{\Return\;m \mid \env'' \mid [(\purecont(bs, P), \chi_{id})]}\\
      \where{m = c(\snoc{bs}{\True}) + c(\snoc{bs}{\False}) \leq 2^{n - |bs|}}\\
      =& \reason{definition of $\residual$ and $\depart$ when $|bs| < n$}\\
      &\depart(bs, P)
    \end{derivation}
    \paragraph{Step analysis} The total number of machine steps
    is given by
    \begin{derivation}
      &9 + \steps(\tree)(\snoc{bs}{\True}) + T(\snoc{bs}{\True}, n) + \steps(\tree)(\snoc{bs}{\False}) + T(\snoc{bs}{\False}, n)\\
      =& \reason{reorder}\\
      &9 + T(\snoc{bs}{\True}, n) + T(\snoc{bs}{\False}, n) + \steps(\tree)(\snoc{bs}{\True}) + \steps(\tree)(\snoc{bs}{\False})\\
      =& \reason{definition of $T$ when $|bs| + 1 = n$}\\
      &9 + 2 + 2 + \steps(\tree)(\snoc{bs}{\True}) + \steps(\tree)(\snoc{bs}{\False})\\
      =& \reason{simplify}\\
      &9 + 2^2 + \steps(\tree)(\snoc{bs}{\True}) + \steps(\tree)(\snoc{bs}{\False})\\
      =& \reason{rewrite $2 = n - |bs| + 1$}\\
      &9 + 2^{n - |bs| + 1} + \steps(\tree)(\snoc{bs}{\True}) + \steps(\tree)(\snoc{bs}{\False})\\
      =& \reason{multiply by $1$}\\
      &9*(2^{n - |bs|} - 1) + 2^{n - |bs| + 1} + \steps(\tree)(\snoc{bs}{\True}) + \steps(\tree)(\snoc{bs}{\False})\\
      =& \reason{rewrite binary sum}\\
      &9*(2^{n - |bs|} - 1) + 2^{n - |bs|} + \displaystyle\sum_{bs' \in \Addr}^{1 \leq |bs'| \leq n - |bs|} \steps(\tree)(bs \concat bs')\\
      =& \reason{definition of $T$}\\
      &T(bs, n)
    \end{derivation}
    \end{description}
  \end{description}
\end{proof}

The following theorem is a copy of
Theorem~\ref{thm:complexity-effectful-counting}.

\begin{theorem}\label{thm:complexity-effectful-counting-copy}
  For all $n > 0$ and any $n$-standard predicate $P$ it holds that
  \begin{enumerate}
  \item The program $\ECount$ is a generic count program
  \item The runtime complexity of $\ECount~P$ is given by the following formula:
  \[
    \displaystyle\sum_{bs \in \Addr}^{|bs| \leq n} \steps(\tr(P))(bs) + \BigO(2^n)
  \]
\end{enumerate}
\end{theorem}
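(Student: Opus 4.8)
The plan is to read the theorem off from Lemma~\ref{lem:inductive-bit-of-thm1} instantiated at the empty address $bs = \nil$, adding only a constant amount of bookkeeping at the two ends of the run. First I would connect the initial machine state for $\ECount~P$ to the first query-node configuration $\arrive(\nil, P)$. Unfolding $\ECount$ by $\beta$-reduction, taking the $\mlab{Handle}$ transition that pushes the handler closure $\hclo(P)$, and applying $pred$ to the generic point $\Superpoint$ together cost $\BigO(1)$ transitions and leave the machine in the configuration $\cek{pred~(\Superpoint) \mid \initial(P) \mid (\nil, \hclo(P)) \cons \residual(\nil, P)}$, noting that $\residual(\nil, P)$ is exactly the identity continuation $\kappa_0$. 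From here, clause~(1) of the auxiliary Lemma~\ref{lem:inductive-lem-aux} shows that $P$ runs for precisely $\steps(\tr(P))(\nil)$ further steps before its first query, arriving at $\arrive(\nil, P)$.

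Next I would invoke Lemma~\ref{lem:inductive-bit-of-thm1} at $bs = \nil$, which gives $\arrive(\nil, P) \reducesto^{T(\nil, n)} \depart(\nil, P)$ with $\depart(\nil, P) = \cek{\Return\;m \mid \gamma \mid \residual(\nil, P)}$, where $m = c(\nil) = \sharp\,\val{P}{}$ and $\residual(\nil, P) = \kappa_0$. A final $\mlab{RetHandler}$ step through the identity handler closure then yields the machine value $m$ and a final state. This settles part~(1): the value returned by $\ECount~P$ is $\sharp\,\val{P}{}$, so $\ECount$ correctly counts $P$.

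For part~(2) I would add up the contributions: the $\BigO(1)$ setup, the $\steps(\tr(P))(\nil)$ steps to reach $\arrive(\nil, P)$, the $T(\nil, n)$ steps from $\arrive$ to $\depart$, and the $\BigO(1)$ teardown. Substituting
\[
  T(\nil, n) = 9\,(2^{n}-1) + 2^{\,n+1} + \sum_{\substack{bs' \in \Addr \\ 1 \le |bs'| \le n}} \steps(\tr(P))(bs')
\]
and absorbing the isolated $\steps(\tr(P))(\nil)$ term merges the bound into the full sum $\sum_{bs \in \Addr_n} \steps(\tr(P))(bs)$ over all addresses of length $\le n$; the remaining $9(2^n-1) + 2^{n+1} + \BigO(1)$ is $\BigO(2^n)$, which is the stated formula.

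The real work, and the main obstacle, is Lemma~\ref{lem:inductive-bit-of-thm1} itself, which I would prove by \emph{downward} induction on $|bs|$ from $|bs| = n$ down to $|bs| = 0$. The argument alternates an \emph{intensional} count of the transitions generated by the $\Branch$- and $\Val$-clauses of $\HCount$ (giving $9$ steps per internal node and $2$ steps per leaf, hence the $9(2^{n-|bs|}-1)$ and $2\cdot 2^{n-|bs|}$ summands) with an \emph{extensional}, oracular appeal to Lemma~\ref{lem:inductive-lem-aux} for the steps $P$ performs between consecutive queries. The delicate part is the configuration bookkeeping: tracking how the environment evolves as the resumption is invoked first with $\True$ and then with $\False$ (the $\descend{b}$ and $\ascend{b}$ families), how the residual pure-continuation stack ($\purecont$, $\residual$) is pushed and popped, and verifying that the subcounts $c(\snoc{bs}{\True})$ and $c(\snoc{bs}{\False})$ combine via $\mlab{Plus}$ into $c(bs) \le 2^{n-|bs|}$. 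Once the per-node arithmetic is fixed, summing the recurrence over the complete binary tree of depth $n$ is routine.
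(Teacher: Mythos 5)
Your proposal is correct and takes essentially the same route as the paper's own proof: a constant number of setup transitions (unfolding $\ECount$, the $\mlab{Handle}$ push, identifying $\residual(\nil,P)$ with the identity continuation) to reach $\arrive(\nil,P)$ via Lemma~\ref{lem:inductive-lem-aux}, the main appeal to Lemma~\ref{lem:inductive-bit-of-thm1} at the empty address giving $T(\nil,n)$ steps and the count $m = c(\nil) = \sharp\val{P}$, and a final $\mlab{RetHandler}$ step, with the isolated $\steps(\tr(P))(\nil)$ term absorbed into the full sum. Your sketch of the key lemma also matches the paper's: downward induction on $|bs|$, the same per-node constants ($9$ for the $\Branch$ clause, $2$ for the $\Val$ clause), and the same alternation between explicit step counting and oracular appeals to the decorated decision tree.
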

\begin{proof}
  The proof begins by direct calculation.
\begin{derivation}
  &\cek{\ECount\,P \mid \emptyset \mid [(\nil, \chi_{id})]} \\
  =& \reason{definition of $\residual$}\\
  & \cek{\ECount\,P \mid \emptyset \mid \residual(\nil, P)} \\
  \stepsto& \reason{\mlab{App}, $\val{\ECount}\emptyset = (\emptyset, \lambda pred. \cdots)$}\\
  & \cek{\Handle\;pred~(\Superpoint)\;\With\;\HCount \mid \env \mid \residual(\nil, P)}\\
  \where{\env = \initial(P)}\\
  \stepsto& \reason{\mlab{Handle}}\\
  & \cek{pred~(\Superpoint) \mid \env \mid (\nil, (\env, \HCount)) \cons \residual(\nil, P)}\\
  =& \reason{definition of $\hclo$}\\
  &\cek{pred~(\Superpoint) \mid \env \mid (\nil, \hclo(P)) \cons \residual(\nil, P)}\\
  \stepsto&^{\steps(\tree)(\nil)} \reason{by Lemma~\ref{lem:inductive-lem-aux}}\\
  &\cek{z~V \mid \env' \mid (\sigma, \hclo(P)) \cons \residual(\nil, P)}\\
  \where{\ba[t]{@{~}l}
      z~V = \comp(\tree)(bs), \env' = \envt(\tree)(\nil)[q \mapsto (\initial(P), \Superpoint)],\\
      ?k = \labs(\tree)(\nil), \val{V}\env' = k, \text{ and } \sigma = \Pure(\tree)(\nil)
      \ea}\\
  \end{derivation}
  \begin{derivation}
  =& \reason{definition of $\arrive$}\\
  &\arrive(\nil, P)\\
  \stepsto&^{T(\nil, n)} \reason{by Lemma~\ref{lem:inductive-bit-of-thm1}}\\
  &\depart(\nil, P)\\
  =& \reason{definition of $\depart$}\\
  &\cek{\Return\;m \mid \env \mid \residual(\nil, P)}\\
  \where{\env = \initial(P) \text{ and }  m = c(\nil) \leq 2^{n - |bs|} = 2^n}\\
  =& \reason{definition of $\residual$}\\
  &\cek{\Return\;m \mid \env \mid [(\nil, \chi_{id})]}\\
  \stepsto& \reason{\mlab{Handle-Ret}, $H_{id}^{\dec{val}} = \{\Val~x \mapsto \Return\;x\}$}\\
  &\cek{\Return\;x \mid \emptyset[x \mapsto m] \mid \nil}
\end{derivation}
\paragraph{Analysis}
The machine yields the value $m$.
By Lemma~\ref{lem:inductive-bit-of-thm1} it follows that
$m \leq 2^{n - |bs|} = 2^{n - |\nil|} = 2^n$. Furthermore, the total
number of transitions used were
\begin{derivation}
    &3 + \steps(\tree)(\nil) + T(\nil, n)\\
    =& \reason{definition of $T$}\\
    &3 + \steps(\tree)(\nil) + 9*2^n + 2^{n + 1} + \displaystyle\sum_{bs' \in \mathbb{B}^{\ast}}^{1 \leq |bs'| \leq n}\steps(\tree)(bs')\\
    =& \reason{simplify}\\
    &3 + \steps(\tree)(\nil) + 9*2^n + 2^{n + 1} + \displaystyle\sum_{bs' \in \mathbb{B}^{\ast}}^{1 \leq |bs'| \leq n}\steps(\tree)(bs')\\
    =& \reason{reorder}\\
    &3 + \left(\displaystyle\sum_{bs' \in \mathbb{B}^{\ast}}^{1 \leq |bs'| \leq n}\steps(\tree)(bs')\right) + \steps(\tree)(\nil) + 9*2^n + 2^{n + 1}\\
    =& \reason{rewrite as unary sum}\\
    &3 + \left(\displaystyle\sum_{bs' \in \mathbb{B}^{\ast}}^{1 \leq |bs'| \leq n}\steps(\tree)(bs') + \displaystyle\sum_{bs' \in \Addr}^{0 \leq |bs'| \leq 0}\steps(\tree)(bs')\right) + 9*2^n + 2^{n + 1}\\
    =& \reason{merge sums}\\
    &3 + \left(\displaystyle\sum_{bs' \in \mathbb{B}^{\ast}}^{0 \leq |bs'| \leq n}\steps(\tree)(bs')\right) + 9*2^n + 2^{n + 1}\\
    =& \reason{definition of $\BigO$}\\
    &\left(\displaystyle\sum_{bs' \in \mathbb{B}^{\ast}}^{0 \leq |bs'| \leq n}\steps(\tree)(bs')\right) + \BigO(2^{n})
\end{derivation}
\end{proof}
\fi

\extendedsection{Berger Count}
\label{sec:berger-count}
\ifdefined\EXTENDED
Here we present the $\BergerCount$ program alluded to in
Section~\ref{sec:pure-counting}, in order to fill out our overall
picture of the relationship between language expressivity and
potential program efficiency.

Berger's original program~\citep{Berger90} introduced a remarkable
search operator for predicates on \emph{infinite} streams of booleans,
and has played an important role in higher-order computability
theory~\citep{LongleyN15}.  What we wish to highlight here is that if
one applies the algorithm to predicates on \emph{finite} boolean
vectors, the resulting program, though no longer interesting from a
computability perspective, still holds some interest from a complexity
standpoint: indeed, it yields what seems to be the best available
implementation of generic count within a PCF-style `functional'
language (provided one accepts the use of a primitive for call-by-need
evaluation).

We give the gist of an adaptation of Berger's search algorithm on
finite spaces.
{\small
\[
    \bl
    \bestshot_n: \Predicate_n \to \Point_n\\
    \bestshot_n~pred \defas \bestshot'_n~pred~\nil \medskip\\

    \bestshot'_n : \Predicate_n \to \List_\Bool \to \Point_n\\
    \bestshot'_n~pred~start \defas
     \ba[t]{@{}l}
       \Let\; f \revto \dec{memoise}~(\lambda\Unit. \bestshot''_n~pred~start)\; \In\\
       \Return\;(\lambda i. \If\; i < |start| \;\Then\; start.i \;\Else\; (f~\Unit).i)
     \ea\medskip\\

    \bestshot''_n : \Predicate_n \to \List_\Bool \to \List_\Bool\\
    \bestshot''_n~pred~start \defas
     \ba[t]{@{}l}
       \If\; |start| = n \;\Then\; \Return\; start\\
       \Else\;
       \ba[t]{@{}l}
         \Let\; f \revto \bestshot'_n~pred~(\dec{append}~start~[\True])\;\In\\
         \If\; pred~f \;\Then\; \Return\; [f~0,\dots,f~(n-1)]\\
         \Else\; \bestshot''_n~pred~(\dec{append}~start~[\False])
       \ea
     \ea
   \el
\]}%
Given any $n$-standard predicate $P$ the function $\bestshot_n$
returns a point satisfying $P$ if one exists, or dummy point
$\lambda i.\False$ if not. It is implemented by via two mutually
recursive auxiliary functions whose workings are admittedly hard to
elucidate in a few words. The function $\bestshot'_n$ is a
generalisation of $\bestshot_n$ that makes a best shot at finding a
point $\pi$ satisfying given predicate and matching some specified
list $start$ in some initial segment of its components
$[\pi(0),\dots,\pi(i-1)]$. It works `lazily', drawing its values from
$start$ wherever possible, and performing an actual search only when
required. This actual search is undertaken by $\bestshot''_n$, which
proceeds by first searching for a solution that extends the specified
list with true; but if no such solution is forthcoming, it settles for
false as the next component of the point being constructed. The whole
procedure relies on a subtle combination of laziness, recursion and
implicit nesting of calls to the provided predicate which means that
the search is self-pruning in regions of the binary tree where the
predicate only demands some initial segment $q~0$,\dots,$q~(i-1)$ of
its argument $q$.

The above program makes use of an operation
{\small
\[
    \dec{memoise} : (\One \to \dec{List}~\Bool) \to (\One \to \dec{List}~\Bool)
\]}%
which transforms a given thunk into an equivalent `memoised' version,
i.e. one that caches its value after its first invocation and
immediately returns this value on all subsequent invocations. Such an
operation may readily be implemented in $\BCalcS$, or alternatively
may simply be added as a primitive in its own right.
The latter has the advantage that it preserves the purely `functional'
character of the language, in the sense that every program is
observationally equivalent to a $\BCalc$ program, namely the one
obtained by replacing $\dec{memoise}$ by the identity.

We now show how the above idea may be exploited to yield a generic
count program (this development appears to be new).
{\small\[
  \bl
    \BergerCount_n : \Predicate_n \to \Nat\\
    \BergerCount_n~pred \defas \Count'_n~pred~[]~0 \medskip\\

    \Count'_n : \Predicate_n \to \List_\Bool \to \Nat \to \Nat\\
    \Count'_n~pred~start~acc \defas
      \ba[t]{@{}l}
         \If\; |start| = n\; \Then\; acc +
                   (\If\; pred\,(\lambda i. start.i) \;\Then\;\Return\;1 \;\Else\;\Return\;0)\\
         \Else\;
           \ba[t]{@{}l}
             \Let\; f \revto \bestshot'_n~pred~start\; \In\\
             \If\; pred~f \;\Then\; \Count''_n~start~ [f~0,\dots,f~(n-1)]~acc \;\Else\; \Return\;acc
           \ea
     \ea \medskip\\

    \Count''_n : \Predicate_n \to \List_\Bool \to \List_\Bool \to \Nat \to \Nat\\
    \Count''_n~pred~start~leftmost~acc \defas
      \ba[t]{@{}l}
         \If\; |start| = n \;\Then\; acc+1\\
         \Else\;
           \ba[t]{@{}l}
             \Let\; b \revto leftmost.|start|\; \In\\
             \Let\; acc' \revto \Count''_n~pred~(\dec{append}~start~[b])~leftmost~acc\; \In\\
             \If\; b \; \Then\; \Count'_n~pred~(\dec{append}~start~[\False])~acc' \;\Else~\Return\;acc'
           \ea
      \ea
  \el
\]}%
Again, $\BergerCount_n$ is implemented by means of two mutually
recursive auxiliary functions.  The function $\Count'_n$ counts the
solutions to the provided predicate $pred$ that start with the
specified list of booleans, adding their number to a previously
accumulated total given by $acc$.  The function $\Count''_n$ does the
same thing, but exploiting the knowledge that a best shot at the
`leftmost' solution to $P$ within this subtree has already been
computed.  (We are visualising $n$-points as forming a binary tree
with $\True$ to the left of $\False$ at each fork.) Thus, $\Count''_n$
will not re-examine the portion of the subtree to the left of this
candidate solution, but rather will start at this solution and work
rightward.

This gives rise to an $n$-count program that can work efficiently on
predicates that tend to `fail fast': more specifically, predicates $P$
that inspect the components of their argument $q$ in order $q~0$,
$q~1$, $q~2$, \dots, and which are frequently able to return $\False$
after inspecting just a small number of these components. Generalising
our program from binary to $k$-ary branching trees, we see that the
$n$-queens problem provides a typical example: most points in the
space can be seen \emph{not} to be solutions by inspecting just the
first few components. Our experimental results in
Section~\ref{sec:experiments} attest to the viability of this approach
and its overwhelming superiority over the \naive functional method.

By contrast, the above program is \emph{not} able to exploit parts of
the tree where our predicate `succeeds fast', i.e.\ returns $\True$
after seeing just a few components.  Unlike the effectful count
program of Section~\ref{sec:effectful-counting}, which may sometimes
add $2^{n-d}$ to the count in a single step, the Berger approach can
only count solutions one at a time. Thus, supposing $P$ is an
$n$-standard predicate the evaluation of $\Count_n~P$ that returns a
natural number $c$ must take time $\Omega(c)$. These observations
informally indicate the likely extent of the efficiency gap between
effectful and purely functional computation when it comes to
non-$n$-standard predicates.
\fi

\end{document}